\newtheorem{theorem}{Theorem} 
\newtheorem{corollary}{Corollary} 
\newtheorem{lemma}{Lemma} 
\newtheorem{proposition}{Proposition} 
\newtheorem{definition}{Definition}
\DeclareMathOperator{\E}{E} 
\DeclareMathOperator{\Gr}{Gr} 
\DeclareMathOperator{\Pj}{Pj} 
\DeclareMathOperator{\rank}{\text{rk}} 
\DeclareMathOperator{\Fr}{Fr}
\newcommand{\floor}[1]{\lfloor #1 \rfloor} 
\newcommand{\etal}{\textit{et al.}} 
\newcommand{\subs}{\text{SS}} 
\newcommand{\csub}{\text{C-SS}} 
\newcommand{\ffield}{\mathbb{F}}
\newcommand{\lspan}[1]{\langle #1 \rangle} 
\newcommand{\tr}{\top} 
\newcommand{\gcos}[2]{\begin{pmatrix} #1 \\ 
#2 \end{pmatrix}_q} 
\newcommand{\gco}[2]{\left({}^{#1}_{#2}\right)_q} 
\newcommand{\cmat}[2]{\chi^{#1}_{#2}} 
\newcommand{\cmatt}[2]{\zeta^{#1}_{#2}} 
\newcommand{\bigO}{\mathcal{O}}
\newcommand{\bX}{\mathbf{X}} 
\newcommand{\bY}{\mathbf{Y}} 
\newcommand{\bH}{\mathbf{H}} 
\newcommand{\bzero}{\mathbf{0}}
\newcommand{\loc}{\text{LOC}} 
\newcommand{\lc}{\mathcal{C}} 
\newcommand{\ow}{o.w.}
\newcounter{locequator}[section] 
\renewcommand{\thelocequator}{\alph{locequator}} 
\newcommand{\locequa}[1]{\stepcounter{locequator}\stackrel{\text{(\alph{locequator})}}{#1}} 
\begin{document} 
\title{On Linear Operator Channels over Finite Fields}

\author{\IEEEauthorblockN{Shenghao Yang\IEEEauthorrefmark{1}, Siu-Wai
    Ho\IEEEauthorrefmark{2}, Jin Meng\IEEEauthorrefmark{3},
    En-hui Yang\IEEEauthorrefmark{3} and Raymond W. Yeung\IEEEauthorrefmark{1}} \\
  \IEEEauthorblockA{\IEEEauthorrefmark{1} Institute of Network Coding
    and Department of Information
    Engineering\\ The Chinese University of Hong Kong, Hong Kong SAR \\ Email: \{shyang,whyeung\}@ie.cuhk.edu.hk}  \\
  \IEEEauthorblockA{\IEEEauthorrefmark{2} Institute for
    Telecommunications Research \\
    University of South Australia, Australia \\Email:  siuwai.ho@unisa.edu.au }\\
  \IEEEauthorblockA{ \IEEEauthorrefmark{3}
    Department of Electrical and Computer Engineering \\
    University of Waterloo, Canada \\Emails: \{j4meng,
    ehyang\}@uwaterloo.ca} }

\maketitle

\begin{abstract} 
  Motivated by linear network coding, communication channels perform 
  linear operation over finite fields, namely linear operator channels 
  (LOCs), are studied in this paper. For such a channel, its output 
  vector is a linear transform of its input vector, and the 
  transformation matrix is randomly and independently generated. The 
  transformation matrix is assumed to remain constant for every $T$ 
  input vectors and to be unknown to both the transmitter and the 
  receiver. There are NO constraints on the distribution of 
  the transformation matrix and the field size. 
 
  Specifically, the optimality of subspace coding over LOCs is 
  investigated. A lower bound on the maximum achievable rate 
  of subspace coding is obtained and  
  it is shown to be tight for some cases. The maximum achievable rate 
  of constant-dimensional subspace coding is characterized and  
  the loss of rate incurred by using constant-dimensional 
  subspace coding is insignificant. 
 
  The maximum achievable rate of channel training is close to the 
  lower bound on the maximum achievable rate of subspace coding.  Two 
  coding approaches based on channel training are proposed and their 
  performances are evaluated. Our first approach makes use of 
  rank-metric codes and its optimality depends on the existence of 
  maximum rank distance codes. Our second approach applies linear 
  coding and it can achieve the maximum achievable rate of channel 
  training. Our code designs require only the knowledge of the 
  expectation of the rank of the transformation matrix. The second 
  scheme can also be realized ratelessly without a priori knowledge of 
  the channel statistics. 
\end{abstract} 
 
\begin{IEEEkeywords} 
   linear operator channel, linear network coding, 
   subspace coding, channel training 
\end{IEEEkeywords}

\section{Introduction}

Let $\ffield$ be a finite field with $q$ elements.  A \emph{linear 
  operator channel (LOC)} with input $X\in \ffield^{T\times M}$ and 
output $Y\in \ffield^{T\times N}$ is given by 
\begin{equation}\label{eq:model} 
  Y = XH, 
\end{equation} 
where $H$ is called the transformation matrix.

Our motivation to study LOCs comes from linear network 
coding, a research topic that has drawn extensive interest in the 
past ten years. Linear network coding is a network transmission 
technique that can achieve the capacity of multicasting in 
communication networks \cite{flow,linear,alg,poly}.  
Different from routing, 
linear network coding allows network nodes to relay new packets 
generated by linear combinations.  The point-to-point transmission of 
a network employing linear network coding is given by a LOC, where $H$ is the model of network transfer matrix and depends on the network topology \cite{linear,alg}.

A recent 
research topic where LOCs have found applications is the deterministic 
model of wireless networks \cite{avest07a,avest07b}.  This 
deterministic model provides a good approximation of certain wireless 
network behaviors and has shown its impact on the study of wireless 
networks. When employing linear operations in intermediate network 
nodes, the point-to-point transmission of the deterministic model of 
wireless networks is also given by a LOC \cite{ebrahimi09, 
  ebrahimi10}.

Even though some aspects of LOCs have been well studied in linear network coding, our understanding of LOCs is far from enough.  
In fact, the only case that LOCs are completely understood is that $H$ has a constant rank $M$.  
However, $H$ in general can have rank deficiency (i.e., $\rank(H)<M$) due to the change of network topology, link failure, packet loss, and so on. Even without these network related dynamics, $H$ has a random rank when random linear network coding is applied where new packets are generated by random linear combinations.  
Towards more sophisticated applications of linear network coding, a systematic study of LOCs  becomes necessary.  In this work, we study the information theoretic communication limits of 
LOCs with a general distribution of $H$ and discuss coding 
for LOCs.

\subsection{Some Related Works} 
 
We review some works of linear network coding that related to our discussions.

When both the transmitter and the receiver know the  
instances of $H$, the 
transmission through a LOC is called the \emph{coherent transmission}. 
For a network with fixed and known topology, linear network codes can be designed 
deterministically in polynomial time \cite{poly}. The transmission through such a network is usually 
assumed to be coherent.  For the coherent transmission, the rank of $H$ 
determines the capability of information transmission and it is bounded by the maximum flow form the transmitter  to the receiver \cite{linear, alg, avest07a,avest07b}. 
 
In communication networks where  the network topology is dynamic and/or 
unknown, e.g., wireless communication networks, deterministic design of network coding is difficult to realize. Random linear network coding is an efficient approach to apply 
network coding in such communication networks \cite{ho06j, gkant05,dimakis07, 
  fragouli08, xiao09}.  The transformation matrix of a communication network employing random linear network coding, called a \emph{random linear coding network 
  (RLCN)}, is a random matrix and its instances are assumed to be unknown in both the transmitter and the receiver. Such a kind of transmission is referred to as the 
\emph{noncoherent transmission}.  The existing works on the noncoherent transmission of RLCN considers several special distributions of $H$.

In various models and applications of random linear network coding 
\cite{ho06j,montanari07, silva08c, chachu07,Katti08}, $H$ is assumed 
to be an invertible square matrix\footnote{More generally, the assumption is that $H$ has rank $M$, which implies $N\geq M$.}. This assumption is based on the 
fact that when $H$ is a square matrix, i.e., $M=N$, it is full rank with high 
probability if i) $M$ is less than or equal to the 
maximum flow from the transmitter to the receiver, and ii) the field 
size for network coding is sufficiently large comparing with the 
number of network nodes \cite{ho06j, balli07}. However, random linear network coding with small 
finite fields is attractive for low computing complexity. For example, wireless sensor networks is 
characterized by large network size and limited computing capability 
of network nodes. Using large finite field operations in sensors may not be a good choice. Moreover, the maximum flow varies due to the dynamic of wireless networks. For these reasons, full rank transformation matrices cannot be assumed in many applications.

 K\"otter and Kschischang \cite{koetter08j} introduced a model of 
 random linear network coding, called K\"otter-Kschischang operator 
 channel (or KK operator channel), that takes vector spaces as input 
 and output, and commits fixed dimension erasures and additive 
 errors. Their model considers a special kind of rank-deficiency of $H$ 
 that gives fixed dimension erasures, defined as the difference of the 
 dimension of the output and input vector spaces. 
They introduced \emph{subspace 
  coding} for random linear network coding that can be used to correct 
erasures, defined as the rank difference between the output and input 
matrices, as well as additive errors \cite{koetter08j}.  Silva \etal\ \cite{silva08j} 
constructed (unit-block) subspace codes using rank-metric codes 
\cite{gabidulin85}, called \emph{unit-block lifted rank-metric codes} 
here, which are nearly optimal in terms of achieving a Singleton type 
bound of (unit-block) subspace codes \cite{koetter08j}.  The coding 
scheme proposed by Ho \etal \cite{ho06j} for random linear network 
coding is a special case of unit-length lifted rank-metric codes for 
the transmission without erasures and errors.

Jafari \etal \cite{siav08, jafari09} studied $H$ containing uniformly i.i.d. components---such a 
matrix is called a \emph{purely random matrix}.  However, there is no rigorous 
justification of why purely random matrices can 
reflect the properties of general random linear network 
coding. Moreover, the problem-specific techniques used to analyze 
purely matrices are difficult to be extended to the general 
cases.

\subsection{Summary of Our Work} 
 
In this paper, we study LOCs without any constraints on the 
distribution of $H$.  The purely 
random transformation matrix and the invertible transformation matrix 
are special cases in our problem.  We allow the transformation matrix has 
arbitrary rank and contains correlated 
components. 
We do not assume large finite fields to guarantee that the rank of $H$ 
is full rank with high probability.  
We mainly consider the noncoherent transmission of LOCs by assuming the instances of $H$ is unknown in both the transmitter and the receiver. 
 
Our results can be applied to (random) linear 
network coding in both wireless and wireline networks without 
constraints on the network topology and the field size, as long as the 
input and output of the network can be modelled by a LOC.  For 
example, link failures and packets losses, which do not change the 
linear relation between the input and output, can be taken into 
consideration.  But the network transformation can also suffers from 
random errors and malicious modifications, for which we have to model 
the network transformation as 
\begin{equation*} 
  Y = X H + Z, 
\end{equation*} 
and there is no equivalent way to model it as a LOC.  
We do not consider nonzero $Z$ as discussed in \cite{koetter08j, montanari07, silva08c}.

Our results are summarized as follows. 
 
We generalize the concept of subspace coding in \cite{koetter08j} to multiple usages of a LOC and study its achievable rates. 
Let $\bar C$ be the capacity of a LOC and let $\bar C_{\subs}$ 
be the maximum achievable rate of subspace coding for a 
LOC. We obtain that $(1-M/T)\E[\rank(H)] + \epsilon(T,q) \leq \bar 
C_{\subs} \leq \bar C \leq\E[\rank(H)]$, where $\E[\rank(H)]$ is the 
expectation of the rank of $H$ and $0<\epsilon(T,q) < 1.8/(T\log_2 
q)$. 
Moreover, we show that $\bar C_{\subs}=\bar C$ for 
\emph{uniform} LOCs, a class of LOCs that includes the purely random 
transformation matrix and the invertible transformation matrix studied 
in \cite{silva08c, siav08, jafari09}.

An unknown transformation matrix is \emph{regular} if 
its rank can take any value from zero to $M$.  A LOC is \emph{regular} 
if its transformation matrix is regular.  For {regular} LOCs with 
sufficiently large $T$, we prove that the lower bound on $\bar C_{\subs}$ is tight, and 
$\bar C_{\subs}$ is achieved by the $M$-dimensional subspace coding.  For 
example, a purely random $H$ with $M\leq N$ is uniform and 
regular. Thus $M$-dimensional subspace coding achieves its capacity 
when $T$ is sufficiently large.

Moreover, $\bar C_{\subs}$ can be well approximated by subspace codes using 
subspaces with the same dimension, called \emph{constant-dimensional subspace codes}.   
Let $\bar C_{\csub}$ be the 
maximum achievable rate of constant-dimensional subspace 
coding.  We show that $\bar C_{\subs} - \bar C_{\csub}< (\log_2 
\min\{M,N\})/(T\log_2q)$, which is much smaller than $\bar 
C_{\subs}$ for practical channel parameters. 
For general LOCs, we find the optimal dimension $r^*$ 
such that there exists an $r^*$-dimensional subspace code achieving 
$\bar C_{\csub}$.  Taking the LOCs with an invertible 
$H$ as an example, $M$ is the optimal dimension when $T\geq 2M+1$.

Channel training is a coding scheme for LOCs that uses parts of its 
input matrix to recover the instance of $H$.  The maximum achievable 
rate of using channel training $\bar C_{\text{CT}}$ is 
$(1-M/T)\E[\rank(H)]$, which is very close to the lower bound of $\bar 
C_{\subs}$.  We further proposed extended channel training codes to 
reducing the overhead of channel training codes. We give upper and 
lower bounds on the maximum achievable rate of extended channel training 
codes and show the gap between bounds is small. 
 
The coding scheme proposed by Ho \etal \cite{ho06j} and the unit-block 
lifted rank-metric codes proposed by Silva \etal \cite{silva08j} fall 
in the class of channel training.  We show that unit-block lifted 
rank-metric codes can achieve $\bar C_{\text{CT}}$ only when $H$ has a 
constant rank.  If $H$ have an arbitrary rank, the maximum 
achievable rate of unit-block lifted rank-metric codes is demonstrated 
to be far from $\bar C_{\text{CT}}$ for certain rank distribution of 
$H$.

To achieve $\bar C_{\text{CT}}$, we consider two coding schemes.  In 
the first scheme, we extend the method of Silva \etal\ \cite{silva08j} 
to construct codes for LOCs by multiple uses of the channel. The 
constructed code is called \emph{lifted rank-metric code}.  The 
optimality of lifted rank-metric codes, in the sense of achieving 
$\bar C_{\text{CT}}$, depends on the existence of the 
maximum-rank-distance (MRD) codes in classical algebraic coding 
theory, which was first studied in \cite{gabidulin85}.  Specifically, 
we show that if $T\gg M$, lifted rank-metric codes can approximately 
approach $\bar C_{\text{CT}}$. Otherwise, since the existence of MRD 
codes is unclear, it is uncertain if lifted rank-metric codes can 
achieve $\bar C_{\text{CT}}$.  Exsiting decoding algorithms of 
rank-metric codes can be applied to lifted rank-metric codes. The 
decoding complexity is given by $\bigO(n^2)$ field operations in 
$\ffield$, where $n$ is the block length of the codes.

We further propose a class of codes called \emph{lifted linear matrix 
  codes}, which can achieve $\bar C_{\text{CT}}$ for all $T\geq M$. 
We show that with probability more than half, a randomly choosen 
generator matrix gives good performance. We obain the error exponent 
of decoding lifted linear matrix codes.  The decoding of a lifted 
linear matrix code has complexity given by $\bigO(n^3)$ field 
operations when applying Gaussian elimination.  Lifted linear matrix 
codes can be realized ratelessly if the channel has a neglectable rate of 
feedback. 
 
Both lifted rank-metric codes and lifted linear matrix codes are 
universal in the sense that i) only the knowledge of $\E[\rank(H)]$ is 
required to design codes and ii) a code has similar performance for 
all LOCs with the same $\E[\rank(H)]$. Furthermore, rateless lifted linear 
matrix codes do not require any priori knowledge of channel statistics.

\subsection{Organization} 
 
This paper also provides a general framework to study LOCs.  Some 
notations and mathematical results that are used in our discussion, 
including some counting problems related to projective spaces, are 
introduced in \S\ref{sec:pre}.  Self-contained proofs of these 
counting problems are given in Appendix~\ref{sec:count}. In 
\S\ref{sec:mul}, linear operator channels are formally defined, and 
coherent and noncoherent transmission of LOCs are discussed.  In 
\S\ref{sec:ct} we give the maximum achievable rate of a noncoherent 
transmission scheme: channel training and study the bounds on the 
maximum achievable rate of extended channel training.  In 
\S\ref{sec:symm}, we reveal an intrinsic symmetric property of LOCs 
that holds for any distribution of the transformation matrix. These symmetric 
properties can help to determine the capacity-achieving input 
distributions of LOCs. In \S\ref{sec:subspace} and 
\S\ref{sec:input} we study subspace coding.  From \S\ref{sec:codes} to 
\S\ref{sec:linear}, two coding approaches for LOCs are introduced. 
The last section contains the concluding remarks.

\section{Preliminaries} 
\label{sec:pre} 
 
Let $\ffield$ 
be the finite field with $q$ elements, $\ffield^t$ be the 
$t$-dimensional vector space over $\ffield$, and 
$\ffield^{t\times m}$ be the set of all $t\times m$ matrices 
over $\ffield$. For a matrix $\bX$, let $\rank(\bX)$ be its 
rank, let $\bX^\tr$ be its transpose, and let $\lspan{\bX}$ 
be its column space, the subspace spanned by the column 
vectors of $\bX$. Similarly, the row space of $\bX$ is 
denoted by $\lspan{\bX^\tr}$. If $V$ is a subspace of $U$, 
we write $V\leq U$.

The \emph{projective space} $\Pj(\ffield^t)$ is the collection of 
all subspaces of $\ffield^t$. 
Let $\Pj(m,\ffield^t)$ be the subset of $\Pj(\ffield^t)$ 
that contains all the subspaces with dimension less than or 
equal to $m$. 
This paper involves some counting problems in projective 
space, which are discussed in Appendix~\ref{sec:count}. 
Let $\Fr(\ffield^{m\times r})$ be the set of full rank matrices 
in $\ffield^{m\times r}$. Define 
\begin{equation}\label{eq:111} 
\cmat{m}{r}=\left\{\begin{array}{ll} 
    (q^m-1)(q^m-q)\cdots(q^m-q^{r-1}) & 
    r>0 \\ 1 & r=0 \end{array} \right. 
\end{equation} 
for $r\leq m$. 
By {Lemma~\ref{lm:NumFullRankM}}, $|\Fr(\ffield^{m\times r})| 
= \cmat{m}{r}$. 
Define 
\begin{equation}\label{eq:speaker} 
  \cmatt{m}{r} = \cmat{m}{r} q^{-mr}. 
\end{equation} 
Since the number of $m\times r$ matrices is $q^{mr}$, 
$\cmatt{m}{r}$ can be regarded as the probability that a 
randomly chosen $m\times r$ matrix is full rank 
(ref. Lemma~\ref{lm:rk}). 
The \emph{Grassmannian} 
$\Gr(r,\ffield^t)$ is the set of all $r$-dimensional 
subspaces of $\ffield^t$. Thus $\Pj(m,\ffield^t) = \bigcup_{r\leq 
  m}\Gr(r,\ffield^t)$.  
The \emph{Gaussian binomials} are defined as 
\begin{equation*}  
  \gcos{m}{r} = \frac{\cmat{m}{r}}{\cmat{r}{r}}. 
\end{equation*}  
By {Lemma~\ref{lemma:g9ds}, $\gco{t}{r} = |\Gr(r,\ffield^t)|$.  
Let  
\begin{equation} \label{eq:23048fa} 
 \cmat{m,n}{r}= \frac{\cmat{m}{r}\cmat{n}{r}}{\cmat{r}{r}}, 
\end{equation} 
which is the number of $m\times n$ matrices with rank $r$ 
(see {Lemma~\ref{lm:NumMGivenRank}}).

For a discrete random variable (RV) $X$, we use $p_X$ to denote its 
probability mass function (PMF).  
Let $X$ and $Y$ be RVs over discrete alphabets $\mathcal{X}$ 
and $\mathcal{Y}$, respectively. We write a transition 
probability (matrix) from $\mathcal{X}$ to $\mathcal{Y}$ as $P_{Y|X}(\bX|\bY)$, 
$\bX\in\mathcal{X}$ and $\bY\in \mathcal{Y}$.  
{When the context is clear, we may omit the subscript of 
  $p_X$ and $P_{Y|X}$ to simplify the notations.}

\section{Linear Operator Channels} 
\label{sec:mul}

\subsection{Formulations} 
 
We first introduce a vector formulation of LOCs which reveals more 
details than the one given in \eqref{eq:model}.  Let $T$, $M$ and $N$ 
be nonnegative 
integers.  
A linear operator channel takes an $M$-dimensional vector as input and 
an $N$-dimensional vector as output.  The $i$th input $ x_i \in 
\ffield^{1\times M}$ and the $i$th output $ y_i\in \ffield^{1\times 
  N}$ are related by 
\begin{equation*} 
   y_i = x_i H_i, 
\end{equation*} 
where $H_i$ is a random matrix over $\ffield^{M\times N}$. 
We consider that $H_i$ keeps constant for $T$ consecutive input vectors, 
i.e., 
\begin{equation*} 
  H_{nT+1}=H_{nT+2}=\cdots=H_{nT+T},\quad n=0,1,2,\cdots; 
\end{equation*} 
and $H_{nT+1}$, $n=0,1,\cdots$, are independent and follow 
the same generic distribution of random variable $H$.   
By considering $T$ consecutive inputs/outputs as a matrix, we have the  
matrix formulation  
given in \eqref{eq:model}. 
Here, $T$ is called the \emph{inaction period}; $M\times N$ 
is called the \emph{dimension} of the 
LOC. 
A LOC with transformation matrix $H$ and inaction period $T$ is 
denoted by $\loc(H,T)$. Unless otherwise specified, we use the capital letters 
$M$ and $N$ for the dimension of $\loc(H,T)$. We will 
use the matrix formulation of the LOCs in this paper 
exclusively. When we talk about one use of 
$\loc(H,T)$, we mean the channel transmits one $T\times M$ matrix.

A communication network employing linear network coding can be 
modeled by a LOC. For example, when applying 
linear network coding in relay nodes, the transformation matrix of the 
network in Fig.~\ref{fig:ex} is 
\begin{equation}\label{eq:forme} 
  H = \begin{bmatrix}\alpha_1 & \alpha_2\beta_1 \\ 0 & \beta_2 \end{bmatrix}, 
\end{equation} 
in which $\alpha_1, \alpha_2, \beta_1$ and $\beta_2$ are linear 
combination coefficients taking value in $\ffield$. These coefficients 
can be fixed or random depending on the linear network coding 
approach. For a given network topology, the general formulation of the 
transformation matrix of linear network coding can be found in 
\cite{alg}.

For wireless networks without a centralized control, the transmission 
of network nodes is spontaneous and the network topology is 
dynamic. When employing random linear network coding, the inputs and 
the outputs of a wireless network still have linear relations 
\cite{chachu07}\footnote{We do not consider the encoding of packages 
  with errors}, but the formulation of the transformation matrix is 
difficult to obtain. The instances of the transformation matrix of 
random network coding is usually assumed to be unknown in both the 
transmitter and the receiver.  We will mainly discuss this kind of 
transmission of LOCs (see \S\ref{sec:nco}).

The transmission of random linear network coding is packetized. The 
source node organizes its data into $M$ packages, called a batch, and 
each of which contains $T$ symbols from $\ffield$. Network nodes 
perform linear network coding among the symbols in the same position 
of the packages in one batch, and the coding for all the positions are 
the same.  This packetized transmission matches our assumption that 
the transformation matrix keeps constant for $T$ consecutive input 
vectors. For this reason, the inaction period is also called the 
\emph{packet length}. 
The sink node try to collect $N$ (usually, $N\geq M$) packages in this 
batch to decode the original packages. This gives a physical meaning 
of the dimension of LOCs.

\begin{figure} 
	\centering  
	\tikzstyle{dot}=[circle,draw=gray!80,fill=gray!20,thick,inner sep=2pt,minimum size=10pt] 
	\begin{tikzpicture}[scale=1] 
	\node[dot] (s) at (0,0) {$s$}; 
	\node[dot] (t) at (0,-3) {$t$}; 
	\node[dot] (a) at (-0.9,-1) {$a$} edge [<-, bend 
        left=20] node[left,swap] {$x$} (s) edge [->, bend 
        right=25] node[left,swap] {$\alpha_1x$} (t); 
	\node[dot] (b) at (1,-2) {$b$} edge [<-, bend 
        right=35] node[right,swap] {$y$} (s) 
        edge [->, bend left=30] node[right,swap] 
        {$\alpha_2\beta_1x+\beta_2y$} (t) edge [<-, bend left=6] 
        node[right,swap] {$\alpha_2x$} (a); 
	\end{tikzpicture} 
        \caption{A directed network with the source node $s$ and the 
          sink node $t$. Each edge in the network is a communication 
          link that can transmit a symbol from $\ffield$ without error. 
          Node $a$ and $b$ are rely nodes that apply linear 
          network coding. The transmitted symbols through links are 
          labeled.} \label{fig:ex} 
\end{figure}
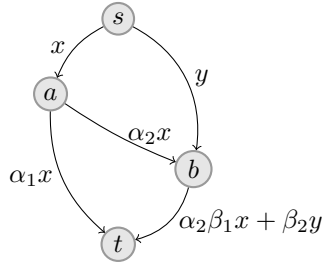

\subsection{Coherent Transmission of LOCs} 
\label{sec:co} 
 
We call the instances of the transformation matrix the \emph{channel 
  information (CI)}. The transmission with known CI at both the transmitter 
and the receiver is called \emph{coherent transmission}.  When the 
instance of $H$ is $\bH$, the maximum achievable rate of coherent 
transmission is $\max_{p_X} I(X;Y|H=\bH)$. Thus, the maximum 
achievable rate of coherent transmission (also called the 
\emph{coherent capacity}) is 
\begin{align*} 
  C_{\text{co}}(H,T) & = \sum_{\bH} p_H(\bH) \max_{p_X} 
  I(X;Y|H=\bH).
\end{align*} 
Unless otherwise specified, we use a base-$2$ logarithm in 
this paper so that $C_{\text{co}}(H,T)$ has a bit unit. 
 
Similar to coherent transmission, we can consider the transmission 
with CI only available at the receiver. We also assume that $X$ and $H$ are 
independent---this assumption is consistent with the transmitter does 
not know the instances of $H$. The maximum achievable 
rate of such transmission is 
\begin{equation*} 
  C_{\text{R-CI}}(H,T) = \max_{p_X} I(X;YH). 
\end{equation*} 
A random matrix is \emph{purely} random if it has uniformly 
independent components. 
 
\begin{proposition} \label{prop:1} $C_{\text{R-CI}}(H,T) = 
  C_{\text{co}}(H,T) = {T\log_2 q} \E[\rank(H)]$ and both capacities 
  are achieved by the purely random input distribution. 
\end{proposition}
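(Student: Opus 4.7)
The plan is to reduce both capacities to the computation of $\max_{p_X} H(X\bH)$ for each fixed instance $\bH$ of $H$, then show that the uniform (purely random) input distribution attains the maximum for every $\bH$ simultaneously, so that it is optimal both when the transmitter may adapt to $H$ and when it may not.

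First I would fix an instance $\bH$ with $r = \rank(\bH)$ and analyze the channel $Y = X\bH$, which is deterministic given $\bH$. The linear map $X \mapsto X\bH$ from $\ffield^{T\times M}$ to $\ffield^{T\times N}$ has image equal to the set of $T\times N$ matrices whose rows lie in the row space $\lspan{\bH^\tr}$; this image has cardinality $q^{Tr}$. Hence $H(Y \mid H = \bH) \le Tr\log_2 q$, giving the upper bound $\max_{p_X} I(X;Y \mid H = \bH) \le T\rank(\bH)\log_2 q$. For the matching lower bound, choose invertible $P\in\ffield^{M\times M}$ and $Q\in\ffield^{N\times N}$ putting $P\bH Q$ into Smith form $\bigl(\begin{smallmatrix}I_r & 0\\ 0 & 0\end{smallmatrix}\bigr)$; the substitutions $X \mapsto XP^{-1}$ and $Y \mapsto YQ^{-1}$ are bijections and transform the channel into one that simply outputs the first $r$ columns of the transformed input and discards the rest. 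Hence if $X$ is uniform on $\ffield^{T\times M}$ then $Y$ is uniform on the $q^{Tr}$-element image, so $H(Y \mid H = \bH) = Tr\log_2 q$.

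Averaging over $\bH$ yields $C_{\text{co}}(H,T) = T\log_2 q\,\E[\rank(H)]$, and the crucial observation is that the same input distribution, namely the purely random one on $\ffield^{T\times M}$, achieves the maximum for every $\bH$; thus the optimal input distribution in the coherent case can be chosen independent of $H$.

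For $C_{\text{R-CI}}(H,T)$, I would use independence of $X$ and $H$ to write
\begin{equation*}
  I(X;YH) = I(X;H) + I(X;Y \mid H) = I(X;Y\mid H) = \sum_{\bH} p_H(\bH)\, I(X;Y \mid H = \bH).
\end{equation*}
The upper bound $C_{\text{R-CI}}(H,T) \le C_{\text{co}}(H,T)$ is immediate, since in R-CI the same $p_X$ must be used for every $\bH$ while in the coherent case $p_X$ may depend on $\bH$. The lower bound follows by plugging in the purely random $p_X$ identified above, which attains $T\rank(\bH)\log_2 q$ for every $\bH$ simultaneously, giving $I(X;YH) = T\log_2 q\,\E[\rank(H)]$. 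Combining the two directions yields $C_{\text{R-CI}}(H,T) = C_{\text{co}}(H,T) = T\log_2 q\,\E[\rank(H)]$ with the purely random input achieving both. There is no serious obstacle here; the only point requiring a bit of care is the Smith-form reduction that justifies why uniform $X$ produces uniform $Y$ on the image for every $\bH$ at once, which is exactly what makes a single input distribution optimal for both capacities.
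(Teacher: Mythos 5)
Your proposal is correct and takes essentially the same route as the paper: both arguments reduce $C_{\text{co}}$ and $C_{\text{R-CI}}$ to bounding $H(Y\mid H=\bH)\leq T\rank(\bH)\log_2 q$ and then observe that the purely random input attains this bound for every instance $\bH$ simultaneously, which is exactly why the two capacities coincide. Your Smith-form coordinate change is just an alternative verification of achievability to the paper's row-by-row argument (each row $y_i=x_i\bH$ uniform on the row space of $\bH$ when the $x_i$ are uniform and independent); the only nit is that with $P\bH Q$ in Smith form the transformed output should be $YQ$ rather than $YQ^{-1}$, a harmless slip.
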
 
\begin{IEEEproof}
  We first consider the coherent transmission. We 
  know 
\begin{align*} 
  I(X;Y|H=\bH)  
  & = H(Y|H=\bH) - H(Y|X,H=\bH) \\ 
  & = H(Y|H=\bH). 
\end{align*} 
Let $x_i$ and $y_i$ be the $i$th rows of $X$ and $Y$, 
respectively. Since $y_i = x_i \bH$, i.e., $y_i$ is a vector in the subspace  
spanned by the row vectors of $\bH$, 
\begin{equation*} 
  H(y_i|H=\bH) \leq \log_2 q^{\rank(\bH)} = \rank(\bH) \log_2 q, 
\end{equation*} 
in which the equality is achieved when $x_i$ contains 
uniformly independent components.  
Hence,  
\begin{align*} 
  H(Y|H=\bH) & \leq  \sum_{i=1}^T H(y_i|H=\bH) \\ 
  & \leq \rank(\bH) T\log_2 q, 
\end{align*} 
where the first equality is achieved when $x_i$, $i=1,\cdots, T$, 
are independent. 
Therefore,  
\begin{align*} 
  C_{\text{co}}(H,T) & = \sum_{\bH} p_H(\bH) \max_{p_X} 
  I(X;Y|H=\bH) \\ & = \sum_{\bH}p_H(\bH) \rank(\bH) T\log_2 q \\ 
    & = \E[\rank(H)] T\log_2 q. 
\end{align*}

Now we consider the transmission 
with CI only available at the receiver.  We know 
\begin{align*} 
  I(X;YH) & = I(X;Y|H) + I(X;H) \\ 
  & = I(X;Y|H)\\ 
  & = H(Y|H) - H(Y|XH)\\ 
  & = H(Y|H), 
\end{align*} 
in which $I(X;H)=0$ since $X$ and $H$ are independent. 
Similar to the coherent case, 
\begin{align*} 
  H(Y|H) & = \sum_{\bH} p_H(\bH) H(Y|H=\bH) \\ 
  & \leq \sum_{i=1}^T \sum_{\bH} p_H(\bH) H(y_i|H=\bH) \\ 
  & \leq \sum_{i=1}^T \sum_{\bH} p_H(\bH) \rank(\bH) \log_2 q \\ 
  & = \E[\rank(H)] T\log_2 q, 
\end{align*} 
where the equality is achieved by $X$ with uniformly independent 
components.   
\end{IEEEproof}

\noindent\textbf{Remark:}  
Note that we do not assume $X$ and $H$ are independent for coherent 
transmission.  In fact for coherent transmission, the transmitter can 
use its knowledge of $\bH$ in encoding. Without lose of generality, we 
assume that the first $\rank(\bH)$ rows of $\bH$ are linearly 
independent.  So the transmitter can encode its information in an 
$M$-dimensional vector which contains only nonzero values in its first 
$\rank(\bH)$ components. The receiver can decode these nonzero values 
by solving a linear system of equations. Such scheme has transmission 
rate $\rank(\bH)T\log_2 q$, which achieves the coherent capacity. The 
coding that achieves $\E[\rank(H)] T\log_2 q$ with CI only 
available at the receiver, discussed in \S\ref{sec:codes}, is more 
involved.

\subsection{Noncoherent Transmission  of LOCs} 
\label{sec:nco} 
 
The transmission without the knowledge of CI in both the transmitter 
and the receiver is called \emph{noncoherent transmission}.  Same to 
the case with CI only available at the receiver, we assume that $H$ and $X$ 
are independent for noncoherent transmission. Under this assumption, 
\begin{align*} 
  p_{XY}(\bX,\bY) & = \Pr\{X=\bX,Y=\bY\}\\ 
  & = \Pr\{X=\bX,\bX H = \bY\} \\ 
  & = \Pr\{X=\bX\}\Pr\{\bX H =\bY\}. 
\end{align*} 
Thus, the transition probability $P_{Y|X}(\bY|\bX)$ of noncoherent 
transmission is given by 
\begin{equation} 
 P_{Y|X}(\bY|\bX)  = \Pr\{\bX H=\bY\}. \label{eq:trans} 
\end{equation} 
 
Unless otherwise specified, we consider noncoherent transmission of 
LOCs in the rest of this paper. For noncoherent transmission, a 
LOC is a \emph{discrete memoryless channel} (DMC).  The 
\emph{(noncoherent) capacity} of $\loc(H,T)$ is 
\begin{align*} 
 C(H,T)=\max_{p_X} I(X;Y). 
\end{align*}  
We also consider the normalized channel capacity 
\begin{equation*} 
  \bar C(H,T) = \frac{C(H,T)}{T\log_2 q}. 
\end{equation*} 
When we talk about the normalization of a coding rate, we mean 
to normalize by $T\log_2 q$. 
  
Achieving the capacity generally involves multiple usages of the channel.  
A block code for $\loc(H,T)$ is a subset of $(\ffield^{T\times M})^n$, 
the $n$th Cartesian power of $\ffield^{T\times M}$.  Here $n$ is the 
\emph{length} of the block code. Since the components of codewords are 
matrices, such a code is called a \emph{matrix code}.  The channel 
capacity of a LOC can be approached using a sequence of matrix codes 
with $n\rightarrow \infty$.

In the following subsection, we give the channel capacity and the 
capacity achieving inputs of three LOCs. These examples show that 
finding the channel capacity is problem-specific.  In general, it is 
not easy to accurately characterize the (noncoherent) capacity of a 
LOC. Since an input distribution contains $q^{TM}$ probability masses, 
a general method to maximize a mutual information, e.g., the 
Blahut-Arimoto algorithm, has time complexity $\bigO(q^{TM})$. 
Moreover, the distribution of the transformation matrix is difficult 
to obtain in applications like random linear network coding. 
Therefore, our goal is to find an efficient method to 
approach the capacity of LOCs with limited channel statistics.

\subsection{Examples of Linear Operator Channels} 
\label{sec:examples} 
 
\subsubsection{$Z$-Channel} 
 
A $Z$-channel with crossover probability $p$ is a 
binary-input-binary-output channel that flips the input bit 
$1$ with probability $p$, but maps input bit $0$ to $0$ with 
probability $1$. A $Z$-channel is a LOC over binary field 
given by 
\begin{equation*} 
  y = xh, 
\end{equation*} 
where $\Pr\{h=0\}=p$. 
We know the capacity of a $Z$-channel is $C(h,1)= \log_2 
\left(1+(1-p) p^{p/(1-p)}\right)$, which is achieved by 
\begin{equation*} 
  p_x(0)= \frac{1-p^{1/(1-p)}}{1+(1-p) p^{p/(1-p)}}. 
\end{equation*} 
 
\subsubsection{Full Rank Transformation Matrix} 
 
Let $H_{\text{full}}$ be the random matrix  
uniformly distributed over $\Fr(\ffield^{M\times N})$, 
$M\leq N$. For $\loc(H_{\text{full}},T)$, 
\begin{equation*} 
  P_{Y|X}(\bY|\bX) = \left\{ \begin{array}{ll} 
    \frac{1}{\cmat{N}{\rank(\bX)}} &  
    \lspan{\bY} = 
    \lspan{\bX} \\ 0 & \ow. \end{array}\right.  
\end{equation*} 
This kind of transformation matrix with $M=N$ has been 
studied in \cite{silva08c}. Let $M^*=\min\{M,T\}$. 
We know  
\begin{equation*} 
  C(H_{\text{full}},T) = \log_2 \sum_{r\leq M^*} \gco{T}{r}, 
\end{equation*} 
where $\sum_{r\leq M^*} \gco{T}{r} = |\Pj(M^*,\ffield^T)|$. 
Any input $p_X$ satisfying 
\begin{equation*} 
  p_{\lspan{X}}(U) = \frac{1}{|\Pj(M^*,\ffield^T)|},\quad 
  \forall U\in \Pj(M^*,\ffield^T), 
\end{equation*} 
is capacity achieving. In other words, this capacity is 
achieved by using each subspace in $\Pj(M^*,\ffield^T)$ 
uniformly.

\subsubsection{Purely Random Transformation Matrix} 
 
Recall that a random 
matrix is called \emph{purely random} if it contains 
uniformly independent components. Consider $\loc(H_{\text{pure}},T)$ 
with purely random $H_{\text{pure}}$ and dimension $M\times N$. We have 
\begin{equation*} 
  P_{Y|X}(\bY|\bX) = \left\{ \begin{array}{ll} 
    q^{-N\rank(\bX)} &  
    \lspan{\bY}\subseteq 
    \lspan{\bX} \\ 0 & \ow. \end{array}\right.  
\end{equation*} 
Such channels were studied in \cite{siav08, jafari09}, where 
the capacity formulas, involving big-O notations, are 
obtained for different cases.  
We will give an exact formula 
for sufficiently large $T$, 
\begin{equation*} 
  C(H_{\text{pure}},T) = 
  \E\left[\log_2\frac{\cmat{T}{\rank(H)}}{\cmat{M}{\rank(H)}}\right]. 
\end{equation*} 
This capacity is achieved by an input $p_X$ with 
\begin{equation*} 
  p_X(\bX)= \left\{ \begin{array}{ll} 1/\cmat{T}{M} & 
      \rank(\bX)=M \\ 0 & \ow. \end{array} \right.  
\end{equation*} 
In other words, this capacity is achieved by using all the 
full rank $T\times M$ matrices with equal probability.

\section{Channel Training} 
\label{sec:ct} 
 
In noncoherent transmission, the 
CI is not available in either the transmitter or the receiver. 
But we can deliver the CI to the receiver using a simple 
channel training technique.  When $T\geq M$, we can transmit an 
identity $M\times M$ matrix as a submatrix of $X$ to recover 
$H$ at the receiver. For example, if 
\begin{equation*} 
  X = \begin{bmatrix} \mathbf{I} \\ X' \end{bmatrix}, 
\end{equation*} 
then 
\begin{equation*} 
  Y = X H = \begin{bmatrix} H \\ X'H \end{bmatrix}. 
\end{equation*} 
The first $M$ rows of $Y$ gives the instance of $H$.  Thus the last 
$T-M$ rows of $Y$ can be decoded with the CI.  Let $C_{\text{CT}}$ be 
the maximum achievable rate of such a scheme, and $\bar C_{\text{CT}}$ 
be its normalization. 
 
\begin{proposition} 
  For $\loc(H,T)$ with dimension $M\times N$ and $T\geq M$, $\bar C_{\text{CT}}= (1-M/T)\E[\rank(H)]$. 
\end{proposition}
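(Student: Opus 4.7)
The plan is to reduce the channel-training construction directly to the receiver-side-CI case handled in Proposition~\ref{prop:1}. First I would fix the scheme precisely: the transmitter forms
\[
X=\begin{bmatrix}\mathbf{I}_M\\ X'\end{bmatrix},
\]
with $X'\in\ffield^{(T-M)\times M}$ carrying the message and independent of $H$, so that
\[
Y=\begin{bmatrix}H\\ X'H\end{bmatrix}.
\]
Since the top $M$ rows of $Y$ reveal $H$ noiselessly, the receiver effectively observes an instance of $\loc(H,T-M)$ together with perfect channel information, and the only design freedom left is the distribution $p_{X'}$.

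Next I would carry out the rate computation. Because the identity submatrix is deterministic, $I(X;Y)=I(X';Y)$. Treating $Y$ as the pair $(H,X'H)$ and using that $X'$ is independent of $H$,
\[
I(X';Y)=I(X';X'H,H)=I(X';X'H\mid H)=H(X'H\mid H),
\]
the last equality because $X'H$ is a deterministic function of $(X',H)$. Applying the row-by-row entropy bound from the proof of Proposition~\ref{prop:1} to the $T-M$ rows of $X'H$ yields
\[
H(X'H\mid H)\le (T-M)\log_2 q\cdot \E[\rank(H)],
\]
with equality when the entries of $X'$ are i.i.d.\ uniform over $\ffield$. Maximizing over $p_{X'}$ therefore gives $C_{\text{CT}}=(T-M)\log_2 q\cdot\E[\rank(H)]$, and dividing by $T\log_2 q$ produces the claimed $(1-M/T)\E[\rank(H)]$.

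I do not anticipate a serious obstacle; the whole argument is just Proposition~\ref{prop:1} applied to the residual $(T-M)\times N$ block. The points that I would watch carefully are (i) that the claim concerns the best rate \emph{within the channel-training family}, so both the achievability input and the converse inequality are restricted to codes of this form, and (ii) that the normalization denominator is $T\log_2 q$ rather than $(T-M)\log_2 q$, which is precisely what generates the $1-M/T$ overhead factor.
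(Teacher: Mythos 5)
Your proposal is correct and follows essentially the same route as the paper: write the output as the pair $(H, X'H)$, use independence of the message block and $H$ to get $I(X';X'H,H)=H(X'H\mid H)$, bound this row by row as in Proposition~\ref{prop:1} by $(T-M)\log_2 q\,\E[\rank(H)]$ with equality for a purely random $X'$, and normalize by $T\log_2 q$. No substantive difference from the paper's argument.
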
 
\begin{IEEEproof} 
  Let $\tilde X$ be a random matrix over $\ffield^{(T-M)\times M}$ and 
  let $\tilde Y=\tilde XH$.  If the input of $\loc(H,T)$ is $X 
  = \begin{bmatrix} \mathbf{I} \\ \tilde X \end{bmatrix}$, the output 
  is 
  $Y = \begin{bmatrix} \mathbf{I} \\ \tilde X \end{bmatrix} H = \begin{bmatrix} H \\ \tilde Y \end{bmatrix}$. 
Thus,  
\begin{align*} 
  \bar C_{\text{CT}} & = \max_{p_{X}} I(X;Y)/(T\log_2q) \\  
   &  = \max_{p_{\tilde X}} I(\tilde X;\tilde YH)/(T\log_2q). 
\end{align*} 
Since $\tilde X$ and $H$ are independent, we have 
\begin{align*} 
  I(\tilde X;\tilde YH)
  & = I(\tilde X;\tilde Y|H)\\ 
  & = H(\tilde Y|H)\\ 
  & \leq \E[\rank(H)] (T-M)\log_2 q, 
\end{align*} 
where the equality is achieved by $\tilde X$ with uniformly independent 
components. 
\end{IEEEproof} 
 
\textbf{Remark:} In this formula of $\bar C_{\text{CT}}(H,T)$, $M/T$ 
is just the ratio of the overhead used in channel training. 
 
\begin{corollary}\label{lemma:bank} 
  $(1-M/T)\E[\rank(H)] \leq \bar 
  {C}(H,T) \leq  \E[\rank(H)]$. 
\end{corollary}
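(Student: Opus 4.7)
The proof plan is that both inequalities follow immediately by combining the two preceding results, so the work is essentially bookkeeping rather than any new argument.

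For the lower bound, my plan is simply to invoke the proposition just established, which identifies channel training as a concrete noncoherent coding scheme achieving the normalized rate $(1-M/T)\E[\rank(H)]$ whenever $T\geq M$. Since channel training is a particular choice of input distribution (the one supported on matrices whose top block is the identity), its achievable rate cannot exceed the capacity-maximizing rate, i.e.\ $\bar C_{\text{CT}} \leq \bar C(H,T)$. This gives the left-hand inequality with no extra work. A minor point to check is the degenerate regime $T<M$, in which case $(1-M/T)\E[\rank(H)]$ is non-positive and the bound holds trivially since $\bar C(H,T)\geq 0$.

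For the upper bound, the plan is to compare noncoherent transmission with the case where CI is available at the receiver. Since the receiver in the noncoherent setting is strictly less informed than the receiver in the R-CI setting, for every input distribution $p_X$ we have $I(X;Y) \leq I(X;YH)$ (this is the standard observation that conditioning on extra side information at the decoder cannot decrease mutual information; formally, $I(X;YH) = I(X;Y) + I(X;H|Y) \geq I(X;Y)$). Taking the maximum of both sides over $p_X$ gives $C(H,T)\leq C_{\text{R-CI}}(H,T)$. By Proposition~\ref{prop:1}, $C_{\text{R-CI}}(H,T) = T\log_2 q \cdot \E[\rank(H)]$, and dividing through by $T\log_2 q$ yields $\bar C(H,T)\leq \E[\rank(H)]$.

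There is no real obstacle to overcome here; the corollary is essentially a restatement of the channel-training proposition together with Proposition~\ref{prop:1}. The only subtlety worth flagging in the write-up is that the comparison of noncoherent and R-CI capacities relies on the independence assumption $X\perp H$ that was built into both settings, so the inequality $C(H,T)\leq C_{\text{R-CI}}(H,T)$ is between quantities maximized over the same class of input distributions, not over the larger joint distributions available to a coherent transmitter.
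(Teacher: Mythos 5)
Your proposal is correct and follows the paper's own route exactly: the paper also proves the corollary via the chain $C_{\text{CT}}(H,T) \leq C(H,T) \leq C_{\text{R-CI}}(H,T)$, using the channel-training proposition for the lower bound and Proposition~\ref{prop:1} for the upper bound. Your added justifications (achievability of channel training, and $I(X;Y)\leq I(X;YH)$ under the common independence assumption) simply spell out steps the paper leaves implicit.
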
 
\begin{IEEEproof} 
  It follows from ${C}_{\text{CT}}(H,T) \leq {C}(H,T) \leq 
  {C}_{\text{R-CI}}(H,T)$.  
\end{IEEEproof} 
 
The upper bound and the lower bound is asymptotically tight 
when $T$ is large. We will further improve the lower 
bound by showing that the inequality is strict.

Now we consider how to improve $\bar C_{\text{CT}}(H,T)$ by reducing 
the overhead ratio $M/T$. The method is to apply channel training to 
the new channel $\loc(GH,T)$ for a random matrix $G$ with dimension 
$r\times M$. See that $\bar C_{\text{CT}}(GH,T) = 
(1-r/T)\E[\rank(GH)]\leq (1-r/T)\E[\rank(H)]$. Thus, to achieve higher 
rate than $(1-M/T)\E[\rank(H)]$, we only need to consider $r<M$. We 
call this method \emph{extended channel training}. The maximum 
achievable rate of extended channel training is 
\begin{equation*} 
  \bar C_{\text{ECT}}(H,T) = \max_{r\leq M} \sup_{p_G:\text{The dimension of } G \text{ is } r\times M} (1-r/T)\E[\rank(GH)]. 
\end{equation*}

\begin{theorem}\label{the:ect} 
  For $\loc(H,T)$ with dimension $M\times N$, we have  
  \begin{align*} 
    \bar C_{\text{ECT}}(H,T) \geq \max\left\{\max_{r< M} (1-r/T)\left(\sum_{k=0}^{r-1} p_{\rank(H)}(k) k \cmatt{r}{k} + r\sum_{k=r}^{M} p_{\rank(H)}(k) \cmatt{k}{r}\right), \bar C_{\text{CT}}(H,T)\right\}, 
\end{align*} 
and 
\begin{align*} 
    \bar C_{\text{ECT}}(H,T) \leq \max_{r\leq M} (1-r/T)\left(\sum_{k=0}^{r-1} p_{\rank(H)}(k) k  + r\sum_{k=r}^{M} p_{\rank(H)}(k) \right). 
\end{align*} 
\end{theorem}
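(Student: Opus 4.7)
The plan is to handle the upper and lower bounds separately, since they require different techniques. For the upper bound, I will argue directly from the rank inequality; for the lower bound, I will exhibit a particular choice of $G$, namely a purely random $r\times M$ matrix, and lower-bound $\E[\rank(GH)]$ by keeping only the contribution of the maximum possible rank.

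\textbf{Upper bound.} For any $G$ of dimension $r\times M$, we have $\rank(GH)\le \min\{\rank(G),\rank(H)\}\le \min\{r,\rank(H)\}$. Taking expectations and conditioning on $\rank(H)$,
\begin{equation*}
\E[\rank(GH)]\le \E[\min\{r,\rank(H)\}] = \sum_{k=0}^{r-1} p_{\rank(H)}(k)\, k + r\sum_{k=r}^{M} p_{\rank(H)}(k).
\end{equation*}
This bound is independent of $p_G$, so multiplying by $(1-r/T)$ and taking the maximum over $r\le M$ yields the stated upper bound.

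\textbf{Lower bound.} The channel-training rate $\bar C_{\text{CT}}(H,T)$ is trivially achievable (take $r=M$ and $G=\mathbf{I}$, or argue directly via the definition), so it remains to prove the bound with the $\cmatt{}{}$ terms for each $r<M$. I would take $G$ to be purely random, i.e., uniform over $\ffield^{r\times M}$. Condition on $\rank(H)=k$, and factor $H=H_1 H_2$ with $H_1\in\ffield^{M\times k}$ of rank $k$ and $H_2\in\ffield^{k\times N}$ of rank $k$; then $\rank(GH)=\rank(GH_1)$ because $H_2$ has full row rank. The key technical observation is that when $H_1$ has full column rank, the linear map $G\mapsto GH_1$ is a surjection $\ffield^{r\times M}\to\ffield^{r\times k}$, and its composition with the uniform distribution is uniform on $\ffield^{r\times k}$ (row by row). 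Hence, conditional on $\rank(H)=k$, $GH_1$ is uniformly distributed on $r\times k$ matrices, and by Lemma~\ref{lm:rk}/the definitions of $\cmat{m}{r}$ and $\cmatt{m}{r}$, the probability it attains the maximum possible rank $\min\{r,k\}$ equals $\cmatt{r}{k}$ if $k\le r$ and $\cmatt{k}{r}$ if $k\ge r$. Since $\E[\rank(GH)\mid\rank(H)=k]\ge \min\{r,k\}\cdot\Pr(\rank(GH)=\min\{r,k\}\mid\rank(H)=k)$, averaging over $k$ gives
\begin{equation*}
\E[\rank(GH)]\ge \sum_{k=0}^{r-1} p_{\rank(H)}(k)\, k\,\cmatt{r}{k} + r\sum_{k=r}^{M} p_{\rank(H)}(k)\,\cmatt{k}{r},
\end{equation*}
and multiplying by $(1-r/T)$ and taking the max over $r<M$ completes the bound.

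\textbf{Main obstacle.} The conceptually nontrivial step is the claim that a purely random $G$ times a fixed full-column-rank $H_1$ yields a uniformly distributed product, since without this the rank distribution of $GH$ would be intertwined with the specific realization of $H_1$. Once this uniformity is established, the remainder is bookkeeping with the quantities $\cmatt{r}{k}$ and $\cmatt{k}{r}$, which by definition are the full-rank probabilities for uniformly random rectangular matrices of the appropriate shape. A minor subtlety is handling $k=0$ (the term contributes zero in both bounds) and verifying that taking the maximum with $\bar C_{\text{CT}}(H,T)$ on the right is consistent, which it is because $\bar C_{\text{CT}}(H,T)$ is itself achievable by extended channel training with $r=M$.
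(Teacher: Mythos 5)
Your proof is correct and follows essentially the same route as the paper: for the lower bound you pick a purely random $G$ and keep only the maximal-rank term (you get its probability via the uniformity of $GH_1$ for full-column-rank $H_1$, whereas the paper plugs in the exact conditional rank distribution $\Pr\{\rank(GH)=s\mid\rank(H)=k\}$ and discards the $s<\min\{k,r\}$ terms — the retained term is identical), and the trivial $r=M$ case gives the $\bar C_{\text{CT}}$ part of the maximum. For the upper bound your direct use of $\rank(GH)\le\min\{r,\rank(H)\}$, giving $\E[\rank(GH)]\le\sum_{k<r}k\,p_{\rank(H)}(k)+r\sum_{k\ge r}p_{\rank(H)}(k)$, is a cleaner shortcut than the paper's detour through conditioning on $\rank(G)$ and tail sums, but it yields the same quantity, so the arguments are equivalent in substance.
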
 
\begin{IEEEproof}
  We have that  
  \begin{align*} 
    \E[\rank(GH)] & = \sum_{s=0}^r s \Pr\{\rank(GH)=s\} \\ 
    & = \sum_{s=0}^r s \sum_{k=s}^M \Pr\{\rank(GH)=s|\rank(H) = k\} 
    p_{\rank(H)}(k) \\ & = \sum_{k=0}^M p_{\rank(H)}(k) \sum_{s\leq 
      \min\{k,r\}} s \Pr\{\rank(GH)=s|\rank(H) = k\}. 
  \end{align*} 
  To prove the first inequality, we consider $G$ is purely random.  
  By Lemma~\ref{lm:NumMGivenRank}, $\Pr\{\rank(GH)=s|\rank(H) = k\} = \frac{\cmatt{k}{s}\cmatt{r}{s}}{\cmatt{s}{s} q^{(k-s)(r-s)}}$. 
  Then, 
  \begin{align*} 
    \E[\rank(GH)] & = \sum_{k=0}^M p_{\rank(H)}(k) \sum_{s\leq  \min\{k,r\}} s \frac{\cmatt{k}{s}\cmatt{r}{s}}{\cmatt{s}{s} q^{(k-s)(r-s)}} \\ &  
	= \sum_{k=0}^{r-1} p_{\rank(H)}(k) \sum_{s\leq k } s \frac{\cmatt{k}{s}\cmatt{r}{s}}{\cmatt{s}{s} q^{(k-s)(r-s)}} + \sum_{k=r}^{M} p_{\rank(H)}(k) \sum_{s\leq r } s \frac{\cmatt{k}{s}\cmatt{r}{s}}{\cmatt{s}{s} q^{(k-s)(r-s)}} \nonumber \\ &  
	> \sum_{k=0}^{r-1} p_{\rank(H)}(k) k \cmatt{r}{k} + \sum_{k=r}^{M} p_{\rank(H)}(k) r \cmatt{k}{r}. 
  \end{align*} 
  Therefore 
  \begin{align*} 
    \bar C_{\text{ECT}}(H,T) & = \max\left\{\max_{r< M} \sup_{p_G:\text{The dimension of } G \text{ is } r\times M} (1-r/T)\E[\rank(GH)], (1-M/T)\E[\rank(H)]\right\} \\ & = \max\left\{\max_{r< M} (1-r/T)\E[\rank(GH)]|_{G \text{ is purely random}}, (1-M/T)\E[\rank(GH)] \right\}\\ 
    & \geq \max\left\{\max_{r< M} (1-r/T) \left(\sum_{k=0}^{r-1} p_{\rank(H)}(k) k \cmatt{r}{k} + \sum_{k=r}^{M} p_{\rank(H)}(k) r \cmatt{k}{r}\right), (1-M/T)\E[\rank(GH)]\right \}. 
  \end{align*}

  To prove the second inequality, we see that 
  \begin{align*} 
    \E[\rank(GH)] & = \sum_{s=0}^r s \Pr\{\rank(GH)=s\} \\ 
    & = \sum_{s=0}^r s \sum_{k=s}^M \Pr\{\rank(GH)=s|\rank(G) = k\} 
    p_{\rank(G)}(k) \\ & = \sum_{k=0}^r p_{\rank(G)}(k) \sum_{s\leq 
      k} s \Pr\{\rank(GH)=s|\rank(G) = k\}. 
  \end{align*} 
  By 
  \begin{align*} 
    \sum_{s\leq k} s \Pr\{\rank(GH)=s|\rank(G) = k\} & = \sum_{r\leq k} \sum_{s \geq r} \Pr\{\rank(GH)=s|\rank(G) = k\} \\ & = \sum_{r\leq k} \Pr\{\rank(GH)\geq r|\rank(G) = k\} \\ & \leq \sum_{r\leq k} \Pr\{\rank(H)\geq r\} \\ & = \sum_{s<k} s p_H(s) + k \sum_{s\geq k} p_H(s), 
  \end{align*} 
  we have  
  \begin{align*} 
    \E[\rank(GH)] & = \sum_{k=0}^r p_{\rank(G)}(k) \left( \sum_{s<k} s p_H(s) + k \sum_{s\geq k} p_H(s) \right) \\ & \leq \max_{k\leq r} \left( \sum_{s<k} s p_H(s) + k \sum_{s\geq k} p_H(s) \right). 
  \end{align*} 
  The proof is completed. 
\end{IEEEproof}

\begin{corollary} 
Let $\bar C^{\text{upper}}_{\text{ECT}}(H,T)$ and $\bar C^{\text{lower}}_{\text{ECT}}(H,T)$ be the upper bound and the lower bound of $\bar C_{\text{ECT}}(H,T)$ in Theorem~\ref{the:ect}, respectively.  
When $T$ is sufficiently large,  
\begin{equation*} 
  \bar C^{\text{upper}}_{\text{ECT}}(H,T) - \bar C^{\text{lower}}_{\text{ECT}}(H,T) \leq \E[\rank(H)]\frac{M-r^*}{T}, 
\end{equation*} 
where $r^*=\max\{r:p_{\rank(H)}(r)>0\}$. This means that if $p_{\rank(H)}(M)>0$, $\bar C^{\text{upper}}_{\text{ECT}}(H,T) = \bar C^{\text{lower}}_{\text{ECT}}(H,T)$ when $T$ is sufficiently large.  
Fixing the rank distribution of $H$, we have  
\begin{equation*} 
  \lim_{q\rightarrow\infty} \bar C^{\text{lower}}_{\text{ECT}}(H,T) = \bar C^{\text{upper}}_{\text{ECT}}(H,T). 
\end{equation*} 
\end{corollary}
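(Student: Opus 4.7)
The plan is to exploit two structural facts visible in Theorem~\ref{the:ect}: the upper-bound summand carries a weight $(1-r/T)$ that pushes the optimal $r$ downward as $T$ grows, while the rank-tail quantity $g(r):=\sum_{k<r} k\,p_{\rank(H)}(k) + r\sum_{k\geq r} p_{\rank(H)}(k)$ is monotone in $r$ and saturates at $\E[\rank(H)]$ once $r\geq r^*$. Together with the observation that the channel-training rate $(1-M/T)\E[\rank(H)]$ is already one of the arguments of the outer max in the lower bound, this will yield the first inequality immediately; and the large-$q$ limit reduces every $\cmatt{m}{r}$ to $1$, which will align the two bounds summand by summand.

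First I would verify the identity $g(r+1)-g(r)=\Pr\{\rank(H)\geq r+1\}$, a one-line rearrangement of the two sums. This shows $g$ is nondecreasing, equals $\E[\rank(H)]$ for every $r\geq r^*$, and is strictly smaller for $r<r^*$, since $p_{\rank(H)}(r^*)>0$ contributes a strictly positive term to the defect $\E[\rank(H)]-g(r)=\sum_{k\geq r}(k-r)p_{\rank(H)}(k)$. Writing $f(r)=(1-r/T)g(r)$ for the upper-bound summand, I observe that on $\{r^*,\ldots,M\}$ one has $f(r)=(1-r/T)\E[\rank(H)]$, which is strictly decreasing in $r$, so the maximum over this range is $f(r^*)=(1-r^*/T)\E[\rank(H)]$. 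For $r<r^*$, the inequality $f(r^*)>f(r)$ rearranges to $T>(r^*\E[\rank(H)]-rg(r))/(\E[\rank(H)]-g(r))$, a finite threshold because $g(r)<\E[\rank(H)]$. Taking $T$ larger than the maximum of these finitely many thresholds identifies $\bar C^{\text{upper}}_{\text{ECT}}(H,T)=(1-r^*/T)\E[\rank(H)]$. Combining with the trivial estimate $\bar C^{\text{lower}}_{\text{ECT}}(H,T)\geq \bar C_{\text{CT}}(H,T)=(1-M/T)\E[\rank(H)]$ produces the claimed gap $\E[\rank(H)](M-r^*)/T$; when $p_{\rank(H)}(M)>0$, we have $r^*=M$ and the gap vanishes.

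For the $q\to\infty$ statement I would use $\cmatt{m}{r}=\prod_{i=0}^{r-1}(1-q^{i-m})$, which tends to $1$ as $q\to\infty$ whenever $r\leq m$. Hence for each fixed $r<M$ the $r$-th summand of the lower bound converges to $(1-r/T)g(r)$, which is exactly the $r$-th summand of the upper bound; at the same time, the $\bar C_{\text{CT}}$ argument of the lower-bound max is precisely the $r=M$ summand of the upper-bound max. Since the maximum over finitely many expressions commutes with the pointwise limit, both bounds converge to $\max_{r\leq M}(1-r/T)g(r)$.

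The only nontrivial piece is identifying the upper-bound optimizer as $r=r^*$ for suitably large $T$, and this follows cleanly from the monotonicity and saturation of $g$. Everything else is bookkeeping: the explicit form of $\cmatt{m}{r}$, the telescoping computation of $g(r+1)-g(r)$, and the identification of the channel-training rate with the $r=M$ term of the upper bound. I do not anticipate genuine obstacles.
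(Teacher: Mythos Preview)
Your proposal is correct and follows essentially the same approach as the paper: define the rank-tail function (your $g(r)$, the paper's $a(r)$), use its monotonicity and saturation at $r^*$ to show the upper bound equals $(1-r^*/T)\E[\rank(H)]$ once $T$ exceeds finitely many thresholds of the form $(r^*\E[\rank(H)]-rg(r))/(\E[\rank(H)]-g(r))$, and then subtract the channel-training lower bound; for the $q\to\infty$ part both you and the paper invoke $\cmatt{m}{r}\to 1$. Your write-up is in fact slightly more careful than the paper's on two points---you make the monotonicity of $g$ explicit via the telescoping identity, and you spell out why the outer max commutes with the limit---but the route is the same.
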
 
\begin{proof} 
By the lower bound of $\bar C_{\text{ECT}}(H,T)$, we have $\bar C^{\text{lower}}_{\text{ECT}}(H,T) \geq (1-M/T)\E[\rank(H)]$. Let  
\begin{equation*} 
  a(r) = \sum_{s<r} s p_H(s) + r \sum_{s\geq r} p_H(s). 
\end{equation*} 
Since $a(r)=a(r^*)$ for all $r>r^*$, we only need to consider $r\leq r^*$ to find the upper bound of $\bar C_{\text{ECT}}(H,T)$, i.e.,  
\begin{equation*} 
  \bar C^{\text{upper}}_{\text{ECT}}(H,T) = \max_{r\leq r^*}(1-r/T) a(r). 
\end{equation*} 
Fix $r<r^*$. We know that $a(r)<a(r^*)$. Hence $(1-r/T)a(r)<(1-M/T)a(M)$ when $T\geq (r^*a(M)-ra(r))/(a(M-a(r))$. Therefore when $T\geq \max_{r<r^*} (r^*a(M)-ra(r))/(a(M)-a(r))$, $\bar C_{\text{ECT}}(H,T) = (1-r^*/T)\E[\rank(H)]$. 
 
The second part of this corollary follows from $\cmatt{m}{r}\rightarrow 1$ when $q\rightarrow \infty$. 
\end{proof} 
 
In Fig.~\ref{fig:bounds}, we  illustrate the bounds of $\bar C_{\text{ECT}}(H,T)$ and $\bar C_{\text{CT}}(H,T)$ over binary field.  
 
\begin{figure}[t] 
  \centering 
  \includegraphics[width=.8\textwidth]{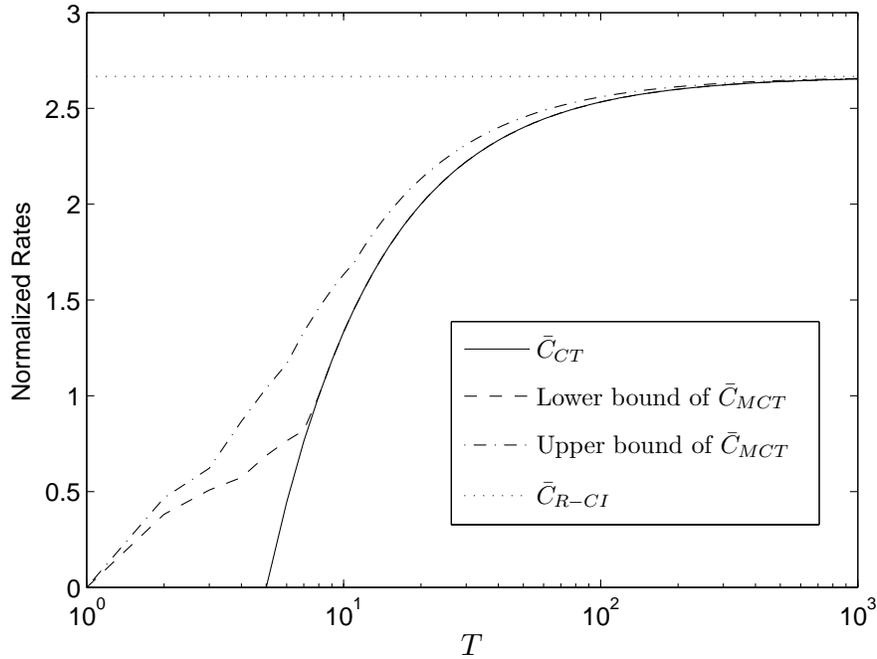} 
  \caption{Here we fix an $H$ with $M=5$ and $p_H(M)=0$ over binary field. We plot the lower and upper bounds of $\bar C_{\text{ECT}}(H,T)$ and $\bar C_{\text{CT}}(H,T)$ for 
    $T$ from $1$ to $1000$. Note that the bounds in this figure are only valid for 
    integer T and hence, the curves are not necessarily 
    smooth.} 
  \label{fig:bounds} 
\end{figure}

\section{Symmetric Property and Optimal Input Distributions} 
\label{sec:symm}

Here we introduce an intrinsic symmetric property of  
LOCs and show that this property is helpful 
to find an optimal input distribution of LOCs.  
 
\subsection{Random Variables and Markov Chains Related to LOCs}

We introduce several RVs related to LOCs, which are used 
extensively in this paper.   
Let $X$ be a RV over $\ffield^{t\times m}$. We denote by  
$\lspan{X}$ as a RV over $\Pj(\ffield^t)$ with 
\begin{equation} \label{eq:spaced} 
  p_{\lspan{X}}(U) = \Pr\{\lspan{X}=U\} = \sum_{\bX\in \ffield^{t\times m}:\lspan{\bX} = U} p_{X}(\bX). 
\end{equation} 
Denote $X^\tr$ as a RV over $\ffield^{m\times t}$ with 
$p_{X^\tr}(\bX^\tr)=p_{X}(\bX)$. 
Combining the above notations, $\lspan{X^\tr}$ is a RV over 
$\Pj(\ffield^m)$ with  
\begin{equation*} 
  p_{\lspan{X^\tr}}(V) = \sum_{\bX\in \ffield^{t\times m}:\lspan{\bX^\tr} 
  = V} p_{X}(\bX). 
\end{equation*} 
Furthermore, denote $\rank(X)$ as a RV with 
\begin{equation}\label{eq:rankd} 
  p_{\rank(X)}(r) = \sum_{\bX:\rank(\bX)=r}p_{X}(\bX). 
\end{equation} 
It is easy to see that $\rank(X)$ is a deterministic 
function of $\lspan{X}$ ($\lspan{X^\tr}$), and $\lspan{X}$ 
($\lspan{X^\tr}$) is a deterministic function of $X$. 
 
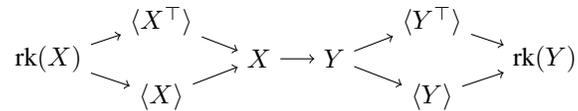
\begin{figure} 
  \centering 
  \begin{tikzpicture}[scale=0.5] 
    \node (X) at (-1,0) {$X$}; 
    \node (Y) at (1,0) {$Y$} edge[<-] (X); 
    \node (Y1) at (3.6,1) {$\lspan{Y^\tr}$} edge[<-] (Y); 
    \node (Y2) at (3.6,-1) {$\lspan{Y}$} edge[<-] (Y); 
    \node (X1) at (-3.6,1) {$\lspan{X^\tr}$} edge[->] (X); 
    \node (X2) at (-3.6,-1) {$\lspan{X}$} edge[->] (X); 
    \node at (6.6,0) {$\rank(Y)$} edge[<-] (Y1) edge[<-] 
    (Y2); 
    \node at (-6.6,0) {$\rank(X)$} edge[->] (X1) edge[->] (X2); 
  \end{tikzpicture} 
  \caption{Random variables and Markov chains related to $\loc(H,T)$.} 
  \label{fig:markov} 
\end{figure}

Now we consider $\loc(H,T)$ with dimension $M\times N$. Applying above 
definitions on the input $X$ and the output $Y$, we obtain the RVs 
shown in Fig.~\ref{fig:markov}.  These RVs are given as the nodes of a 
directed graph.  All the RVs in a directed path forms a Markov chain. 
For example, $\rank(X) \rightarrow \lspan{X} \rightarrow X \rightarrow 
Y \rightarrow \lspan{Y} \rightarrow \rank(Y)$ forms a Markov chain. 
Let $r$, $U$, $\bX$, $\bY$, $V$ and $s$ be the instances of 
$\rank(X)$, $\lspan{X}$, $X$, $Y$, $\lspan{Y}$ and $\rank(Y)$, 
respectively.  To verify this Markov chain, we only need to check the 
deterministic relations between these RVs: 
\begin{equation*} 
  p(r,U,\bX,\bY,V,s) 
  = \left\{\begin{array}{ll} p(\bX,\bY) &\text{if}\ 
      \lspan{\bX}=U, \dim(U)=r, \\  &\quad \lspan{\bY}=V, 
      \dim(V)=s, \\ 0 & \ow,\end{array}\right. 
\end{equation*} 
\begin{equation*} 
  p_{\rank(X)\lspan{X}}(r,U) 
  = \left\{\begin{array}{ll} p_{\lspan{X}}(U) &\text{if}\ 
      \dim(U)=r, \\ 0 & \ow,\end{array}\right. 
\end{equation*} 
and 
\begin{equation*} 
  p_{\lspan{Y}\rank(Y)}(V,s) 
  = \left\{\begin{array}{ll} p_{\lspan{Y}}(V) &\text{if}\ 
      \dim(V)=s, \\ 0 & \ow.\end{array}\right. 
\end{equation*} 
Using the above relations, we are ready to see 
\begin{align*} 
 \lefteqn{p(r,U,\bX,\bY,V,s)p(U)p(\bX)p(\bY)p(V)} \\ & 
  =p(r,U)p(U,\bX)p(\bX,\bY)p(\bY,V)p(V,s), 
\end{align*} 
which matches an alternative definition of Markov chain 
given in \cite[\S 2.1]{yeung08}. 
Other Markov chains shown in Fig.~\ref{fig:markov} can be verified accordingly.

\subsection{A Symmetric Property} 
 
The next proposition states a symmetric property of LOCs. Even though 
its proof is straightforward, this proposition plays a fundamental 
role in this paper. We say a matrix is full column (row) rank if its 
rank is equal to its number of columns (rows). 
 
\begin{proposition}\label{the:symm} 
  Consider $\loc(H,T)$.  
  For any matrix $\mathbf B$ with $T$ rows and full column rank,   
  \begin{equation*} 
    P_{Y|X}(\mathbf{BE}|\mathbf{BD}) = \Pr\{\mathbf{D} H = \mathbf{E}\}. 
  \end{equation*} 
\end{proposition}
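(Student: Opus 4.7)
The plan is to unpack the definition of the transition probability and then exploit the left-invertibility of $\mathbf{B}$ to show that the two events $\{\mathbf{BD}H = \mathbf{BE}\}$ and $\{\mathbf{D}H = \mathbf{E}\}$ coincide.

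First, I would invoke the formula \eqref{eq:trans}, which gives
\begin{equation*}
  P_{Y|X}(\mathbf{BE} \mid \mathbf{BD}) \;=\; \Pr\{\mathbf{BD}\, H = \mathbf{BE}\}.
\end{equation*}
So the claim reduces to showing that $\Pr\{\mathbf{BD}H = \mathbf{BE}\} = \Pr\{\mathbf{D}H = \mathbf{E}\}$. Since the randomness is all in $H$, it suffices to prove that for every realization $\bH$ of $H$, the equation $\mathbf{BD}\bH = \mathbf{BE}$ holds if and only if $\mathbf{D}\bH = \mathbf{E}$.

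For the forward direction I use that $\mathbf{B} \in \ff^{T\times k}$ with $\rank(\mathbf{B}) = k$ (full column rank) admits a left inverse $\mathbf{B}^+ \in \ff^{k\times T}$ with $\mathbf{B}^+ \mathbf{B} = \mathbf{I}_k$; existence follows by standard linear algebra over a field, or equivalently because the columns of $\mathbf{B}$ are linearly independent so the map $\mathbf{v} \mapsto \mathbf{B}\mathbf{v}$ is injective. Then from $\mathbf{BD}\bH = \mathbf{BE}$, left-multiplying both sides by $\mathbf{B}^+$ yields $\mathbf{D}\bH = \mathbf{E}$. The converse direction is immediate: left-multiplying $\mathbf{D}\bH = \mathbf{E}$ by $\mathbf{B}$ gives $\mathbf{BD}\bH = \mathbf{BE}$. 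Hence the two events are literally equal as subsets of the sample space of $H$, and so share the same probability.

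There is no real obstacle in this argument; the only point that requires a line of justification is the existence of the left inverse of a full-column-rank matrix over $\ff$, which is standard. Everything else is a direct substitution into \eqref{eq:trans} and equality of events.
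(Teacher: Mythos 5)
Your argument is correct and is essentially identical to the paper's proof: apply \eqref{eq:trans} and then cancel $\mathbf{B}$ using its full column rank, the only difference being that you spell out the left-inverse/injectivity justification that the paper leaves implicit. No gaps.
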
 
\begin{IEEEproof} 
  We know 
  \begin{align*} 
    P_{Y|X}(\mathbf{BE}|\mathbf{BD})  & = \Pr\{\mathbf{BD} H 
    = \mathbf{BE}\} \\ & = \Pr\{\mathbf{D} H = \mathbf{E}\}, 
  \end{align*} 
  where the last equality follows because $\mathbf B$ is 
  full column rank. 
\end{IEEEproof} 
 
Let $\mathbf B$ be a $t\times r$ matrix 
with rank $r$. For a  $t\times m$ matrix $\mathbf A$ with $\lspan{\mathbf 
  A}\subset \lspan{\mathbf B}$, define $\mathbf A/\mathbf B$ be the 
matrix such that $\mathbf A = \mathbf B(\mathbf A/\mathbf B)$.  The 
notation ``$/$'' is well defined because i) there must exists $\mathbf 
C$ such that $\mathbf A = \mathbf{BC}$ since $\lspan{\mathbf A}\subset 
\lspan{\mathbf B}$ and ii) such $\mathbf C$ is unique since $\mathbf 
B$ is full column rank.

Let $\bX$ and $\bY$ be the input and output matrices of a LOC, 
respectively, with $\lspan{\bY}\leq \lspan{\bX}$. Fix a full column 
rank matrix $\mathbf B$ with $\lspan{\bX} = \lspan{\mathbf B}$. 
Prop.~\ref{the:symm} tells that 
\begin{equation}\label{eq:symm} 
  P_{Y|X}(\bY|\bX) = \Pr\{ (\bX/\mathbf B) H = \bY/\mathbf B \}. 
\end{equation} 
The dimension of $\bX/\mathbf B$ is $\rank(\bX)\times M$ and the 
dimension of $\bY/\mathbf B$ is $\rank(\bX)\times N$.  This means that 
the transition probability $P_{Y|X}$ does not depends on the inaction 
period $T$. See examples in \S\ref{sec:examples}. 
In the following, we give two useful forms of this symmetric property.

\begin{corollary}\label{lemma:cond} 
  Let $\bX$ be an input matrix of $\loc(H,T)$. Then, 
  \begin{equation*} 
    P_{\rank(Y)|X}(s|\bX) = P_{\rank(Y)|\lspan{X^\tr}}(s|\lspan{\bX^\tr}) = 
    \Pr\{\rank(\mathbf{D} H)=s\}, 
  \end{equation*} 
  where $\mathbf D$ is any $\rank(\bX)\times M$ matrix with $\lspan{\mathbf 
    D^\tr} = \lspan{\bX^\tr}$. 
\end{corollary}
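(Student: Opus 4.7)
The plan is to reduce the claim to a direct application of Proposition~\ref{the:symm} via a full-column-rank factorization of $\bX$. Specifically, given an input matrix $\bX \in \ffield^{T\times M}$ with $\rank(\bX)=r$, I will choose any $T\times r$ matrix $\mathbf{B}$ of full column rank with $\lspan{\mathbf{B}} = \lspan{\bX}$; then there exists a unique $r\times M$ matrix $\mathbf{D}$ with $\bX = \mathbf{B}\mathbf{D}$, and moreover $\mathbf{D}$ has rank $r$ with $\lspan{\mathbf{D}^\tr} = \lspan{\bX^\tr}$. With this decomposition, $Y = \bX H = \mathbf{B}(\mathbf{D}H)$, and since left-multiplication by the full-column-rank matrix $\mathbf{B}$ preserves rank, $\rank(Y) = \rank(\mathbf{D}H)$ pointwise. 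Hence
\begin{equation*}
P_{\rank(Y)|X}(s|\bX) = \Pr\{\rank(\mathbf{B}\mathbf{D}H) = s\} = \Pr\{\rank(\mathbf{D}H) = s\},
\end{equation*}
which is exactly the rightmost expression in the corollary for this particular $\mathbf{D}$.

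Next I will verify that the quantity $\Pr\{\rank(\mathbf{D}H) = s\}$ depends on $\mathbf{D}$ only through $\lspan{\mathbf{D}^\tr}$. If $\mathbf{D}'$ is any other $r\times M$ matrix with $\lspan{\mathbf{D}'^\tr} = \lspan{\mathbf{D}^\tr}$, then both $\mathbf{D}$ and $\mathbf{D}'$ have full row rank $r$ and share the same row space, so there is an invertible $r\times r$ matrix $\mathbf{A}$ with $\mathbf{D}' = \mathbf{A}\mathbf{D}$. Consequently $\rank(\mathbf{D}'H) = \rank(\mathbf{A}\mathbf{D}H) = \rank(\mathbf{D}H)$ for every realization of $H$, giving
\begin{equation*}
\Pr\{\rank(\mathbf{D}'H) = s\} = \Pr\{\rank(\mathbf{D}H) = s\}.
\end{equation*}
This justifies the claim ``\emph{where $\mathbf{D}$ is any $\rank(\bX)\times M$ matrix with $\lspan{\mathbf{D}^\tr} = \lspan{\bX^\tr}$}.''

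Finally, the equality $P_{\rank(Y)|X}(s|\bX) = P_{\rank(Y)|\lspan{X^\tr}}(s|\lspan{\bX^\tr})$ is an immediate consequence of the two observations above: the conditional probability on the left depends on $\bX$ only through the row space $\lspan{\bX^\tr}$, so averaging over inputs sharing a common row space leaves the value unchanged, which is the defining property of conditioning on $\lspan{X^\tr}$. I would make this precise by writing $p_{\rank(Y)|\lspan{X^\tr}}(s|V) = \sum_{\bX:\lspan{\bX^\tr}=V} p_{X|\lspan{X^\tr}}(\bX|V) P_{\rank(Y)|X}(s|\bX)$ and observing that every summand equals $\Pr\{\rank(\mathbf{D}H)=s\}$ for the same $\mathbf{D}$.

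There is no real obstacle here; the proof is essentially bookkeeping once Proposition~\ref{the:symm} is in hand. The only point requiring a moment's care is the well-definedness argument (independence of the choice of the representative $\mathbf{D}$), and it is handled cleanly by the invertible change-of-basis on the row space of $\bX$. I could alternatively derive the result by invoking Proposition~\ref{the:symm} in its stated form and summing $P_{Y|X}(\bY|\bX)$ over all $\bY$ with $\rank(\bY)=s$, using the bijection $\mathbf{C} \mapsto \mathbf{B}\mathbf{C}$ between $r\times N$ matrices and $T\times N$ matrices with column space in $\lspan{\mathbf{B}}$, but the rank-preservation argument above is shorter.
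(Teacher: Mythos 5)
Your proof is correct, and it takes a mildly but genuinely different route from the paper for the first equality. The paper fixes an arbitrary $\mathbf D$ with $\lspan{\mathbf D^\tr}=\lspan{\bX^\tr}$, sets $\mathbf B^\tr=\bX^\tr/\mathbf D^\tr$, and then invokes Proposition~\ref{the:symm} to rewrite $P_{Y|X}(\bY|\bX)=\Pr\{\mathbf D H=\bY/\mathbf B\}$, summing over all $\bY$ of rank $s$ and using the bijection $\mathbf E\mapsto\mathbf B\mathbf E$ to land on $\Pr\{\rank(\mathbf D H)=s\}$ --- this is exactly the alternative you sketch in your last paragraph. You instead use the pointwise identity $\rank(\bX H)=\rank(\mathbf B\mathbf D H)=\rank(\mathbf D H)$ for every realization of $H$ (valid since $\mathbf B$ has full column rank), which collapses the first equality in one line without any summation over outputs; what this buys is a shorter, more elementary argument, at the cost of bypassing the symmetric property that the paper is deliberately showcasing as the engine for later results. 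Your explicit well-definedness step ($\mathbf D'=\mathbf A\mathbf D$ with $\mathbf A$ invertible) is a nice addition that the paper leaves implicit by starting from an arbitrary admissible $\mathbf D$, and your treatment of the second equality --- conditioning on $\lspan{X^\tr}$ by averaging $P_{\rank(Y)|X}(s|\cdot)$ over inputs with the given row space and noting every summand is the same constant --- is exactly the paper's argument via the Markov chain $\lspan{X^\tr}\rightarrow X\rightarrow\rank(Y)$. One small bookkeeping point you rely on implicitly: the identification $P_{\rank(Y)|X}(s|\bX)=\Pr\{\rank(\bX H)=s\}$ rests on the noncoherent channel law \eqref{eq:trans} (independence of $X$ and $H$), which is worth citing, but this is cosmetic rather than a gap.
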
 
\begin{IEEEproof} 
  Fix a $\rank(\bX)\times M$ matrix $\mathbf D$ with $\lspan{\bX^\tr} 
  = \lspan{\mathbf D^\tr}$. Let $\mathbf B^\tr = \bX^\tr/\mathbf 
  D^\tr$.  We know $\mathbf B$ is full column rank. 
  Since $X\rightarrow Y\rightarrow \rank(Y)$ forms a Markov chain, 
\begin{align} 
  P_{\rank(Y)|X}(s|\bX) & = \sum_{\bY} P_{\rank(Y)|Y}(s|\bY) 
  P_{Y|X}(\bY|\bX) \nonumber\\ 
  & = \sum_{\bY:\rank(\bY)=s} P_{Y|X}(\bY|\bX)\nonumber \\ 
  & = \sum_{\bY:\rank(\bY)=s} \Pr\{ \mathbf D H = \bY/\mathbf B \} \label{eq:fz} \\ 
  & = \sum_{\mathbf E:\rank(\mathbf E)=s} \Pr\{ \mathbf D H = \mathbf E \} \nonumber \\ 
  & = \Pr\{\rank(\mathbf{D} H)=s\},\nonumber 
\end{align} 
where \eqref{eq:fz} follows from \eqref{eq:symm}. 
 
Let $\tilde U= \lspan{\bX^\tr}$. 
By the Markov 
chain $\lspan{X^\tr} \rightarrow X \rightarrow \rank(Y)$,  
\begin{align} 
  \lefteqn{P_{\rank(Y)|\lspan{X^\tr}}(s|\tilde U )}\nonumber 
  \\ & = \sum_{\bX': \lspan{\bX'^\tr} = \tilde U} 
  P_{\rank(Y)|X}(s|\bX') P_{X|\lspan{X^\tr}}(\bX'|\tilde U) \nonumber\\ 
  & = \Pr\{\rank(\mathbf{D} H)=s\} \sum_{\bX': \lspan{\bX'^\tr} = \tilde U} 
   P_{X|\lspan{X^\tr}}(\bX'|\tilde U) \nonumber\\ 
  & = \Pr\{\rank(\mathbf{D} H)=s\}. \nonumber 
\end{align} 
The proof is completed. 
\end{IEEEproof}

\begin{corollary}\label{prop:symm} 
   Consider $\loc(H,T)$. For any $\Phi\in \Fr(\ffield^{T\times T})$, 
    \begin{equation}\label{eq:sb} 
      P_{Y|X}(\Phi \bY |\Phi \bX) = P_{Y|X}(\bY|\bX). 
    \end{equation} 
\end{corollary}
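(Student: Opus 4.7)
The plan is to derive this as an immediate specialization of Proposition~\ref{the:symm}. The hypothesis of that proposition requires a matrix $\mathbf{B}$ with $T$ rows and full column rank. Here $\Phi \in \Fr(\ffield^{T\times T})$ is a square full-rank (hence invertible) $T\times T$ matrix, so it trivially satisfies that hypothesis with the number of columns equal to $T$.

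First I would set $\mathbf{B} = \Phi$, $\mathbf{D} = \bX$ and $\mathbf{E} = \bY$ in Proposition~\ref{the:symm}. Plugging these choices in gives
\begin{equation*}
P_{Y|X}(\Phi \bY \mid \Phi \bX) \;=\; \Pr\{\bX H = \bY\}.
\end{equation*}
Then I would invoke the definition of the transition probability in \eqref{eq:trans}, namely $P_{Y|X}(\bY|\bX) = \Pr\{\bX H = \bY\}$, which identifies the right-hand side with $P_{Y|X}(\bY|\bX)$, yielding \eqref{eq:sb}.

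There is essentially no obstacle here: the corollary is a one-line specialization of the more general symmetric property. The only thing worth emphasizing in the write-up is why $\Phi$ qualifies as an admissible $\mathbf{B}$ (it is square of size $T\times T$ and full rank, therefore full column rank), so that Proposition~\ref{the:symm} applies verbatim without any further argument.
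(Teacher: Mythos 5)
Your proposal is correct and follows exactly the paper's route: the paper proves this corollary simply by noting it is a special case of Proposition~\ref{the:symm}, and you have merely made the substitution ($\mathbf{B}=\Phi$, $\mathbf{D}=\bX$, $\mathbf{E}=\bY$) and the appeal to \eqref{eq:trans} explicit. No gaps.
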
 
\begin{IEEEproof} 
  This is a special cases of Prop.~\ref{the:symm}. 
\end{IEEEproof}

\subsection[alpha-type Input Distribution]{$\alpha$-type 
  Input Distributions} 
 
For a DMC, a capacity achieving input is also 
referred to as an optimal input. 
It is well known that the channel capacity of a symmetric channel is 
achieved by the symmetric input distribution \cite{gallager}.  
Even though in general LOCs are not symmetric channels, the symmetric 
property we have shown is still helpful to find an optimal input.

\begin{definition} 
   A PMF $p$ over $\ffield^{T\times M}$ is \emph{$\alpha$-type} 
  if $p(\bX)=p(\bX')$ for all 
  $\bX,\bX'\in\ffield^{T\times M}$ with $\lspan{\bX^\tr} = 
  \lspan{\bX'^\tr}$. 
\end{definition} 
 
For example, the input distribution 
\begin{equation*} 
  p_X(\bX)= \left\{ \begin{array}{ll} 1/\cmat{T}{M} & 
      \rank(\bX)=M \\ 0 & \ow \end{array} \right.  
\end{equation*} 
is \emph{the} $\alpha$-type input with $p_{\rank(X)}(M)=1$.

\begin{lemma}\label{lm:rw} 
A function $p:\ffield^{T\times M}\rightarrow 
\mathbb{R}$ is an $\alpha$-type PMF if and only if it can be written as 
\begin{equation}\label{eq:q} 
  p(\bX) =  Q(\lspan{\bX^\tr})/\cmat{T}{\rank(\bX)} 
\end{equation} 
for certain PMF $Q$ over $\Pj(\min\{M,T\},\ffield^M)$. 
\end{lemma}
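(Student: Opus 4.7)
The plan is to reduce the lemma to a single counting fact: for any subspace $U \leq \ffield^M$ of dimension $r \leq \min\{M,T\}$, the number of matrices $\bX \in \ffield^{T\times M}$ with $\lspan{\bX^\tr} = U$ equals $\cmat{T}{r}$. To see this, fix any $r\times M$ matrix $\bB$ whose rows form a basis of $U$. Every $\bX$ whose rows lie in $U$ can be written uniquely as $\bX = \bA \bB$ for some $\bA \in \ffield^{T\times r}$, because $\bB$ is full row rank. The row space $\lspan{\bX^\tr}$ equals $U$ (rather than a proper subspace) precisely when the rows of $\bA$ span $\ffield^r$, i.e., when $\bA$ has full column rank $r$. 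By Lemma~\ref{lm:NumFullRankM}, the number of such $\bA$ is $|\Fr(\ffield^{T\times r})| = \cmat{T}{r}$.

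For the forward direction, suppose $p$ is $\alpha$-type. Define
\[
  Q(U) \;:=\; \sum_{\bX:\lspan{\bX^\tr}=U} p(\bX), \qquad U \in \Pj(\min\{M,T\},\ffield^M).
\]
By the $\alpha$-type property, every summand equals the common value $p(\bX)$ for any $\bX$ with $\lspan{\bX^\tr}=U$, and by the counting step there are $\cmat{T}{\dim(U)}$ such summands. Hence $Q(U) = \cmat{T}{\dim(U)}\, p(\bX)$, which rearranges to \eqref{eq:q}. Moreover $Q \geq 0$ and $\sum_U Q(U) = \sum_\bX p(\bX) = 1$, so $Q$ is a PMF supported on $\Pj(\min\{M,T\},\ffield^M)$ (note that no $\bX$ has row-space dimension exceeding $\min\{M,T\}$, so the support is automatic).

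For the converse, suppose $p$ has the form \eqref{eq:q} for some PMF $Q$ on $\Pj(\min\{M,T\},\ffield^M)$. The right-hand side depends on $\bX$ only through $\lspan{\bX^\tr}$ (since $\rank(\bX) = \dim\lspan{\bX^\tr}$), so $p$ is $\alpha$-type. Nonnegativity is immediate, and grouping the sum by row space yields
\[
  \sum_{\bX\in\ffield^{T\times M}} p(\bX) \;=\; \sum_{U} \sum_{\bX:\lspan{\bX^\tr}=U} \frac{Q(U)}{\cmat{T}{\dim(U)}} \;=\; \sum_U Q(U) \;=\; 1,
\]
again using the counting identity $|\{\bX:\lspan{\bX^\tr}=U\}| = \cmat{T}{\dim(U)}$. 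Thus $p$ is a PMF, completing the equivalence.

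The only substantive step is the counting lemma relating matrices with prescribed row space to full-column-rank matrices; once this is in hand, both directions are bookkeeping. The mild subtlety to watch is the restriction $\dim(U) \leq \min\{M,T\}$, which is automatic because row-space dimension is bounded by both the number of rows and the number of columns and guarantees $\cmat{T}{\dim(U)} \neq 0$ so that \eqref{eq:q} is well-defined.
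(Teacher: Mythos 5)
Your proof is correct and follows essentially the same route as the paper: define $Q$ by summing $p$ over matrices with a given row space, use the $\alpha$-type property together with the count of matrices having a prescribed row space to rearrange into \eqref{eq:q}, and reverse the grouping for the converse. The only cosmetic difference is that you reprove the counting step $|\{\bX:\lspan{\bX^\tr}=U\}|=\cmat{T}{\dim(U)}$ via the full-column-rank factorization, whereas the paper simply cites Lemma~\ref{lm:ia}, whose proof is the same argument.
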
 
\begin{IEEEproof} 
  Assume $p$ is an $\alpha$-type input.  Define $Q: 
  \Pj(\min\{M,T\},\ffield^M)\rightarrow \mathbb{R}$ as 
  \begin{equation*} 
    Q(\tilde U)=\sum_{\bX'\in \ffield^{T\times M}:\lspan{\bX'^\tr}=\tilde U} p(\bX'). 
  \end{equation*} 
  For $\bX\in \ffield^{T\times M}$, 
  \begin{align*} 
    Q(\lspan{\bX^\tr}) & = \sum_{\bX'\in \ffield^{T\times M}:\lspan{\bX'^\tr}= \lspan{\bX^\tr}} p(\bX') \\ & = p(\bX)\sum_{\bX'\in \ffield^{T\times M}:\lspan{\bX'^\tr}= \lspan{\bX^\tr}}1 \\ & = p(\bX) \cmat{T}{\rank(\bX)}, 
  \end{align*} 
where the last equality follows from Lemma~\ref{lm:ia}. 
This proves the necessary condition. 
 
Now we prove the sufficient condition. 
Let $Q$ be a PMF over $\Pj(\min\{M,T\},\ffield^M)$. 
Define a function $p:\ffield^{T\times M}\rightarrow 
\mathbb{R}$ as  
\begin{equation*} 
  p(\bX) =  Q(\lspan{\bX^\tr})/\cmat{T}{\rank(\bX)}. 
\end{equation*} 
We can check that for $\bX,\bX'\in \ffield^{T\times M}$ with 
$\lspan{\bX^\tr} = \lspan{\bX'^\tr}$,  
\begin{align*} 
  p(\bX) &  = Q(\lspan{\bX^\tr})/\cmat{T}{\rank(\bX)} \\ 
  & = Q(\lspan{\bX'^\tr})/\cmat{T}{\rank(\bX)} \\ 
  & = p(\bX'),  
\end{align*} 
and 
\begin{align*} 
  \sum_{\bX} p(\bX) & = \sum_{\tilde U\in \Pj(\ffield^M)} 
  \sum_{\bX:\lspan{\bX^\tr}=\tilde U} 
  Q(\tilde U)/\cmat{T}{\dim(\tilde U)} \\ 
  & = \sum_{\tilde U\in \Pj(\ffield^M)} Q(\tilde U)/\cmat{T}{\dim(\tilde U)}\sum_{\bX:\lspan{\bX^\tr}=\tilde U} 1  \\ 
  & = \sum_{\tilde U\in \Pj(\ffield^M)} Q(\tilde U) \\ 
  & = 1. 
\end{align*} 
Thus $p$ is an $\alpha$-type PMF.  
\end{IEEEproof} 
 
The following proposition tells that we can only consider 
$\alpha$-type inputs to study the capacity of LOCs. 
 
\begin{theorem}\label{the:diq} 
  For a LOC there exists an $\alpha$-type input that 
  maximizes $I(X;Y)$. 
\end{theorem}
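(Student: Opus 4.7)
The plan is to exploit the symmetry of Corollary~\ref{prop:symm} via a group-averaging argument. Given any optimal input distribution $p_X$, I will construct a new distribution $\tilde p_X$ by symmetrizing $p_X$ under the action of $\Fr(\ffield^{T\times T})$ by left multiplication, and show that $\tilde p_X$ (i) remains optimal and (ii) is of $\alpha$-type.

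First I would set up the symmetrization. For $\Phi\in \Fr(\ffield^{T\times T})$, define $p_X^{\Phi}(\bX):=p_X(\Phi^{-1}\bX)$, and let
\begin{equation*}
\tilde p_X(\bX):=\frac{1}{|\Fr(\ffield^{T\times T})|}\sum_{\Phi\in \Fr(\ffield^{T\times T})} p_X^\Phi(\bX).
\end{equation*}
By Corollary~\ref{prop:symm}, the channel is invariant under the map $(\bX,\bY)\mapsto(\Phi\bX,\Phi\bY)$, so if $X\sim p_X$ and $Y=XH$, then $(\Phi X,\Phi Y)$ has the same joint law as a channel use with input distribution $p_X^\Phi$. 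Since $\Phi$ is invertible, $I(\Phi X;\Phi Y)=I(X;Y)$, so every $p_X^\Phi$ achieves the same mutual information as $p_X$. Mutual information is concave in the input distribution for a fixed channel, hence $I(\tilde X;\tilde Y)\ge \max_\Phi I(X;Y) = \max_{p_X} I(X;Y)$, showing $\tilde p_X$ is also optimal.

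Next I would show $\tilde p_X$ is $\alpha$-type. The key linear-algebraic fact I will need is that the orbits of $\Fr(\ffield^{T\times T})$ acting on $\ffield^{T\times M}$ by left multiplication are exactly the fibers of $\bX\mapsto \lspan{\bX^\tr}$: left multiplication by an invertible $\Phi$ preserves the row space of a matrix, and conversely, if $\bX,\bX'\in \ffield^{T\times M}$ satisfy $\lspan{\bX^\tr}=\lspan{\bX'^\tr}$, then $\bX$ and $\bX'$ have the same rank $r$, the same row space $U$, and their row maps $\ffield^{1\times T}\to U$ are two surjections with $(T-r)$-dimensional kernels; picking a basis adapted to these kernels produces an invertible $\Psi$ with $\Psi\bX=\bX'$. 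Given such $\Psi$, the substitution $\Phi\mapsto \Psi^{-1}\Phi$ is a bijection on $\Fr(\ffield^{T\times T})$ and yields
\begin{equation*}
\tilde p_X(\bX')=\frac{1}{|\Fr(\ffield^{T\times T})|}\sum_\Phi p_X(\Phi^{-1}\Psi\bX)=\tilde p_X(\bX),
\end{equation*}
so $\tilde p_X$ is constant on each row-space orbit, i.e., $\alpha$-type.

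The only slightly nontrivial step is the orbit identification in the previous paragraph; once that is in hand, the rest is a standard symmetrize-and-use-concavity argument. I expect the bulk of the writeup to be the verification that two matrices with the same row space are related by an invertible left factor, which is elementary but deserves an explicit sentence.
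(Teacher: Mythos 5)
Your proposal is correct and follows essentially the same route as the paper: symmetrize an optimal input over left multiplication by $\Fr(\ffield^{T\times T})$, use the symmetry of Corollary~\ref{prop:symm} to see each translate is optimal (the paper's Lemma~\ref{lemma:mc1}), invoke concavity of mutual information, and identify the orbits with the fibers of $\bX\mapsto\lspan{\bX^\tr}$ (the paper's Lemma~\ref{lemma:jingle}, which you reprove inline). The only cosmetic differences are that you argue the invariance $I(\Phi X;\Phi Y)=I(X;Y)$ by relabeling rather than by the paper's explicit computation, and your change of summation variable should be read as reindexing by $\Phi'=\Psi^{-1}\Phi$ (equivalently substituting $\Phi\mapsto\Psi\Phi$), which is the same group-translation trick the paper uses.
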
 
\begin{IEEEproof} 
  This proposition is proved using Cor.~\ref{prop:symm} and 
  the concavity of mutual information as a function of input distribution. 
  See \S\ref{sec:rank2} for details. 
\end{IEEEproof}

Let $M^* = \min\{T,M\}$. Theorem~\ref{the:diq} narrows down the range to 
find an optimal input.  To determine a PMF over 
$\Pj(M^*,\ffield^{M})$, we have $|\Pj(M^*,\ffield^{M})|-1$ parameters to determine. We know $|\Pj(M^*,\ffield^{M})|-1 < q^{M^2/2 + 
  \log_qM +c}$, where $c<1.8$ is a constant (see 
Lemma~\ref{lemma:pjn}). But to determine a PMF over $\ffield^{T\times 
  M}$, we have to fix $q^{TM}-1$ parameters.  It is clear that $q^{M^2/2 + 
  \log_qM +c} / (q^{TM}-1) \rightarrow 0$ when $T\rightarrow \infty$, 
or when $T>M/2+1/e+c$ and $q\rightarrow \infty$.  Thus, using 
$\alpha$-type inputs can significantly reduce the complexity to find 
an optimal input distribution when i) $T$ is large or ii) 
$T>M/2+1/e+c$ and $q$ is large.

\subsection{Proof of Theorem~\ref{the:diq}} 
\label{sec:rank2}

\begin{lemma}\label{lemma:mc1} 
  Let $p_X$ be an input distribution of $\loc(H,T)$ with 
  dimension $M\times N$. Define $p'_X:\ffield^{T\times 
    M}\rightarrow \mathbb{R}$ as 
  $p'_X(\bX)=p_X(\Phi\bX)$, where $\Phi\in \Fr(\ffield^{T\times T})$. 
  We have, i) $p'_X$ is a PMF, ii) $I(X;Y)|_{p_X} = I(X;Y)|_{p'_X}$ 
  and iii) 
  $I(\lspan{X};\lspan{Y})|_{p_X} = I(\lspan{X};\lspan{Y})|_{p'_X}$. 
\end{lemma}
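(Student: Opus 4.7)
The plan is to treat the map $\bX \mapsto \Phi\bX$ as an invertible change of variables on $\ffield^{T \times M}$ (and, for the output, on $\ffield^{T \times N}$) and transport the entire input--output pair through it, using Cor.~\ref{prop:symm} to match conditional probabilities. Part (i) is immediate: since $\Phi \in \Fr(\ffield^{T\times T})$ is invertible, $\bX\mapsto\Phi\bX$ is a bijection of $\ffield^{T\times M}$, so $\sum_{\bX} p'_X(\bX) = \sum_{\bX} p_X(\Phi\bX) = \sum_{\bX'} p_X(\bX') = 1$, and nonnegativity is inherited from $p_X$.

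For (ii), the cleanest route is a coupling. I would let $X'$ follow $p'_X$, set $Y' = X' H$, and introduce $\tilde X = \Phi X'$, $\tilde Y = \Phi Y'$. A one-line change of variables gives $\Pr(\tilde X = \bX) = p'_X(\Phi^{-1}\bX) = p_X(\bX)$, so $\tilde X \sim p_X$; moreover $\tilde Y = \Phi X' H = \tilde X H$ by associativity of matrix multiplication, hence $(\tilde X,\tilde Y)$ is a genuine input--output pair of $\loc(H,T)$ driven by $p_X$. Since $\Phi$ is invertible, $X' = \Phi^{-1}\tilde X$ and $Y' = \Phi^{-1}\tilde Y$ are deterministic bijective images of $\tilde X$ and $\tilde Y$, and the invariance of mutual information under bijections applied separately to each argument yields $I(X';Y')|_{p'_X} = I(\tilde X;\tilde Y)|_{p_X} = I(X;Y)|_{p_X}$. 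An alternative, purely analytic route would expand $I(X;Y)|_{p'_X}$ as a double sum and substitute $\bX\mapsto\Phi^{-1}\bX$, $\bY\mapsto\Phi^{-1}\bY$, invoking Cor.~\ref{prop:symm} term by term to identify the transition probabilities before and after the substitution; the two routes are equivalent, but the coupling argument is tidier.

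Part (iii) follows from the same coupling lifted one layer up. The invertible matrix $\Phi$ acting on $\ffield^T$ induces a bijection on $\Pj(\ffield^T)$ via $U \mapsto \Phi U$, and under this bijection $\lspan{\tilde X} = \Phi \lspan{X'}$ and $\lspan{\tilde Y} = \Phi \lspan{Y'}$. Invariance of mutual information under bijective functions of the two arguments then gives $I(\lspan{X'};\lspan{Y'})|_{p'_X} = I(\lspan{\tilde X};\lspan{\tilde Y})|_{p_X} = I(\lspan{X};\lspan{Y})|_{p_X}$, as required.

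I do not foresee a genuine obstacle. The only point that needs care is checking that $Y'$ transforms correctly under $\Phi$, i.e., that $\Phi(X' H) = (\Phi X') H$; this is the algebraic content packaged into Cor.~\ref{prop:symm}, and once it is in hand the rest of the argument is bookkeeping around the bijections $\Phi$ and $\Phi^{-1}$.
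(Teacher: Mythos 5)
Your proof is correct, and it takes a genuinely different route from the paper. The paper argues purely at the level of distributions: it first verifies $p'_Y(\bY)=p_Y(\Phi\bY)$, and for part (iii) also $p'_{\lspan{X}}(U)=p_{\lspan{X}}(\Phi U)$, $P'_{\lspan{Y}|\lspan{X}}(V|U)=P_{\lspan{Y}|\lspan{X}}(\Phi V|\Phi U)$ and $p'_{\lspan{Y}}(V)=p_{\lspan{Y}}(\Phi V)$, invoking Cor.~\ref{prop:symm} and Lemma~\ref{lm:ia}, and then performs the change of variables $\bX\mapsto\Phi\bX$, $\bY\mapsto\Phi\bY$ (resp.\ $U\mapsto\Phi U$, $V\mapsto\Phi V$) inside the sums defining the mutual information. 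You instead build the coupling $\tilde X=\Phi X'$, $\tilde Y=\Phi Y'$, observe that $\tilde Y=\tilde X H$ and $\tilde X\sim p_X$ (with $\tilde X$ still independent of $H$, since it is a deterministic function of $X'$), and finish by invariance of mutual information under bijective relabelings of each argument, using $\lspan{\Phi\bX}=\Phi\lspan{\bX}$ for part (iii). Your route is shorter and makes the structural reason transparent, at the price of leaving the distributional identities implicit; the paper's computation, while more laborious, exhibits those identities explicitly and matches the change-of-variables style reused later (e.g.\ in the proof of Lemma~\ref{the:8glsie}). One small correction: the identity $\Phi(X'H)=(\Phi X')H$ is plain associativity of matrix multiplication, not the content of Cor.~\ref{prop:symm}; that corollary concerns the invariance of the conditional law $P_{Y|X}$, which your coupling in effect re-derives rather than cites.
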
 
\begin{IEEEproof}
  First $p'_X$ is a PMF because $0\leq p'_X(\bX) =p(\Phi\bX) \leq 1$ 
  and 
 \begin{align*} 
   \sum_{\bX\in \ffield^{T\times M}} p'_X(\bX) & = 
   \sum_{\bX\in \ffield^{T\times M}} p(\Phi \bX) \\ 
   & = \sum_{\bX\in \Phi \ffield^{T\times M}} p(\bX) \\ 
   & = \sum_{\bX\in \ffield^{T\times M}} p(\bX) \\ 
   & = 1. 
 \end{align*} 
 Let $p_Y$ and $p_Y'$ be the PMF of $Y$ when the inputs 
 are $p_X$ and $p'_X$, respectively. We have 
 \begin{align*} 
  p'_Y(\bY) & = \sum_{\bX\in \ffield^{T\times M}} p'_X(\bX) P_{Y|X}(\bY|\bX) \nonumber \\ 
	& \locequa{=}  \sum_{\bX\in \ffield^{T\times M}}p(\Phi\bX) P_{Y|X}(\Phi\bY|\Phi\bX)  \\ 
	& \locequa{=} \sum_{\bX'\in \ffield^{T\times M}}p(\bX') P_{Y|X}(\Phi\bY|\bX') \\ 
	& = p_Y(\Phi\bY). \nonumber  
 \end{align*} 
 where (a) follows from Cor.~\ref{prop:symm} 
 and $p'_X(\bX)=p_X(\Phi\bX)$,  
 and (b) follows by letting $\bX'=\Phi\bX$. Therefore, 
 \begin{align*} 
  I(X;Y)|_{p'_X} & = \sum_{\bX}p'_X(\bX) \sum_{\bY} P(\bY|\bX) 
  \log_2 \frac{P(\bY|\bX)}{p'_Y(\bY)} \nonumber \\ 
	& \locequa{=}  \sum_{\bX}p(\Phi\bX) \sum_{\bY} P(\Phi\bY|\Phi\bX) \log_2 \frac{P(\Phi\bY|\Phi\bX)}{p(\Phi\bY)} \\  
	& = \sum_{\bX'}p(\bX') \sum_{\bY'} P(\bY'|\bX') \log_2 
        \frac{P(\bY'|\bX')}{p(\bY')} \nonumber \\ 
	& = I(X;Y)|_{p_X}, \nonumber 
 \end{align*} 
 where (c) follows from Cor.~\ref{prop:symm}.

 The last equality in the lemma can be proved 
 similarly. First, 
 \begin{align*} 
   p'_{\lspan{X}}(U) & = \sum_{\bX: \lspan{\bX} = U}p'_X(\bX)\nonumber  \\ 
   & = \sum_{\bX: \lspan{\bX} = U}p_X(\Phi\bX) \nonumber \\ 
   & \locequa{=}  \sum_{\bX: \lspan{\bX} = \Phi U}p_X(\bX') \\ 
   & = p_{\lspan{X}}(\Phi U),\nonumber  
 \end{align*} 
 where (d) follows from Lemma~\ref{lm:ia}. 
 Let $P'_{\lspan{Y}|\lspan{X}}(V|U)$ be the transition 
 probability when the input is $p_X'$. 
 For $U\leq \ffield^T$ with $p_{\lspan{X}}(U)>0$, 
 \begin{align*} 
   \lefteqn{ P'_{\lspan{Y}|\lspan{X}}(V|U)} \\ & = \frac{\sum_{\bX, 
       \bY:\lspan{\bX} = U,\lspan{\bY} =V} P_{Y|X}(\bY |\bX) 
     p'_X(\bX)} { p'_{\lspan{X}}(U)} \\ 
   & = \frac{\sum_{\bX, 
       \bY:\lspan{\bX} = U,\lspan{\bY} =V} P_{Y|X}(\Phi\bY 
     |\Phi\bX) 
     p_X(\Phi\bX)} { p_{\lspan{X}}(\Phi U)} \\ 
   & = P_{\lspan{Y}|\lspan{X}}(\Phi V|\Phi U). 
 \end{align*} 
 Hence,  
 \begin{align*} 
   p'_{\lspan{Y}}(V) & = \sum_{U} 
   P'_{\lspan{Y}|\lspan{X}}(V|U)p'_{\lspan{X}}(U) \\ 
   & = \sum_{U} P_{\lspan{Y}|\lspan{X}}(\Phi V|\Phi U) 
   p_{\lspan{X}}(\Phi U) \\ 
   & = p_{\lspan{Y}}(\Phi V). 
 \end{align*} 
 Therefore, 
 \begin{align*} 
  \lefteqn{I(\lspan{X};\lspan{Y})|_{p'_X}} \\ & = 
  \sum_{U}p'_{\lspan{X}}(U) \sum_{V} P'(V|U)  
  \log_2 \frac{P'(V|U)}{P'_{\lspan{Y}}(V)}  \\ 
  & = \sum_{U}p_{\lspan{X}}(\Phi U) \sum_{V} P(\Phi V|\Phi U)  
  \log_2 \frac{P(\Phi V|\Phi U)}{p_{\lspan{Y}}(\Phi V)}  \\ 
  & = I(\lspan{X};\lspan{Y})|_{p_X}. 
 \end{align*} 
\end{IEEEproof}

\begin{IEEEproof}[Proof of Therem \ref{the:diq}] 
  Consider a LOC with inaction period $T$. 
  Let $p$ be an optimal input distribution for the 
  channel. For $\Phi\in \Fr(\ffield^{T\times T})$, define $p^{\Phi}$ as 
  $p^{\Phi}(\bX) = p(\Phi\bX)$.  By Lemma \ref{lemma:mc1}, 
  $p^{\Phi}(\bX)$ also achieves the capacity of the LOC. 
  Define $p^*$ as $$p^*(\bX) = 
 \frac{1}{|\Fr(\ffield^{T\times T})|}\sum_{\Phi\in \Fr(\ffield^{T\times T})} 
 p^{\Phi}(\bX).$$ By the concavity of the mutual information, 
 we know $p^*$ is also an optimal input for the channel.  
 
 Now we show that $p^*$ is $\alpha$-type. 
 Consider $\bX,\bX'\in\ffield^{T\times M}$ with 
 $\lspan{\bX^\tr} = \lspan{\bX'^\tr}$. By 
 Lemma~\ref{lemma:jingle}, there exists $\Phi_0\in 
 \Fr(\ffield^{T\times T})$ such that $\bX'=\Phi_0\bX$.  
 We have  
 \begin{align*} 
   p^*(\Phi_0\bX) & = \frac{1}{|\Fr(\ffield^{T\times T})|}\sum_{\Phi\in 
     \Fr(\ffield^{T\times T})} p^{\Phi}(\Phi_0\bX) \nonumber \\ & = 
   \frac{1}{|\Fr(\ffield^{T\times T})|}\sum_{\Phi\in \Fr(\ffield^{T\times T})} 
   p^{\Phi\Phi_0}(\bX) \nonumber \\ & = p^*(\bX) . 
 \end{align*} 
 where in the last equality we use $\Fr(\ffield^{T\times T}) = \Phi_0 
 \Fr(\ffield^{T\times T})$. 
\end{IEEEproof}

\section{Subspace Coding for Linear Operator Channels} 
\label{sec:subspace} 
 
Subspace coding was first proposed for noncoherent transmission of 
RLCNs. Here we generalize the idea to LOCs and study subspace coding 
from a general way. 
 
\subsection{Subspace Degradation of LOCs} 
 
In this section, we are interested in the Markov chain 
$\lspan{X}\rightarrow X \rightarrow Y \rightarrow \lspan{Y}$.  The 
transition probability from $X$ to $Y$ is given by (\ref{eq:trans}). 
The transition probability from $Y$ to $\lspan{Y}$ is deterministic: 
\begin{equation*} 
  P_{\lspan{Y}|Y}(V|\bY) = \left\{\begin{array}{ll}1 & 
      \lspan{\bY}=V \\ 0 & \ow. \end{array}\right. 
\end{equation*} 
Applying the property of Markov chain, we further know 
\begin{align*} 
  P_{\lspan{Y}|X}(V|\bX) & = \sum_{\bY} 
  P_{\lspan{Y}|Y}(V|\bY)P_{Y|X}(\bY|\bX) \\ 
  & = \sum_{\bY:\lspan{\bY}=V} P_{Y|X}(\bY|\bX). 
\end{align*} 
The transition probability $P_{X|\lspan{X}}$ is  
undetermined for a LOC.

\begin{definition} 
Consider $\loc(H,T)$ with transition probability $P_{Y|X}$. 
Given a transition probability $P_{X|\lspan{X}}$, we have a new 
channel law given by 
\begin{align} 
  P_{\lspan{Y}|\lspan{X}}(V|U) & =\sum_{\bX}P_{\lspan{Y}|X}(V|\bX) P_{X|\lspan{X}}(\bX|U) \nonumber \\ & = \sum_{\bX:\lspan{\bX}=U}\sum_{\bY:\lspan{\bY}=V} P_{Y|X}(\bY|\bX)P_{X|\lspan{X}}(\bX|U). \label{eq:decom} 
\end{align} 
This channel, called a subspace degradation of $\loc(H,T)$, takes 
subspaces as input and output. 
\end{definition}

A subspace degradation of $\loc(H,T)$ is identified by 
$P_{X|\lspan{X}}$. We take $\lspan{X}$ and $\lspan{Y}$ as the input 
and output of a subspace degradation, respectively.  The mutual 
information between $\lspan{X}$ and $\lspan{Y}$ can be written as a function of $p_{\lspan{X}}$ and $P_{\lspan{Y}|\lspan{X}}$, 
in which $P_{\lspan{Y}|\lspan{X}}$, given in \eqref{eq:decom}, is a 
function of $P_{X|\lspan{X}}(\bX|U)$. The capacity of a subspace 
degradation of a LOC is $\max_{p_{\lspan{X}}} I(\lspan{Y}, 
\lspan{X})$.  Therefore, the maximum achievable rate of subspace 
degradations of $\loc(H,T)$ is 
\begin{align*} 
  C_{\subs}(H,T)  = 
  \max_{p_{X|\lspan{X}}}\max_{p_{\lspan{X}}}I(\lspan{X};\lspan{Y}). 
\end{align*} 
The rate $C_{\subs}(H,T)$ is achievable since 
$\max_{p_{\lspan{X}}}I(\lspan{X};\lspan{Y})$ is 
achievable for any given $p_{X|\lspan{X}}$. 
 
\begin{lemma}\label{lm:ss} 
  For $\loc(H,T)$, $I(\lspan{X};\lspan{Y})$ is determined by $p_X$ and 
  we can write 
  \begin{equation*}
    C_{\subs}(H,T) = \max_{p_X}I(\lspan{X};\lspan{Y}). 
  \end{equation*} 
\end{lemma}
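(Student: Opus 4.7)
The plan is to prove the lemma in two steps: first, to show that the joint distribution of $(\lspan{X},\lspan{Y})$ is a function of $p_X$ alone, so that $I(\lspan{X};\lspan{Y})$ is well defined as a functional of $p_X$; second, to show the two optimization problems are equivalent via the obvious correspondence between joint input distributions and marginal-conditional pairs.

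For the first step, I would observe that from the definition \eqref{eq:spaced} the marginal $p_{\lspan{X}}$ is determined by $p_X$, and the joint distribution can be written directly as
\begin{equation*}
  p_{\lspan{X},\lspan{Y}}(U,V) = \sum_{\bX:\,\lspan{\bX}=U}\ \sum_{\bY:\,\lspan{\bY}=V} p_X(\bX)\, P_{Y|X}(\bY|\bX),
\end{equation*}
using $P_{\lspan{Y}|Y}(V|\bY)=\mathbf{1}[\lspan{\bY}=V]$ along the Markov chain $\lspan{X}\to X\to Y\to \lspan{Y}$. Thus $I(\lspan{X};\lspan{Y})$ is a function of $p_X$; no conditional distribution $P_{X|\lspan{X}}$ needs to be specified separately to compute it.

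For the second step, I would exhibit a bijection between the feasible sets of the two maximizations. Given any $p_X$ on $\ff^{T\times M}$, set $p_{\lspan{X}}(U)=\sum_{\bX:\lspan{\bX}=U}p_X(\bX)$ and, on the support of $p_{\lspan{X}}$, let $P_{X|\lspan{X}}(\bX|U)=p_X(\bX)/p_{\lspan{X}}(U)$ (defined arbitrarily off-support); then $p_X(\bX)=p_{\lspan{X}}(\lspan{\bX})P_{X|\lspan{X}}(\bX|\lspan{\bX})$. Conversely, any choice of $p_{\lspan{X}}$ together with $P_{X|\lspan{X}}$ defines a legitimate $p_X$ through the same product. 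Plugging this decomposition into \eqref{eq:decom} and expanding, the resulting $p_{\lspan{X},\lspan{Y}}$ matches the expression above. Hence the two optimizations range over identical feasible sets and have identical objective values, giving $\max_{p_{X|\lspan{X}}}\max_{p_{\lspan{X}}}I(\lspan{X};\lspan{Y}) = \max_{p_X}I(\lspan{X};\lspan{Y})$.

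There is essentially no substantive obstacle; the only mild subtlety is handling subspaces $U$ with $p_{\lspan{X}}(U)=0$, where $P_{X|\lspan{X}}(\cdot|U)$ is not pinned down by $p_X$, but this ambiguity does not affect either $p_{\lspan{X},\lspan{Y}}$ or the value of $I(\lspan{X};\lspan{Y})$, so it can be dealt with in one line. The whole argument is little more than rewriting the chain rule of probability.
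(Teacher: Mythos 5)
Your proposal is correct and follows essentially the same route as the paper: express the relevant distributions (the paper uses $p_{\lspan{X}}$ and $P_{X|\lspan{X}}$ on its support, you use the joint $p_{\lspan{X}\lspan{Y}}$) in terms of $p_X$, then match the two optimizations via the factorization $p_X(\bX)=p_{\lspan{X}}(\lspan{\bX})P_{X|\lspan{X}}(\bX|\lspan{\bX})$ in both directions, with the same one-line treatment of subspaces of zero probability.
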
 
\begin{IEEEproof}
  For a fixed LOC, we know that $I(\lspan{X};\lspan{Y})$ is determined 
  by $p_{\lspan{X}}$ and $P_{X|\lspan{X}}$. We show that 
  $p_{\lspan{X}}(U)$ and $P_{X|\lspan{X}}(\bX|U)$ appeared in 
  $I(\lspan{X};\lspan{Y})$ are determined by $p_X$.  First, we obtain 
  $p_{\lspan{X}}$ from $p_X$ as shown in (\ref{eq:spaced}).  Second, since 
\begin{align*} 
  P_{X|\lspan{X}}(\bX|U) p_{\lspan{X}}(U) & = 
  \Pr\{X=\bX,\lspan{X}=U\} \\ 
  & = \left\{\begin{array}{ll}p_{X}(\bX) & \lspan{\bX}=U \\ 0 & 
        \ow. \end{array}\right., 
\end{align*} 
we have 
\begin{equation}\label{eq:xxx} 
  P_{X|\lspan{X}}(\bX|U) = 
  \left\{\begin{array}{ll}\frac{p_{X}(\bX)}{p_{\lspan{X}}(U)} 
      & p_{\lspan{X}}(U)\neq 0,\ \lspan{\bX}=U \\ 0 & 
        \lspan{\bX}\neq U. \end{array}\right. 
\end{equation} 
That means, for $U$ with $p_{\lspan{X}}(U)>0$, 
$P_{X|\lspan{X}}(\bX|U)$ is determined by $p_X$.  Moreover, if 
$p_{\lspan{X}}(U)= 0$, $P_{X|\lspan{X}}(\bX|U)$ does not appear in 
$I(\lspan{X};\lspan{Y})$. Thus, $I(\lspan{X};\lspan{Y})$ can be 
regarded as a function with only one variable  $p_X$.  
This also implies that 
\begin{equation*} 
  C_{\subs}(H,T) \geq  \max_{p_X}I(\lspan{X};\lspan{Y}). 
\end{equation*} 
 
One the other hand, given $P_{X|\lspan{X}}$ and $p_{\lspan{X}}$, we have a PMF 
of $X$ given by  
\begin{equation*} 
  p_X(\bX)= p_{\lspan{X}}(\lspan{\bX})P_{X|\lspan{X}}(\bX|\lspan{\bX}), 
\end{equation*} 
which establishes that 
\begin{equation*} 
  C_{\subs}(H,T) \leq  \max_{p_X}I(\lspan{X};\lspan{Y}). 
\end{equation*} 
The proof is completed. 
\end{IEEEproof} 
 
In the following, we will treat $I(\lspan{X};\lspan{Y})$ as a function 
of $p_X$ for a given LOC. 
 
\begin{definition}\label{def:uniform} 
  $\loc(H,T)$ is \emph{uniform} if there exists a function 
  $\mu:\Pj(\ffield^T)\times \Pj(\ffield^T)\rightarrow 
  [0\ 1]$ such that  
\begin{equation*} 
   \Pr\{\bY=\bX H\} = \left\{ \begin{array}{ll} 
    \mu(\lspan{\bX},\lspan{\bY}) &  
    \lspan{\bY}\subseteq 
    \lspan{\bX} \\ 0 & \ow, \end{array}\right. \label{eq:uni0qq} 
\end{equation*} 
\end{definition} 
 
We can check that the three examples of LOCs in \S\ref{sec:examples} 
are all uniform. $C_{\subs}(H,T)$ gives a lower bound of 
$C(H,T)$. Moreover, this lower bound is tight for uniform LOCs. 
 
\begin{proposition}\label{lemma:uniform} 
  For a LOC, $I(X;Y)\geq I(\lspan{X};\lspan{Y})$ and the equality is 
  achieved by uniform LOCs. 
\end{proposition}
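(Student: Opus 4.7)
The plan is to split the proposition into two parts. For the general inequality, I would exploit the fact that $\lspan{X}$ is a deterministic function of $X$ and $\lspan{Y}$ is a deterministic function of $Y$. This makes $\lspan{X}\to X\to Y\to\lspan{Y}$ a trivial Markov chain, so the data processing inequality (applied twice, or via the identity
\begin{equation*}
I(X;Y) = I(\lspan{X};\lspan{Y}) + I(X;\lspan{Y}\mid\lspan{X}) + I(X;Y\mid\lspan{Y})
\end{equation*}
together with nonnegativity of conditional mutual information) gives $I(X;Y)\geq I(\lspan{X};\lspan{Y})$ for every input distribution on an arbitrary LOC.

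For equality in the uniform case, I would show that both correction terms in the decomposition above vanish, regardless of $p_X$. Using $P_{Y|X}(\bY|\bX)=\mu(\lspan{\bX},\lspan{\bY})\,\mathbf{1}[\lspan{\bY}\subseteq\lspan{\bX}]$ and summing over $\bY$ with $\lspan{\bY}=V$, I obtain
\begin{equation*}
P_{\lspan{Y}|X}(V|\bX) = N(V)\,\mu(\lspan{\bX},V)\,\mathbf{1}[V\subseteq\lspan{\bX}],
\end{equation*}
where $N(V)=|\{\bY:\lspan{\bY}=V\}|$ depends only on $V$. Since this depends on $\bX$ only through $\lspan{\bX}$, the chain $X\to\lspan{X}\to\lspan{Y}$ is Markov, and hence $I(X;\lspan{Y}\mid\lspan{X})=0$. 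For the second Markov chain, I would compute, on the support,
\begin{equation*}
P(Y=\bY\mid X=\bX,\lspan{Y}=V) = \frac{\mathbf{1}[\lspan{\bY}=V]\,\mu(\lspan{\bX},V)}{N(V)\,\mu(\lspan{\bX},V)} = \frac{\mathbf{1}[\lspan{\bY}=V]}{N(V)},
\end{equation*}
which is free of $\bX$. Marginalizing out $\bX$ yields the same expression for $P(Y\mid\lspan{Y})$, so $X\to\lspan{Y}\to Y$ is Markov and $I(X;Y\mid\lspan{Y})=0$. Substituting both zeros into the identity proves the equality.

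The main obstacle is the second Markov chain: the ratio above is a $0/0$ outside the support $\{V\subseteq\lspan{\bX},\,\mu(\lspan{\bX},V)>0\}$, so the argument needs to be framed carefully in terms of the conditional distribution on the support and a verification that the events excluded have zero probability under every input, hence do not contribute to $I(X;Y\mid\lspan{Y})$. Once both Markov properties are secured, the conclusion $I(X;Y)=I(\lspan{X};\lspan{Y})$ for every $p_X$ on a uniform LOC is immediate, and in particular yields $C(H,T)=C_{\subs}(H,T)$ by maximizing over $p_X$ on both sides.
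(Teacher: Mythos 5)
Your proof is correct, and it takes a genuinely different route from the paper's. The paper groups the sum defining $I(X;Y)$ according to the pair $(\lspan{\bX},\lspan{\bY})$ and invokes the log-sum inequality once, obtaining $I(X;Y)\geq I(\lspan{X};\lspan{Y})$; equality for uniform LOCs is then verified by checking the log-sum equality condition, namely that the ratio $P_{Y|X}(\bY|\bX)/p_Y(\bY)$ is constant on each group, which follows from the computation $p_Y(\bY)=\sum_{U'\geq V}\mu(U',V)\,p_{\lspan{X}}(U')$ for $\lspan{\bY}=V$. (Incidentally, the inequality sign displayed in the paper's proof points the wrong way relative to both the statement and what the log-sum inequality actually yields; your derivation has the correct direction.) You instead use the exact chain-rule identity $I(X;Y)=I(\lspan{X};\lspan{Y})+I(X;\lspan{Y}\mid\lspan{X})+I(X;Y\mid\lspan{Y})$ and show that, for uniform LOCs, both correction terms vanish by exhibiting the two Markov chains $X\to\lspan{X}\to\lspan{Y}$ and $X\to\lspan{Y}\to Y$; your count $N(V)$ is exactly $\cmat{N}{\dim V}$ by Lemma~\ref{lm:ia}, so it indeed depends only on $V$, and your handling of the off-support $0/0$ cases is the right way to make the second chain rigorous. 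The two arguments encode the same underlying fact, but yours is more structural: it makes explicit that for a uniform LOC the pair $(\lspan{X},\lspan{Y})$ is a sufficient statistic --- given $\lspan{Y}$ the output is uniform over the matrices spanning it, independently of the input, and the subspace transition depends on the input only through its column space --- whereas the paper's argument is a single computational verification. Both establish equality for every input distribution, hence $C(H,T)=C_{\subs}(H,T)$ for uniform LOCs, as you note.
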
 
\begin{IEEEproof} 
  See \S\ref{sec:lb}. 
\end{IEEEproof} 
 
\subsection{Subspace Coding} 
\label{sec:subspa} 
 
Since a subspace degradation of a LOC takes subspaces as input and 
output, the coding for this channel is called \emph{subspace coding}, 
which was first used by K\"otter and Kschischang for random linear 
network coding \cite{koetter08}. We give a generalized definition of 
subspace coding as follows.

Let $M^*= \min\{T, M\}$ and recall that $\Pj(M^*,\ffield^T)$ is the 
set of subspaces of $\ffield^T$ with dimension less than or 
equal to $M^*$. The $n$th Cartesian power of 
$\Pj(M^*,\ffield^T)$ is $\Pj^n(M^*,\ffield^T)$.  An $n$-block 
subspace code is a subset of $\Pj^n(M^*,\ffield^T)$. 
Recall that {the Grassmannian} $\Gr(r,\ffield^T)$ is the set of all 
$r$-dimensional subspaces of $\ffield^T$.  An 
$r$-dimensional (constant-dimensional) subspace code is a 
subset of $\Gr^n(r,\ffield^T)$, the $n$th Cartesian power of 
$\Gr(r,\ffield^T)$.

For $\loc(H,T)$, we can choose a transition probability 
$P_{X|\lspan{X}}$ and apply a subspace code to its subspace 
degradation with $P_{X|\lspan{X}}$.  In other word, we transmit $U\in 
\Pj(M^*,\ffield^T)$ through the LOC by randomly 
choosing a matrix $\bX$ according to the transition probability 
$P_{X|\lspan{X}}(\bX|U)$.  
The decoding of a subspace code only consider the subspace spanned by 
the channel output. So, for two reception $\bY$ and $\bY'$ with 
$\lspan{\bY}=\lspan{\bY'}$, a subspace code decoder treats them as the 
same. The maximum achievable rate of subspace  coding for $\loc(H,T)$ is given by $C_{\subs}(H,T)$.

\subsection{A Decomposition of Mutual Information} 
 
\begin{theorem}\label{the:diq2} 
  For a LOC there exists an $\alpha$-type input that 
  maximizes $I(\lspan{X};\lspan{Y})$. 
\end{theorem}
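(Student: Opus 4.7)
The plan is to adapt the proof of Theorem~\ref{the:diq} essentially verbatim, with Lemma~\ref{lemma:mc1}(iii) (the invariance of $I(\lspan{X};\lspan{Y})$ under $p_X\mapsto p_X(\Phi\,\cdot\,)$) playing the role that Lemma~\ref{lemma:mc1}(ii) played there. First, pick any $p$ attaining $C_{\subs}(H,T)$, which by Lemma~\ref{lm:ss} equals $\max_{p_X}I(\lspan{X};\lspan{Y})$. For each $\Phi\in\Fr(\ffield^{T\times T})$ define $p^{\Phi}(\bX)=p(\Phi\bX)$; Lemma~\ref{lemma:mc1}(iii) guarantees that every $p^{\Phi}$ also attains $C_{\subs}(H,T)$. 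Then form the symmetrized distribution $p^{*}(\bX)=\frac{1}{|\Fr(\ffield^{T\times T})|}\sum_{\Phi\in\Fr(\ffield^{T\times T})}p^{\Phi}(\bX)$.

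The verification that $p^{*}$ is $\alpha$-type is identical to the corresponding step in the proof of Theorem~\ref{the:diq}. If $\bX,\bX'\in\ffield^{T\times M}$ satisfy $\lspan{\bX^{\tr}}=\lspan{\bX'^{\tr}}$, then by Lemma~\ref{lemma:jingle} there exists $\Phi_{0}\in\Fr(\ffield^{T\times T})$ with $\bX'=\Phi_{0}\bX$, so $p^{*}(\bX')=\frac{1}{|\Fr(\ffield^{T\times T})|}\sum_{\Phi}p(\Phi\Phi_{0}\bX)=p^{*}(\bX)$ after re-indexing the sum using the group structure of $\Fr(\ffield^{T\times T})$. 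Hence $p^{*}$ can be written in the form of \eqref{eq:q}, which by Lemma~\ref{lm:rw} is the defining property of an $\alpha$-type PMF.

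The remaining step, showing $I(\lspan{X};\lspan{Y})|_{p^{*}}\geq I(\lspan{X};\lspan{Y})|_{p}=C_{\subs}(H,T)$, is where I expect the main difficulty. In Theorem~\ref{the:diq} the analogous step invokes the standard concavity of $I(X;Y)$ in $p_{X}$ for a fixed channel $P_{Y|X}$. Here the subspace transition $P_{\lspan{Y}|\lspan{X}}$ itself depends on $P_{X|\lspan{X}}$ and hence on $p_{X}$, so concavity of $I(\lspan{X};\lspan{Y})$ in $p_{X}$ is not an off-the-shelf fact. I would push this through by writing $I(\lspan{X};\lspan{Y})$ as a KL divergence $D(p_{\lspan{X},\lspan{Y}}\,\|\,p_{\lspan{X}}\otimes p_{\lspan{Y}})$, observing that $p_{X}\mapsto p_{\lspan{X},\lspan{Y}}$ is linear (so the first argument of the divergence is the average of the per-$\Phi$ joints), and exploiting the $\Fr(\ffield^{T\times T})$-symmetry from Lemma~\ref{lemma:mc1}(iii) to control the averaged reference measure $p^{*}_{\lspan{X}}\otimes p^{*}_{\lspan{Y}}$. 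A direct computation shows that $p^{*}_{\lspan{X}}$ and $p^{*}_{\lspan{Y}}$ are each uniform on every Grassmannian $\Gr(r,\ffield^{T})$ (so they are permuted trivially by the $\Phi$-action), which is what is needed to combine joint convexity of KL divergence with the invariance of $I(\lspan{X};\lspan{Y})$ along the orbit $\{p^{\Phi}\}$. Once that concavity step is in place, $I(\lspan{X};\lspan{Y})|_{p^{*}}\geq C_{\subs}(H,T)$, equality is forced, and $p^{*}$ is the desired $\alpha$-type optimal input.
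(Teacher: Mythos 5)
Your construction follows the paper's route exactly (symmetrize an optimal input over $\Fr(\ffield^{T\times T})$ via Lemma~\ref{lemma:mc1}(iii), then use Lemma~\ref{lemma:jingle} to see the symmetrized $p^*$ is $\alpha$-type), and you are right that the crux is the step the paper's one-line proof hides: Theorem~\ref{the:diq} uses concavity of $I(X;Y)$ in $p_X$ for the \emph{fixed} channel $P_{Y|X}$, whereas here the induced channel $P_{\lspan{Y}|\lspan{X}}$ depends on $P_{X|\lspan{X}}$ and hence on $p_X$. But your proposed repair does not close this gap. Joint convexity of KL divergence bounds divergences at averaged arguments from \emph{above}, so at best it upper-bounds $I(\lspan{X};\lspan{Y})|_{p^*}$, while you need a \emph{lower} bound (namely $I|_{p^*}\geq I|_{p}$); moreover the average $\frac{1}{K}\sum_{\Phi}p^{\Phi}_{\lspan{X}}\otimes p^{\Phi}_{\lspan{Y}}$ is not the product $p^{*}_{\lspan{X}}\otimes p^{*}_{\lspan{Y}}$, so the (correct) observation that the symmetrized marginals are uniform on each Grassmannian does not set up the inequality you want.

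The gap is not merely technical: the inequality $I(\lspan{X};\lspan{Y})|_{p^*}\geq I(\lspan{X};\lspan{Y})|_{p}$ is simply false for general $p$, so no argument using only the $\Phi$-invariance plus a convexity fact can succeed. Take $q=2$, $T=M=2$, $N=1$, $H=(1,h)^{\tr}$ with $h$ uniform on $\{0,1\}$, and let $p_X$ be uniform on the three rank-one matrices $v_1e_1^{\tr}$, $v_2e_2^{\tr}$, $v_3e_2^{\tr}$, where $v_1,v_2,v_3$ are the nonzero columns of $\ffield^2$ and $e_1^{\tr}=(1,0)$, $e_2^{\tr}=(0,1)$. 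For input $vw^{\tr}$ one has $\lspan{Y}=\lspan{v}$ iff $w^{\tr}H\neq 0$, so the subspace channel is noiseless from $v_1e_1^{\tr}$ and a $\tfrac12$-erasure from the other two, giving $I(\lspan{X};\lspan{Y})=\log_2 3-\tfrac13\approx 1.25$; the symmetrization $p^*$ (which is $\alpha$-type, with weights $\tfrac13,\tfrac23$ on the row spaces $\lspan{e_1^{\tr}},\lspan{e_2^{\tr}}$) makes every one-dimensional $\lspan{X}$ a $\tfrac13$-erasure input and yields only $\tfrac23\log_2 3\approx 1.06$. Thus all members of the orbit $\{p^{\Phi}\}$ attain the larger value while their average attains a strictly smaller one, so $I(\lspan{X};\lspan{Y})$ is not concave in $p_X$ and symmetrization can strictly hurt. (The theorem itself is not contradicted: the $\alpha$-type input uniform on $\{ve_1^{\tr}\}$ achieves $\log_2 3$ here.) Any correct completion must therefore exploit the optimality of $p$ or compare directly against a best $\alpha$-type input rather than rely on "average the orbit and invoke concavity"; note that the paper's own terse proof of Theorem~\ref{the:diq2} suffers from the same unaddressed step, so your instinct about where the difficulty lies is sound even though the proposed fix does not work.
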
 
\begin{IEEEproof} 
  This proposition can be proved similar to Theorem~\ref{the:diq} by 
  applying Lemma~\ref{lemma:mc1}. 
\end{IEEEproof} 
 
By Theorem~\ref{the:diq2}, we know  
\begin{equation*} 
  C_{\subs}(H,T) = 
  \max_{p_{X}:\alpha\text{-type}}I(\lspan{X};\lspan{Y}). 
\end{equation*} 
That is, we only need to consider $\alpha$-type inputs to find $C_{\subs}(H,T)$.  
 
For a random matrix $X$, recall that $\rank(X)$ is the RV 
representing the rank of $X$ (see (\ref{eq:rankd}) for the 
PMF). Similar to Lemma~\ref{lm:ss}, for a LOC 
 $I(\rank(X);\rank(Y))$ is determined by $p_X$ and $P_{Y|X}$. 
Define 
\begin{align} 
    {J}(\rank(X);\rank(Y)) & =\sum_{s,r}   
  p_{\rank(X)\rank(Y)}(r,s) 
  \log_2\frac{\cmat{T}{s}}{\cmat{r}{s}} \nonumber \\ 
  & = 
  \E\left[\log_2\frac{\cmat{T}{\rank(Y)}}{\cmat{\rank(X)}{\rank(Y)}}\right], \label{eq:112345d}   
\end{align} 
where $ p_{\rank(X)\rank(Y)}(r,s)$ can be derived using $p_X$ and $P_{Y|X}$.

\begin{lemma} \label{the:8glsie} 
  For a LOC with $\alpha$-type inputs,  
  \begin{equation} \label{eq:i8gaqif} 
    I(\lspan{X};\lspan{Y}) =  I(\rank(X);\rank(Y)) + 
    {J}(\rank(X);\rank(Y)).  
  \end{equation} 
\end{lemma}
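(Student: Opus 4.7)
The plan is to split
\[
I(\lspan{X};\lspan{Y}) = H(\lspan{Y}) - H(\lspan{Y}|\lspan{X})
\]
using $\rank(Y)$ as an intermediate variable, obtaining
\[
I(\lspan{X};\lspan{Y}) = \bigl[H(\rank(Y)) - H(\rank(Y)|\lspan{X})\bigr] + \bigl[H(\lspan{Y}|\rank(Y)) - H(\lspan{Y}|\rank(Y),\lspan{X})\bigr],
\]
and then to establish, using only the $\alpha$-type assumption, that (a)~$H(\rank(Y)|\lspan{X}) = H(\rank(Y)|\rank(X))$, (b)~$H(\lspan{Y}|\rank(Y)) = \E[\log_2 \gco{T}{\rank(Y)}]$, and (c)~$H(\lspan{Y}|\rank(Y),\lspan{X}) = \E[\log_2 \gco{\rank(X)}{\rank(Y)}]$. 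Combining (a)--(c) with the identity $\gco{T}{s}/\gco{r}{s} = \cmat{T}{s}/\cmat{r}{s}$ (immediate from the definition of the Gaussian binomial) will yield \eqref{eq:i8gaqif}.

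The unifying observation is that, by Lemma~\ref{lm:rw}, an $\alpha$-type $p_X$ may be written $p_X(\bX) = Q(\lspan{\bX^\tr})/\cmat{T}{\rank(\bX)}$, and since $\lspan{(\Phi\bX)^\tr} = \lspan{\bX^\tr \Phi^\tr} = \lspan{\bX^\tr}$ for every $\Phi \in \Fr(\ffield^{T\times T})$, the distribution of $X$ is invariant under the left action of $\Fr(\ffield^{T\times T})$. Because $Y = XH$, this invariance propagates to $(X,Y)\mapsto(\Phi X,\Phi Y)$ and hence to the marginal law of $\lspan{Y}$ on $\Pj(\ffield^T)$. Since $\Fr(\ffield^{T\times T})$ acts transitively on $\Gr(s,\ffield^T)$, the conditional law of $\lspan{Y}$ given $\rank(Y) = s$ is uniform on $\Gr(s,\ffield^T)$, which has cardinality $\gco{T}{s}$, giving (b).

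For (a) and (c), I would fix $V \in \Gr(r,\ffield^T)$ and a full column rank $T\times r$ matrix $\mathbf{B}$ with $\lspan{\mathbf{B}} = V$. Every $\bX$ with $\lspan{\bX} = V$ factors uniquely as $\bX = \mathbf{B}\mathbf{C}$ with $\mathbf{C} = \bX/\mathbf{B} \in \ffield^{r\times M}$ of full row rank, and a short check shows $\lspan{\bX^\tr} = \lspan{\mathbf{C}^\tr}$. Counting matrices with a prescribed row space (via the lemmas of \S\ref{sec:pre}) gives $p_{\lspan{X}}(V) = \cmat{r}{r}\,p_{\rank(X)}(r)/\cmat{T}{r}$, so
\[
p_{X|\lspan{X}}(\bX|V) \;=\; \frac{Q(\lspan{\mathbf{C}^\tr})}{\cmat{r}{r}\,p_{\rank(X)}(r)};
\]
this conditional law of $\mathbf{C}$ depends neither on $V$ nor on $\mathbf{B}$, and is invariant under $\mathbf{C} \mapsto \Psi\mathbf{C}$ for every $\Psi \in \Fr(\ffield^{r\times r})$, because $\lspan{(\Psi\mathbf{C})^\tr} = \lspan{\mathbf{C}^\tr \Psi^\tr} = \lspan{\mathbf{C}^\tr}$. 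Since $\bY = \mathbf{B}(\mathbf{C}H)$ gives $\lspan{\bY} = \mathbf{B}\lspan{\mathbf{C}H}$, the invariance together with transitivity of $\Fr(\ffield^{r\times r})$ on $\Gr(s,\ffield^r)$ forces $\lspan{Y}$ given $\lspan{X} = V$ and $\rank(Y) = s$ to be uniform on the $\gco{r}{s}$ subspaces of $V$ of dimension $s$, proving (c). The very fact that this conditional law of $\mathbf{C}$ does not depend on $V$ also shows that the conditional law of $\rank(Y) = \rank(\mathbf{C}H)$ given $\lspan{X}$ depends only on $r = \rank(X)$, which is (a).

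The main obstacle is item (c): one has to identify the correct change of variables $\bX = \mathbf{B}\mathbf{C}$ so that the $\alpha$-type structure, which is phrased in terms of the row space $\lspan{\bX^\tr}$, reappears as an $\Fr(\ffield^{r\times r})$-invariance on the factor $\mathbf{C}$. Once this is set up, (c) follows from transitivity on the smaller Grassmannian $\Gr(s,\ffield^r)$, and (a), (b) drop out as byproducts of the same symmetry argument.
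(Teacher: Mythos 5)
Your proposal is correct, and it reaches \eqref{eq:i8gaqif} by a genuinely different assembly than the paper's, although both rest on the same symmetry facts. The paper (\S\ref{sec:lb}) proves the explicit formulas $p_{\lspan{X}\lspan{Y}}(U,V)=p_{\rank(X)\rank(Y)}(r,s)\big/\bigl(\gco{T}{r}\gco{r}{s}\bigr)$, $p_{\lspan{X}}(U)=p_{\rank(X)}(r)/\gco{T}{r}$ and $p_{\lspan{Y}}(V)=p_{\rank(Y)}(s)/\gco{T}{s}$ --- i.e.\ exactly the conditional-uniformity statements you isolate as (a)--(c), obtained there from the $\alpha$-type invariance, the channel symmetry Cor.~\ref{prop:symm}, and the simultaneous-transitivity Lemma~\ref{lm:oop} acting on nested pairs $V\leq U$ --- and then substitutes these pmfs directly into the double sum defining $I(\lspan{X};\lspan{Y})$ and regroups. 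You instead use the chain rule $I(\lspan{X};\lspan{Y})=I(\lspan{X};\rank(Y))+I(\lspan{X};\lspan{Y}\,|\,\rank(Y))$ (valid since $\rank(Y)$ is a function of $\lspan{Y}$) and identify the conditional entropies as expected logarithms of Grassmannian counts, deriving the uniformity from the factorization $\bX=\mathbf{B}\mathbf{C}$ and the induced $\Fr(\ffield^{r\times r})$-invariance of the law of $\mathbf{C}$; this factorization plays the role of the paper's Prop.~\ref{the:symm} and its ``$/$'' operation, and your formula $p_{\lspan{X}}(V)=\cmat{r}{r}\,p_{\rank(X)}(r)/\cmat{T}{r}$ matches \eqref{eq:u}. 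Each route has its merits: the paper's computation produces the pmfs \eqref{eq:uv}--\eqref{eq:v} in closed form (reused elsewhere, e.g.\ in deriving \eqref{eq:v} via Lemma~\ref{lemma:8balfa}), while your decomposition makes the operational meaning of the two terms transparent --- $I(\rank(X);\rank(Y))$ is the rate carried by the ranks and $J$ is the entropy of a subspace drawn uniformly given the ranks, net of the residual ambiguity inside $\lspan{X}$ --- and avoids manipulating the double sum. Two small points to make explicit when writing it up: the step from invariance of the law of $\mathbf{C}$ to (a) and (c) needs the independence of $X$ and $H$ (so that, conditionally on $\lspan{X}=V$, $H$ keeps its marginal law and is independent of $\mathbf{C}$), and the conditioning arguments should be restricted to atoms of positive probability; neither is a real gap.
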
 
\begin{IEEEproof} 
  The proof is done by rewriting the formulation of 
  mutual information using the symmetric property and the 
  definition of $\alpha$-type inputs. See \S\ref{sec:lb} 
  for details. 
\end{IEEEproof} 
 
In (\ref{eq:i8gaqif}), $I(\rank(X);\rank(Y))$ is the mutual 
information of the ranks of transmitted and received matrices.  In 
other words, it is the rate transmitted using the matrix ranks.  The 
meaning of $J(\rank(X);\rank(Y))$ has an interpretation using set 
packing. 
The capacity contributed by $r$-dimensional transmissions and 
$s$-dimension receptions is $\log_2 
\frac{\cmat{T}{s}}{\cmat{r}{s}} = \log_2 {\gco{T}{s}}/{\gco{r}{s}}$,  
where $\gco{T}{s}$ is the total number of $s$-dimensional subspaces in 
$\ffield^T$, and $\gco{r}{s}$ is the total number of $s$-dimensional 
subspaces in an $r$-dimensional subspace. 
Treat an $s$-dimensional subspace in $\ffield^T$ as a set 
element.  
An $r$-dimension transmission can be regarded as a collection of 
$s$ dimensional subspaces that span it.  
Then, the \emph{maximum set packing} problem is looking for 
the maximum number of pairwise disjoint collections of $s$-dimensional 
subspaces that has cardinality $\gco{M}{r}$ and spans an 
$M$-dimensional subspace.

\subsection{Lower Bound of the Maximum Achievable Rate} 
 
Using Lemma~\ref{the:8glsie}, we  
derive two lower bounds of the maximum achievable rates of subspace 
coding that only depend on the rank distribution.

\begin{theorem}\label{lemma:low} 
  For  $\loc(H,T)$ with dimension $M\times N$ and $T\geq M$, 
  \begin{align} 
    \bar C_{\subs}(H,T) & \geq 
    \E\left[\log_2\frac{\cmat{T}{\rank(H)}}{\cmat{M}{\rank(H)}}\right]/(T\log_2 q)  
    \nonumber \\ & = (1-M/T)\E[\rank(H)]+ \epsilon(T,q), \label{eq:lou} 
  \end{align} 
  where 
  \begin{align*} 
    \epsilon(T,q)T\log_2q= \sum_{s} p_{\rank(H)}(s) \log_2 
  \frac{\cmatt{T}{s}}{\cmatt{M}{s}} <1.8. 
  \end{align*} 
  This lower bound is achieved by the $\alpha$-type input $p_X$ with $p_{\rank(X)}(M)=1$. 
\end{theorem}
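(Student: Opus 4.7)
The plan is to instantiate the decomposition in Lemma~\ref{the:8glsie} at the specific $\alpha$-type input that puts all its mass on full-rank $T\times M$ matrices, and then use the symmetric property in Corollary~\ref{lemma:cond} to recognize the resulting output rank distribution as the distribution of $\rank(H)$.

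First I would take $p_X$ to be the $\alpha$-type PMF with $p_{\rank(X)}(M)=1$; concretely, $p_X$ is uniform on $\Fr(\ffield^{T\times M})$, which is $\alpha$-type since every full column rank $T\times M$ matrix $\bX$ has $\lspan{\bX^\tr}=\ffield^M$. Because $T\geq M$, this input is well defined. Since $\rank(X)=M$ almost surely, the term $I(\rank(X);\rank(Y))$ in Lemma~\ref{the:8glsie} vanishes, and Lemma~\ref{the:8glsie} reduces to
\begin{equation*}
  I(\lspan{X};\lspan{Y}) \;=\; J(\rank(X);\rank(Y)) \;=\; \E\!\left[\log_2\frac{\cmat{T}{\rank(Y)}}{\cmat{M}{\rank(Y)}}\right].
\end{equation*}

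Next I would identify the distribution of $\rank(Y)$. For any full column rank $\bX$, Corollary~\ref{lemma:cond} gives $P_{\rank(Y)|X}(s|\bX)=\Pr\{\rank(\mathbf{D}H)=s\}$ for any $\rank(\bX)\times M$ matrix $\mathbf{D}$ with $\lspan{\mathbf{D}^\tr}=\lspan{\bX^\tr}=\ffield^M$. Such a $\mathbf{D}$ is an invertible $M\times M$ matrix, so $\rank(\mathbf{D}H)=\rank(H)$. Hence $p_{\rank(Y)}=p_{\rank(H)}$, giving
\begin{equation*}
  I(\lspan{X};\lspan{Y}) \;=\; \E\!\left[\log_2\frac{\cmat{T}{\rank(H)}}{\cmat{M}{\rank(H)}}\right],
\end{equation*}
and dividing by $T\log_2 q$ and combining with Lemma~\ref{lm:ss} yields the claimed lower bound on $\bar C_{\subs}(H,T)$.

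To obtain the second form, I would use $\cmat{m}{s}=\cmatt{m}{s}q^{ms}$ from \eqref{eq:speaker}, so that $\cmat{T}{s}/\cmat{M}{s}=(\cmatt{T}{s}/\cmatt{M}{s})\,q^{(T-M)s}$. Taking logs and expectations splits the rate into $(1-M/T)\E[\rank(H)]$ plus the correction $\epsilon(T,q)$ with $\epsilon(T,q)T\log_2 q=\sum_s p_{\rank(H)}(s)\log_2(\cmatt{T}{s}/\cmatt{M}{s})$. Since $T\geq M$, each factor of $\cmatt{T}{s}=\prod_{i=0}^{s-1}(1-q^{i-T})$ is at least the corresponding factor of $\cmatt{M}{s}$, so the ratio is $\geq 1$ and $\epsilon(T,q)\geq 0$.

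The only nontrivial piece is the universal upper bound $\epsilon(T,q)T\log_2 q<1.8$. Here I would bound $\cmatt{T}{s}\leq 1$ and lower-bound $\cmatt{M}{s}=\prod_{i=0}^{s-1}(1-q^{i-M})\geq \prod_{j=1}^{\infty}(1-q^{-j})\geq \prod_{j=1}^{\infty}(1-2^{-j})$, the last inequality coming from the worst case $q=2$. The Euler infinite product $\prod_{j\geq 1}(1-2^{-j})\approx 0.2887$ satisfies $-\log_2\prod_{j\geq 1}(1-2^{-j})<1.8$, giving the claim. This last numerical step (the Euler product estimate) is the main subtle point; the mutual-information manipulations and the identification $p_{\rank(Y)}=p_{\rank(H)}$ are direct consequences of the tools already established.
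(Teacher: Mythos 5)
Your proposal is correct and follows essentially the same route as the paper's proof: it instantiates Lemma~\ref{the:8glsie} at the full-rank $\alpha$-type input (so $I(\rank(X);\rank(Y))=0$), identifies $p_{\rank(Y)}=p_{\rank(H)}$, splits $\log_2(\cmat{T}{s}/\cmat{M}{s})$ via \eqref{eq:speaker}, and bounds the residual by the Euler product, which is exactly the content of Lemma~\ref{lm:eight}. The only cosmetic differences are that you invoke Corollary~\ref{lemma:cond} where the paper directly notes $\rank(\bX H)=\rank(H)$ for full column rank $\bX$, and you re-derive the $1.8$ bound rather than citing Lemma~\ref{lm:eight}.
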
 
\begin{IEEEproof} 
  See \S\ref{sec:lb}. 
\end{IEEEproof} 
\textbf{Remark:} Note that this bound depends on the 
rank distribution of the transformation matrix.  
This lower bound is tight for certain LOCs 
with sufficiently large $T$ (see Theorem~\ref{the:mmm}).

The RHS of \eqref{eq:lou} implies that subspace coding can achieve higher rate than channel training.  
As a quick summary, we see 
\begin{equation}\label{eq:chanel} 
  (1-M/T) \E[\rank(H)] + \epsilon(T,q)\leq \bar C_{\subs}(H,T) 
  \leq \bar C(H,T) \leq \E[\rank(H)]. 
\end{equation} 
This lower bound is better than the one in 
Cor.~\ref{lemma:bank}. Furthermore, 
\begin{align*} 
  \bar C(H,T) - \bar C_{\subs}(H,T) & \leq \E[\rank(H)] - 
  (1-M/T) \E[\rank(H)] - \epsilon(T,q) \\ 
  & = M/T \E[\rank(H)] - \epsilon(T,q). 
\end{align*} 
 
\begin{figure}[t] 
  \centering 
  \includegraphics[width=.8\textwidth]{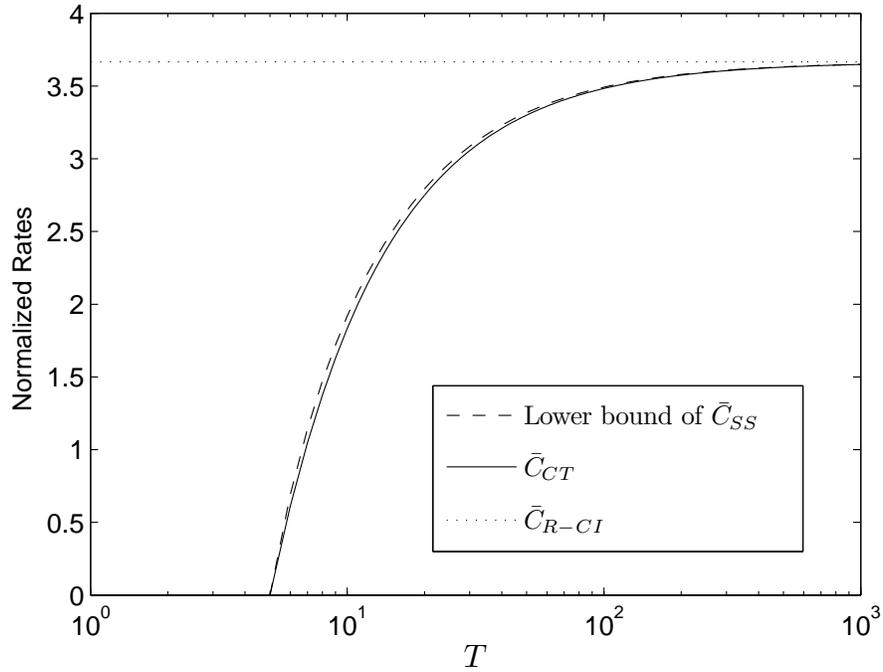} 
  \caption{Here we fix an $H$ with $M=5$ over binary field. We plot the lower bounds for 
    $T$ from $5$ to $1000$.} 
  \label{fig:lowerbound} 
\end{figure} 
 
The gap between the lower bound of $\bar C_{\subs}(H,T)$ and $\bar C_{\text{CT}}(H,T)$ is quit small, which is demonstrated in Fig.~\ref{fig:lowerbound}.  
Prop.~\ref{lemma:low}, however, is trivial for $T\leq M$. 
We can use the similar method in extended channel training to obtain a better lower bound. We can foresee that the improved lower bound of $\bar C_{\subs}(H,T)$ is close to $\bar C_{\text{ECT}}(H,T)$. We will not repeat the procedure here.

\subsection{Proofs} 
 \label{sec:lb}

\begin{IEEEproof}[Proof of Prop.~\ref{lemma:uniform}] 
   Let $\mathcal{U} = \Pj(\ffield^T)$. 
   We have 
   \begin{align*} 
    I(X;Y) & = \sum_{\bX, \bY} p_{X,Y}(\bX, \bY) \log_2 \frac{p_{X,Y}(\bX, \bY)}{p_X(\bX)p_{Y}(\bY)} \nonumber \\  
	  & = \sum_{V,U\in \mathcal{U}} \sum_{\substack{\bX,\bY:\\\lspan{\bX}=U,\lspan{\bY}=V}} p(\bX,\bY) \log_2 \frac{p(\bX,\bY)}{p_X(\bX)p_{Y}(\bY)} \nonumber \\  
	  & \locequa{\leq} \sum_{V, U\in \mathcal{U}} 
          p_{\lspan{X}\lspan{Y}}(U,V) \log_2 
          \frac{p_{\lspan{X}\lspan{Y}}(U,V)}{p_{\lspan{X}}(U)p_{\lspan{Y}}(V)} \\ 
          & = I(\lspan{X};\lspan{Y}), \nonumber 
   \end{align*} 
   where (\thelocequator) follows from the log-sum inequality. 
  To prove this proposition, we only need to show the 
  equality in (\thelocequator) holds for uniform LOCs.  
  We need to check that 
  $P_{Y|X}(\bY|\bX)/p_Y(\bY)$ is a constant for all $\bX$ 
  and $\bY$ with $\lspan{\bY}=V\leq \lspan{\bX}=U \leq 
  \ffield^T$.  
  Fix an input distribution $p_X$.  
  Since the LOC is uniform,  
  \begin{align*} 
    p_Y(\bY) & = \sum_{\bX:V\leq \lspan{\bX}} P_{Y|X}(\bY|\bX) p_X(\bX) \\ 
    & =  \sum_{U'\leq \ffield^T:V\leq 
      U'} \mu(U,V) \sum_{\bX:\lspan{\bX}=U' }  p_{X}(\bX) 
    \\& = \sum_{U'\leq \ffield^T:V\leq 
      U'} \mu(U',V) p_{\lspan{X}}(U'). 
  \end{align*} 
  Thus, 
   \begin{align*} 
     \frac{P_{Y|X}(\bY|\bX)}{p_Y(\bY)} & = \frac{\mu(U,V)}{\sum_{U'\leq \ffield^T:V\leq 
      U'} \mu(U',V) p_{\lspan{X}}(U')}.  
   \end{align*} 
   This verifies the equality in (\thelocequator) holding. 
\end{IEEEproof}

\begin{IEEEproof}[Proof  of Lemma \ref{the:8glsie}] 
Fix an $\alpha$-type input $p_X$. For $V\leq U\leq \ffield^T$ with 
$\dim(U)=r$ and $\dim(V)=s$, we first show  
\begin{equation}\label{eq:uv} 
  p_{\lspan{X}\lspan{Y}}(U,V) = \frac{p_{\rank(X)\rank(Y)}(r,s)}{\gco{T}{r}\gco{r}{s}}. 
\end{equation} 
We only need to show  that $p_{\lspan{X}\lspan{Y}}(U,V)= 
  p_{\lspan{X}\lspan{Y}}(U',V')$ for any $V\leq U\leq 
  \ffield^T$ and $V'\leq U'\leq \ffield^T$ with 
  $\dim(U)=\dim(U')$ and $\dim(V)=\dim(V')$, because if this 
  is true, 
  \begin{align*} 
   p_{\rank(X)\rank(Y)}(r,s) & = \sum_{\dim(U^*)=r,\dim(V^*)=s, V^*\subset U^*} p_{\lspan{X}\lspan{Y}}(U^*,V^*)\\ 
     & =p_{\lspan{X}\lspan{Y}}(U,V) \sum_{\dim(U^*)=r,\dim(V^*)=s, V^*\subset U^*}1 \\ 
     &= p_{\lspan{X}\lspan{Y}}(U,V) \gcos{T}{r} \gcos{r}{s}. 
  \end{align*}

  Let  
  \begin{equation*} 
  A(m,U) = \{\bX\in \ffield^{t\times m}:\lspan{\bX} = U\}. 
  \end{equation*} 
  By Lemma~\ref{lm:oop}, we can find $\Phi\in 
  \Fr(\ffield^{T\times T})$ such that $\Phi U=U'$ and $\Phi V = V'$. 
  Then, 
  \begin{align*} 
    p_{\lspan{X}\lspan{Y}}(U,V) 
    & = {\sum_{\bX\in A(M,U)} p_{X}(\bX) 
      \sum_{\bY\in A(N,V)} P_{Y|X}(\bY|\bX)} \nonumber \\ 
    & \locequa{=} {\sum_{\bX\in A(M,U)} p_{X}(\Phi\bX) 
      \sum_{\bY\in A(N,V)} P_{Y|X}(\Phi\bY|\Phi\bX)} \\ 
    & \locequa{=} {\sum_{\bX\in A(M,\Phi U)} p_{X}(\bX) 
      \sum_{\bY\in A(N,\Phi V)} P_{Y|X}(\bY|\bX)} \\ 
    & = p_{\lspan{X}\lspan{Y}}(\Phi U,\Phi V) \nonumber\\ 
    & = p_{\lspan{X}\lspan{Y}}(U',V').\nonumber 
  \end{align*} 
  (b) follows that $p_X$ is $\alpha$-type ($p_X(\bX)=p_X(\Phi\bX)$) 
  and $P_{Y|X}(\Phi\bY|\Phi\bX) = P_{Y|X}(\bY|\bX)$ follows from 
  Cor.~\ref{prop:symm}.  (c) follows from $A(m,\Phi U) = \Phi A(m,U)$ 
  (see Lemma~\ref{lm:ia}).  This proves \eqref{eq:uv}. 
 
  Applying the property of $\alpha$-type input, 
  \begin{align*} 
    p_{\lspan{X}}(U) & = \sum_{\bX\in A(M,U)} p_X(\bX)\nonumber \\ 
    & = \sum_{\bX\in A(M,U)} p_X(\Phi\bX)\nonumber \\ 
    & = \sum_{\bX\in \Phi A(M,U)} p_X(\bX)\nonumber \\ 
    & \locequa{=} \sum_{\bX\in A(M,U')} p_X(\bX) \\ 
    & = p_{\lspan{X}}(U') \nonumber 
  \end{align*} 
  where (\thelocequator) follows from Lemma~\ref{lm:ia}). 
Therefore, 
\begin{equation}\label{eq:u} 
  p_{\lspan{X}}(U) = \frac{p_{\rank(X)}(r)}{\gco{T}{r}}. 
\end{equation} 
Moreover, 
\begin{align} 
  p_{\lspan{Y}}(V) & = \sum_{U:V\subset U} 
   p_{\lspan{X}\lspan{Y}}(U,V)\nonumber\\   
  & = \sum_{r\geq s} \sum_{U:V\subset U, \dim(U)=r} 
   p_{\lspan{X}\lspan{Y}}(U,V)\nonumber\\  
   & = \sum_{r\geq s}  \frac{p_{\rank(X)\rank(Y)}(r,s)}{\gco{T}{r}\gco{r}{s}} \sum_{U:V\subset U,\dim(U)=r}1 \nonumber\\ 
   & \locequa{=} \sum_{r\geq s} \frac{p_{\rank(X)\rank(Y)}(r,s)}{\gco{T}{r}\gco{r}{s}} \gcos{T-s}{r-s} \nonumber\\ 
   & \locequa{=} \sum_{r\geq s} \frac{p_{\rank(X)\rank(Y)}(r,s)}{\gco{T}{r}\gco{r}{s}} \gcos{T}{r}\frac{\cmat{r}{s}}{\cmat{T}{s}} \nonumber\\ 
   & = \frac{p_{\rank(Y)}(s)}{\gco{T}{s}}, \label{eq:v} 
\end{align} 
where (e) and (f) follow from 
Lemma \ref{lemma:8balfa}.  Substituting \eqref{eq:uv}, \eqref{eq:u} 
and \eqref{eq:v} into $I(\lspan{X};\lspan{Y})$, we have 
  \begin{eqnarray*}  
  \lefteqn{I(\lspan{X};\lspan{Y}) } \\ & =  & \sum_{V\leq U} p_{\lspan{X}\lspan{Y}}(U,V) \log_2 
  \frac{p_{\lspan{X}\lspan{Y}}(U,V)}{p_{\lspan{X}}(U)p_{\lspan{Y}}(V)}\\ 
  & = & \sum_{s\leq r} \sum_{\substack{V\leq U, \dim(U)=r,\\ \dim(V)=s}} 
  p_{\lspan{X}\lspan{Y}}(U, V) \log_2  
  \frac{p_{\lspan{X}\lspan{Y}}(U, V)}{p_{\lspan{X}}(U)p_{\lspan{Y}}(V)}\\ 
  & = & \sum_{s\leq r} \sum_{\substack{ V\leq U, \dim(U)=r,\\ \dim(V)=s}} 
  p_{\lspan{X}\lspan{Y}}(U, V) \log_2  
  \frac{p_{\rank(X)\rank(Y)}(r,s)}{p_{\rank(X)}(r)p_{\rank(Y)}(s)} 
  \frac{\gco{T}{s}}{\gco{r}{s}}\\ 
  & = & \sum_{s\leq r} p_{\rank(X)\rank(Y)}(s,r) \log_2  
  \frac{p_{\rank(X)\rank(Y)}(r,s)}{p_{\rank(X)}(r)p_{\rank(Y)}(s)} 
  \frac{\gco{T}{s}}{\gco{r}{s}}  \\ 
  & = & I(\rank(X);\rank(Y)) + \sum_{s\leq r}  
  p_{\rank(X)\rank(Y)}(r,s) 
  \log_2\frac{\cmat{T}{s}}{\cmat{r}{s}}. 
  \end{eqnarray*} 
  This completes the proof. 
\end{IEEEproof} 
 
\begin{IEEEproof}[Proof of Theorem~\ref{lemma:low}] 
  Substituting the $\alpha$-type 
  input with $p_{\rank(X)}(M)=1$ in Lemma~\ref{the:8glsie}, we have 
  $I(\rank(X);\rank(Y)) = 0$ and $J(\rank(X);\rank(Y)) = \sum_{s} 
  P_{\rank(Y)|\rank(X)}(s|M) \log_2\frac{\cmat{T}{s}}{\cmat{M}{s}}$. 
  Given $\bX\in \ffield^{T\times M}$ with dimension $M$, 
  \begin{equation*} 
    P_{\rank(Y)|X}(s|\bX) = \Pr\{\rank(\bX H) = s\} = \Pr\{\rank(H) = s\}. 
  \end{equation*} 
  Thus, $P_{\rank(Y)|\rank(X)}(s|M) = \Pr\{\rank(H) = s\}$.   
  Using the definition in 
  (\ref{eq:speaker}), we can write 
  \begin{align*} 
    \log_2 \frac{\cmat{T}{s}}{\cmat{M}{s}} & = \log_2 
    \frac{\cmatt{T}{s}q^{Ts}}{\cmatt{M}{s}q^{Ms}}  \\ 
    & = (T-M)s \log_2 q + \log_2 \frac{\cmatt{T}{s}}{\cmatt{M}{s}}. 
  \end{align*} 
  Since $\cmatt{T}{s}<1$, 
  \begin{equation} 
    \label{eq:178} 
    \log_2 \frac{\cmatt{T}{s}}{\cmatt{M}{s}} < \log_2 
  \frac{1}{\cmatt{M}{s}} < 1.8, 
  \end{equation} 
  where the last inequality follows from Lemma~\ref{lm:eight}. 
  So 
  \begin{align*} 
    J(\rank(X);\rank(Y)) &  = \sum_{s} 
  p_{\rank(H)}(s) (T-M)s \log_2 q + \\ & \quad \sum_{s} 
  p_{\rank(H)}(s) \log_2 \frac{\cmatt{T}{s}}{\cmatt{M}{s}}  \\ 
  & = (T-M)\log_2 q \E[\rank(H)]+ \epsilon(T,q)T\log_2q, 
  \end{align*} 
  where $\epsilon(T,q) = \sum_{s} p_{\rank(H)}(s) \log_2 
  \frac{\cmatt{T}{s}}{\cmatt{M}{s}}/(T\log_2q) < 1.8/(T\log_2q)$.  The 
  proof is complete by $C_{\subs}(H,T) \geq J(\rank(X);\rank(Y))$. 
\end{IEEEproof}

\section{Optimal Inputs for Subspace Coding} 
\label{sec:input} 
 
In this section, we show that using constant-dimensional subspace 
coding is almost as good as the general (multi-dimensional) subspace 
coding.

\subsection{A Formulation of $\alpha$-type Inputs} 
 
\begin{lemma} 
  A function $p:\ffield^{T\times M}\rightarrow \mathbb{R}$ is an $\alpha$-type PMF if and only if it can be written as  
\begin{equation}\label{eq:q2} 
  p(\bX)= R(\rank(\bX)) 
  \frac{Q_{\rank(\bX)}(\lspan{\bX^\tr})}{\cmat{T}{\rank(\bX)}} 
\end{equation} 
where $Q_r(\cdot)$ is a PMF over $\Gr(r,\ffield^M)$ and 
$R(\cdot)$ be a PMF over $\{0,1,\cdots,M\}$.  
\end{lemma}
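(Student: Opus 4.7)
The plan is to reduce this statement to the previously established Lemma~\ref{lm:rw}, which characterizes $\alpha$-type PMFs as exactly those functions of the form $p(\bX) = Q(\lspan{\bX^\tr})/\cmat{T}{\rank(\bX)}$ for some PMF $Q$ on $\Pj(\min\{M,T\},\ffield^M)$. The key observation is that $\Pj(\min\{M,T\},\ffield^M)$ is the disjoint union $\bigcup_{r=0}^{\min\{M,T\}}\Gr(r,\ffield^M)$, so any PMF on it decomposes canonically into a marginal over dimension and a family of conditionals, one per Grassmannian. Since $\dim(\lspan{\bX^\tr})=\rank(\bX)$, this decomposition matches the form proposed in \eqref{eq:q2}.

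For the necessity direction, start from the representation $p(\bX)=Q(\lspan{\bX^\tr})/\cmat{T}{\rank(\bX)}$ given by Lemma~\ref{lm:rw}. Define
\begin{equation*}
 R(r) = \sum_{\tilde U\in\Gr(r,\ffield^M)} Q(\tilde U),\quad r=0,1,\ldots,M,
\end{equation*}
with the understanding that $R(r)=0$ whenever $r>\min\{M,T\}$, so $R$ is a PMF on $\{0,1,\ldots,M\}$. On the support of $R$, set $Q_r(\tilde U)=Q(\tilde U)/R(r)$ for $\tilde U\in\Gr(r,\ffield^M)$; on Grassmannians where $R(r)=0$, pick $Q_r$ to be any fixed PMF on $\Gr(r,\ffield^M)$. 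Substituting $Q(\lspan{\bX^\tr})=R(\rank(\bX))Q_{\rank(\bX)}(\lspan{\bX^\tr})$ back into the expression for $p$ yields \eqref{eq:q2}.

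For sufficiency, given $R$ and $\{Q_r\}$, define $Q(\tilde U)=R(\dim(\tilde U))Q_{\dim(\tilde U)}(\tilde U)$ for $\tilde U\in\Pj(\min\{M,T\},\ffield^M)$ (contributions from $r>\min\{M,T\}$ simply do not appear). A one-line check,
\begin{equation*}
 \sum_{\tilde U} Q(\tilde U) = \sum_{r=0}^{\min\{M,T\}} R(r)\sum_{\tilde U\in\Gr(r,\ffield^M)} Q_r(\tilde U) = \sum_{r=0}^{\min\{M,T\}} R(r) = 1,
\end{equation*}
confirms $Q$ is a PMF on $\Pj(\min\{M,T\},\ffield^M)$. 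The formula \eqref{eq:q2} then reads $p(\bX)=Q(\lspan{\bX^\tr})/\cmat{T}{\rank(\bX)}$, and Lemma~\ref{lm:rw} immediately concludes that $p$ is $\alpha$-type.

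The argument is essentially a standard conditional-distribution construction; there is no substantive obstacle. The only points requiring a little care are purely bookkeeping: making sure $R$ is defined on the full index set $\{0,\ldots,M\}$ even when $T<M$ (by zero-padding), and choosing $Q_r$ harmlessly on dimensions where $R(r)=0$ so that the right-hand side of \eqref{eq:q2} is unambiguous without affecting any value $p(\bX)$ actually attained.
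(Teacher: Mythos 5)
Your proposal is correct and follows the same route as the paper: both directions are reduced to Lemma~\ref{lm:rw}, with $R$ taken as the dimension marginal of $Q$ and $Q_r$ as the conditional on $\Gr(r,\ffield^M)$ (chosen arbitrarily where $R(r)=0$), and conversely $Q(\tilde U)=R(\dim(\tilde U))Q_{\dim(\tilde U)}(\tilde U)$. Your write-up is just slightly more explicit than the paper's on the sufficiency direction (normalization check and the $T<M$ zero-padding), but the argument is the same.
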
 
\begin{IEEEproof} 
  If $p$ can be written as \eqref{eq:q2}, by Lemma~\ref{lm:rw}, $p$ is 
  an $\alpha$-type PMF. On the other hand, if $p$ is an $\alpha$-type 
  PMF, it can be written as \eqref{eq:q}.  
  Let  
\begin{equation*} 
  R(r) = \sum_{\tilde U:\rank(\tilde U)=r}Q(\tilde U). 
\end{equation*} 
For $r$ such that $R(r)>0$, let 
\begin{equation*} 
  Q_r(\tilde U) = \left\{ \begin{array}{ll} Q(\tilde U)/R(r) & \dim(\tilde U)=r \\ 0 & \ow. \end{array}\right. 
\end{equation*} 
For $r$ such that $R(r)=0$, let $Q_r(\cdot)$ be any PMF over 
$\Gr(r,\ffield^M)$.  Since $Q_{\dim(\tilde U)}(\tilde U) R(\dim(\tilde 
U))= Q(\tilde U)$, we see that $p$ can be written as \eqref{eq:q2}. 
\end{IEEEproof} 
 
When using the formulation in \eqref{eq:q2},  
$I(\rank(X);\rank(Y))$ and 
$J(\rank(X);\rank(Y))$ can be written as functions of $Q_r(\tilde U)$ and 
$R(r)$ as follows. 
Using the property of Markov chain,  
\begin{align} 
  \lefteqn{P_{\rank(Y)|\rank(X)}(s|r)} \nonumber \\ & = \sum_{\tilde U\in 
    \Gr(r,\ffield^M)} P_{\rank(Y)|\lspan{X^\tr}}(s|\tilde 
  U)P_{\lspan{X^\tr}|\rank(X)}(\tilde U|r) \nonumber \\ 
  & = \sum_{\tilde U\in 
    \Gr(r,\ffield^M)} P_{\rank(Y)|\lspan{X^\tr}}(s|\tilde 
  U)Q_r(\tilde U),\label{eq:fry} 
\end{align} 
in which $P_{\rank(Y)|\lspan{X^\tr}}(s|\tilde U)$, given 
in Coro.~\ref{lemma:cond}, is a function of $p_H$ and is not related 
to $Q_r(\tilde U)$ and $R(r)$. Thus, we can write 
\begin{equation}\label{eq:ind} 
  {I(\rank(X);\rank(Y))} =  
  \sum_r R(r) \sum_sP(s|r) \log_2 
  \frac{P(s|r)}{\sum_{r'}P(s|r')R(r')},  
\end{equation} 
in which $P(s|r) = P_{\rank(Y)|\rank(X)}(s|r)$ is given in (\ref{eq:fry}); 
and 
\begin{align*} 
  {J}(\rank(X);\rank(Y)) & =  
  \sum_r R(r) \sum_{\tilde U \in \Gr(r,\ffield^M)} 
   Q_r(\tilde U) g(\tilde U),  
\end{align*} 
in which 
\begin{equation}\label{eq:g} 
  g(\tilde U) \triangleq \sum_s P_{\rank(Y)|\lspan{X^\tr}}(s|\tilde 
  U) \log_2\frac{\cmat{T}{s}}{\cmat{\dim(\tilde U)}{s}}. 
\end{equation} 
Note that $g(\tilde U)$ only depends on the distribution of 
$H$, but does not depend on the input. 
Define 
$$\rank^*(H)=\max\{r:\Pr\{\rank(H)=r\}>0\},$$ i.e., the maximum nonzero 
rank of the transformation matrix. 
 
\begin{lemma}\label{lemma:food2} 
  Consider $\loc(H,T)$ with dimension $M\times N$ and $T\geq 
  M$. Fix an $\alpha$-type input.  
  For $\tilde V\leq 
  \ffield^M$ with $\dim(\tilde V)=r<\rank^*(H)$, 
  \begin{equation*} 
    g(\ffield^M) - g(\tilde V) > \Theta(T,r,H)\log_2 q,  
  \end{equation*} 
  where 
  \begin{align*} 
    \Theta(T,r,H) & = (T-M)(\rank^*(H)-r)  
       p_{\rank(H)}(\rank^*(H)) \\ & \qquad -  r(M-r) + \log_q \cmatt{r}{r}. 
  \end{align*} 
\end{lemma}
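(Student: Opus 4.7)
The plan is to leverage Cor.~\ref{lemma:cond}, which lets me write $g(\tilde U) = \E\!\left[\log_2 \frac{\cmat{T}{\rank(\mathbf D H)}}{\cmat{\dim \tilde U}{\rank(\mathbf D H)}}\right]$ for any matrix $\mathbf D$ whose row span equals $\tilde U$. Three structural facts drive the argument: (i) for $\tilde U = \ffield^M$ one may take $\mathbf D = I_M$, so $\rank(\mathbf D H) = \rank(H)$; (ii) for $\dim \tilde V = r$, $\mathbf D$ has $r$ rows, hence $\rank(\mathbf D H) \leq \min(r, \rank(H))$; and (iii) the map $s \mapsto \log_2 \cmat{T}{s}/\cmat{r}{s}$ is non-decreasing on $\{0, \ldots, r\}$, because each successive factor $(q^T - q^{s-1})/(q^r - q^{s-1})$ is $\geq 1$ when $T \geq r$.

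I would next condition on $\rank(H) = s'$. Combining (ii) and (iii), the conditional contribution of $\rank(H) = s'$ to $g(\tilde V)$ is at most $\log_2 \cmat{T}{\min(r,s')}/\cmat{r}{\min(r,s')}$. Splitting the index $s'$ at $r$ yields
\begin{equation*}
g(\ffield^M) - g(\tilde V) \;\geq\; \sum_{s' \leq r} p_{s'} \log_2 \frac{\cmat{r}{s'}}{\cmat{M}{s'}} + \sum_{s' > r} p_{s'}\!\left[\log_2 \frac{\cmat{T}{s'}}{\cmat{M}{s'}} - \log_2 \frac{\cmat{T}{r}}{\cmat{r}{r}}\right].
\end{equation*}
Applying the factorization $\cmat{m}{s} = q^{ms}\cmatt{m}{s}$ separates the bound into a polynomial-in-$\log_2 q$ piece and a subleading correction built from the $\cmatt{}{}$ quantities. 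For the polynomial piece, using $\sum_{s'} p_{s'} = 1$ and discarding the non-negative quantities $(M-r)\sum_{s' \leq r}p_{s'}(r-s')$ and $(T-M)\sum_{r < s' < \rank^*(H)}p_{s'}(s'-r)$ collapses the two sums exactly to $[(T-M)(\rank^*(H)-r)p_{\rank(H)}(\rank^*(H)) - r(M-r)]\log_2 q$, matching the first two summands of $\Theta(T, r, H) \log_2 q$.

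For the subleading correction I would invoke two elementary bounds: (a) $\cmatt{r}{s'}/\cmatt{M}{s'} \geq \cmatt{r}{r}$ for $s' \leq r$ (because $\cmatt{r}{s'} \geq \cmatt{r}{r}$ and $\cmatt{M}{s'} \leq 1$), and (b) $\cmatt{T}{s'}\cmatt{r}{r}/(\cmatt{M}{s'}\cmatt{T}{r}) \geq \cmatt{r}{r}$ for $s' > r$, which follows by pairing the factors $(1 - q^{i-T})/(1 - q^{i-M}) \geq 1$ for $i \geq r$ (valid since $T \geq M$) against the leftover $1/\cmatt{M}{r} \geq 1$. Taking logarithms and summing against $p_{s'}$ contributes a total of $\log_2 \cmatt{r}{r}$, matching the third summand of $\Theta(T, r, H) \log_2 q$. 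The main obstacle will be securing the strict $>$: equality in the whole chain requires simultaneous saturation of (a) and (b) across the support of $\rank(H)$, but whenever $r \geq 1$ inequality (b) is strict because $\cmatt{M}{r} < 1$, which delivers the strict inequality stated in the conclusion.
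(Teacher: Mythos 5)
Your proof is correct, and it shares the paper's overall skeleton --- both hinge on Corollary~\ref{lemma:cond} and on splitting $\log_2\frac{\cmat{T}{s}}{\cmat{m}{s}}$ into a leading $(T-m)s\log_2 q$ term plus a $\cmatt{m}{s}$-type correction --- but your central comparison is organized differently. The paper never conditions on $\rank(H)$: it compares the two conditional laws of $\rank(Y)$ through their tail sums, using $\Pr\{\rank(H)\ge s\}\ge\Pr\{\rank(\mathbf{D}_1H)\ge s\}$ and $\Pr\{\rank(\mathbf{D}_1H)>r\}=0$ to obtain the first-moment bound \eqref{eq:in1}, and then bounds the remaining pieces crudely (using $s\le r$ under $P(\cdot|\tilde V)$, $\cmatt{T}{s}\ge\cmatt{M}{s}$, $\cmatt{T}{s}<1$, and $\cmatt{r}{s}\ge\cmatt{r}{r}$). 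You instead couple both sides to the event $\rank(H)=s'$, use $\rank(\mathbf{D}_1H)\le\min(r,s')$ together with the monotonicity of $s\mapsto\cmat{T}{s}/\cmat{r}{s}$ to majorize $g(\tilde V)$ term by term, and split at $s'=r$; your facts (a) and (b) are valid (for (b), $\frac{\cmatt{T}{s'}}{\cmatt{M}{s'}\,\cmatt{T}{r}}=\frac{1}{\cmatt{M}{r}}\prod_{j=r}^{s'-1}\frac{1-q^{j-T}}{1-q^{j-M}}\ge 1$ when $T\ge M$), and your bookkeeping does collapse the leading part to $(T-M)(\rank^*(H)-r)\,p_{\rank(H)}(\rank^*(H))-r(M-r)$. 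So the two routes use essentially the same ingredients; yours trades the stochastic-dominance/tail-sum computation for a single conditional majorization, at the price of the (easy) monotonicity fact (iii). One corner worth noting: your strictness argument covers only $r\ge 1$; for $r=0$ strictness must instead come from $\cmatt{T}{s'}>\cmatt{M}{s'}$ (available inside your (b) when $T>M$, since $\rank^*(H)\ge 1$), and at $r=0$, $T=M$ the two sides are in fact equal --- a degenerate case that the paper's own strict-inequality chain likewise fails to cover, so this is not a defect specific to your argument.
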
 
\begin{IEEEproof} 
  See \S\ref{sec:p9}. 
\end{IEEEproof}

\subsection{Optimal Inputs for Subspace Coding} 
 
We treat $Q_r(\bX)$ and $R(r)$ as the variables to 
maximize $I(\lspan{X};\lspan{Y})$.   
By the KKT conditions, a set of necessary and sufficient 
conditions such that an 
$\alpha$-type input with variables $Q_r(\bX)$ and $R(r)$ to achieve 
$C_{\subs}(H,T)$ is that  
\begin{subequations}\label{eq:o} 
  \begin{align} 
    \frac{\partial I(\rank(X);\rank(Y))}{\partial Q_r(\tilde 
       U)}  & + 
    R(r)g(\tilde U) = \lambda_r  \nonumber \\ & \forall r, \tilde 
    U\in \Gr(r,\ffield^M):   
    Q_r(\tilde U)>0, \label{eq:oa}  \\ 
    \frac{\partial I(\rank(X);\rank(Y))}{\partial Q_r(\tilde 
       U)}  & + 
     R(r)g(\tilde U)   \leq \lambda_r \nonumber \\ & \forall r, \tilde 
    U\in \Gr(r,\ffield^M) :  
     Q_r(\tilde U)=0, \label{eq:ob}\\ 
    \frac{\partial I(\rank(X);\rank(Y))}{\partial R(r)} & + 
    \sum_{\tilde U \in \Gr(r,\ffield^M)} Q_r(\tilde U) 
    g(\tilde U)  = \bar \lambda  \nonumber \\ &  \forall r:  
    R(r)>0,\label{eq:oc} \\ 
     \frac{\partial I(\rank(X);\rank(Y))}{\partial R(r)} & + 
    \sum_{\tilde U \in \Gr(r,\ffield^M)} Q_r(\tilde U) 
    g(\tilde U)  \leq \bar \lambda  \nonumber \\ &  \forall r:  
    R(r)=0,\label{eq:od} 
  \end{align} 
\end{subequations} 
where the partial derivatives are 
\begin{align*} 
  \lefteqn{\frac{\partial I(\rank(X);\rank(Y))}{\partial 
      Q_r(\tilde U)}} \\ & = R(r) 
  \sum_s P_{\rank(Y)|\lspan{X^\tr}}(s|\tilde U) \log_2 
  \frac{P_{\rank(Y)|\rank(X)}(s|r)}{P_{\rank(Y)}(s)} - \log_2 e, 
\end{align*} 
and 
\begin{align*} 
  \lefteqn{\frac{\partial I(\rank(X);\rank(Y))}{\partial 
      R(r)}} \\ & = \sum_s 
  P_{\rank(Y)|\rank(X)}(s|r) \log_2  
  \frac{P_{\rank(Y)|\rank(X)}(s|r)}{P_{\rank(Y)}(s)} - \log_2 e. 
\end{align*} 
We can check that 
\begin{equation*} 
  C_{\subs}(H,T) = \lambda + \log_2 e,  
\end{equation*} 
and 
\begin{equation*} 
  \lambda = \sum_{r}\lambda_r + (M-1)\log_2 e. 
\end{equation*}

The above optimization problem to find an optimal input for 
subspace coding is in general hard.  
We have already shown that for large $T$, the $M$-dimensional 
$\alpha$-type input gives a good approximation of the 
channel capacity (see Prop.~\ref{lemma:low}). 
Here, we can further improve the result for a class of LOCs 
 
\begin{definition} 
  A random matrix $H$  with dimension $M\times N$ is  
  \emph{regular} if $p_{\rank(H)}(s)>0$ for $0\leq s\leq M$. 
   $\loc(H,T)$ is regular if $H$ is regular. 
\end{definition} 
 
\begin{theorem}\label{the:mmm} 
  Consider regular $\loc(H,T)$ with dimension $M\times N$.  
  There exists $T_1$ such that when 
  $T\geq T_1$, $C_{\subs}$ is achieved by the $\alpha$-type 
  input with $R(M) = 1$. 
  In this case $C_{\subs}(H,T) = 
  g(\ffield^M) = \sum_s p_{\rank(H)}(s) 
  \log_2\frac{\cmat{T}{s}}{\cmat{M}{s}} = 
  \E\left[\log_2\frac{\cmat{T}{\rank(H)}}{\cmat{M}{\rank(H)}}\right]$.  
\end{theorem}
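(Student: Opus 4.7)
The plan is to show that for $T$ sufficiently large the $\alpha$-type input with $R(M)=1$ achieves $g(\ffield^M)$ and that no other $\alpha$-type input can do better, which gives both the optimality claim and the closed-form for $C_{\subs}(H,T)$. For achievability, $R(M)=1$ forces $\rank(X)=M$ deterministically; since $\Gr(M,\ffield^M)=\{\ffield^M\}$ is a singleton, $Q_M(\ffield^M)=1$ as well. Hence $I(\rank(X);\rank(Y))=0$, and Lemma~\ref{the:8glsie} gives $I(\lspan{X};\lspan{Y}) = g(\ffield^M)$. Applying Cor.~\ref{lemma:cond} with any invertible $M\times M$ matrix $\mathbf{D}$ yields $P_{\rank(Y)|\lspan{X^\tr}}(s|\ffield^M)=\Pr\{\rank(\mathbf{D}H)=s\}=p_{\rank(H)}(s)$, so by the definition of $g$,
\[
  g(\ffield^M) = \sum_s p_{\rank(H)}(s)\log_2 \frac{\cmat{T}{s}}{\cmat{M}{s}} = \E\!\left[\log_2\frac{\cmat{T}{\rank(H)}}{\cmat{M}{\rank(H)}}\right].
\]

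For the converse I would bound $I(\lspan{X};\lspan{Y})$ for an arbitrary $\alpha$-type input by decomposing it through Lemma~\ref{the:8glsie} into $I(\rank(X);\rank(Y)) + J$. Regularity of $H$ gives $\rank^*(H)=M$, so Lemma~\ref{lemma:food2} applies at every $\tilde U \in \Gr(r,\ffield^M)$ with $r<M$ and delivers $g(\tilde U) < g(\ffield^M) - \Theta(T,r,H)\log_2 q$; averaging against $Q_r$ and $R$ then yields
\[
  J \leq g(\ffield^M) - \sum_{r<M} R(r)\,\Theta(T,r,H)\log_2 q.
\]
To control $I(\rank(X);\rank(Y))$ I would use the variational inequality $I(A;B) \leq \sum_a P(a)\, D(P(B|a)\,\|\,Q)$ with reference $Q=p_{\rank(H)}$. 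The $r=M$ contribution vanishes because $P_{\rank(Y)|\rank(X)}(\cdot|M)=p_{\rank(H)}$, and regularity keeps $p_{\rank(H)}$ strictly positive on $\{0,\ldots,M\}$, so each remaining KL divergence is bounded by the constant $C_H := -\log_2\min_s p_{\rank(H)}(s) < \infty$. Hence $I(\rank(X);\rank(Y)) \leq C_H \sum_{r<M} R(r)$.

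Combining the two bounds,
\[
  I(\lspan{X};\lspan{Y}) \leq g(\ffield^M) + \sum_{r<M} R(r)\,\bigl[C_H - \Theta(T,r,H)\log_2 q\bigr].
\]
For each $r<M$ the term $\Theta(T,r,H)\log_2 q$ is affine in $T$ with slope $(M-r)p_{\rank(H)}(M)\log_2 q > 0$, strictly positive thanks to regularity, so there is $T_1$ beyond which $\Theta(T,r,H)\log_2 q > C_H$ uniformly in $r<M$. The bracket is then negative and $I(\lspan{X};\lspan{Y}) \leq g(\ffield^M)$, matching the achievability. The main obstacle I anticipate is the bound on $I(\rank(X);\rank(Y))$: a crude estimate such as $H(\rank(X))\leq \log_2(M+1)$ cannot be pulled through $\sum_{r<M} R(r)$ and hence cannot be absorbed by the linear-in-$T$ gap from Lemma~\ref{lemma:food2}; it is the variational representation of mutual information, together with regularity of $H$ which forces $p_{\rank(H)}$ to be fully supported and thereby keeps the KL divergences finite, that aligns the two bounds properly.
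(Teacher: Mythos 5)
Your proposal is correct, but it follows a genuinely different route from the paper. The paper proves Theorem~\ref{the:mmm} by verifying that the candidate input with $R(M)=1$ satisfies the KKT conditions (\ref{eq:o}): conditions (\ref{eq:oa})--(\ref{eq:oc}) are checked directly, and for (\ref{eq:od}) with $r<M$ the derivative term is bounded by $-\log_2\min_{s} p_{\rank(H)}(s)$ (finite by regularity) and then absorbed, via Lemma~\ref{lemma:food2}, by choosing $T_1$ so that $\Theta(T_1,r,H)$ dominates this constant for every $r<M$. You instead prove a global converse: for an arbitrary $\alpha$-type input you decompose $I(\lspan{X};\lspan{Y})$ by Lemma~\ref{the:8glsie}, bound $J$ through Lemma~\ref{lemma:food2}, and bound $I(\rank(X);\rank(Y))\leq \sum_{r<M}R(r)\,D\bigl(P_{\rank(Y)|\rank(X)}(\cdot|r)\,\|\,p_{\rank(H)}\bigr)\leq C_H\sum_{r<M}R(r)$ via the variational characterization of mutual information with reference $p_{\rank(H)}$, which is legitimate precisely because regularity makes $p_{\rank(H)}$ fully supported on $\{0,\dots,M\}$ while $\rank(Y)\leq M$. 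Interestingly, the two arguments hinge on the same two quantities --- the linear-in-$T$ gap $\Theta(T,r,H)\log_2 q$ and the constant $-\log_2\min_s p_{\rank(H)}(s)$; your KL divergence is exactly the paper's term $(A)$ evaluated at $R(M)=1$ --- but your version avoids computing the partial derivatives of $I(\rank(X);\rank(Y))$ and invoking KKT sufficiency, so it is more self-contained and yields the threshold $T_1$ just as explicitly, whereas the paper's KKT route fits the general optimal-input framework of \S\ref{sec:input} and also identifies the multipliers $\lambda_r$ and $\bar\lambda$. One presentational point: your converse ranges over $\alpha$-type inputs only, so you should cite Theorem~\ref{the:diq2} (as the paper does) to justify $C_{\subs}(H,T)=\max_{p_X:\alpha\text{-type}}I(\lspan{X};\lspan{Y})$ before concluding $C_{\subs}(H,T)=g(\ffield^M)$.
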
 
\begin{IEEEproof} 
  See \S\ref{sec:p9}. 
\end{IEEEproof}

Assume $M\leq N$. Since $p_{\rank(H_{\text{pure}})}(r) = 
\cmat{M,N}{r} q^{-MN}$ for $r\leq M$, 
 $H_{\text{pure}}$ is 
regular.

\subsection{Optimal Constant-Rank Inputs} 
 
An input for a LOC with $p_{\rank(X)}(r) = 1$ is called a 
\emph{constant-rank or rank-$r$ input distribution}.  When talking 
about subspace coding, rank-$r$ input is corresponding to 
$r$-dimensional subspace coding.  Our discussion of constant-rank 
inputs for subspace coding is equivalent to the discussion of 
constant-dimensional subspace coding.  
 
Let 
\begin{equation*} 
  C_{\csub}(H,T) = \max_{p_X:\text{constant-rank}}I(\lspan{X};\lspan{Y}) 
\end{equation*} 
so that $C_{\csub}(H,T)$ is the maximum achievable rates of 
constant-dimensional subspace coding. 
Let $\bar C_{\csub}(H,T)$ be the normalization of  
$C_{\csub}(H,T)$ by $T\log_2 q$. 
The rank of a constant-rank input that achieves $C_{\csub}(H,T)$ is called an 
\emph{optimal input rank}. 
We show that to find an optimal input rank, we only need to consider 
$\alpha$-type inputs. Moreover, we can determine 
$C_{\csub}(H,T)$ and an optimal 
input rank based on sufficient channel statistics such that 
we can calculate $g(\tilde U)$. See Prop.~\ref{lemma:dase} 
and Theorem~\ref{the:but} for details.

\begin{theorem}\label{lemma:dase} 
  For any LOCs, there exists a constant-rank $\alpha$-type input 
that achieves $C_{\csub}(H,T)$. 
\end{theorem}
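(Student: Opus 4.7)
The plan is to mirror the symmetrization argument from the proof of Theorem~\ref{the:diq} (and its subspace counterpart Theorem~\ref{the:diq2}), but restrict the whole construction to constant-rank inputs. The critical observation is that left-multiplication by any $\Phi\in\Fr(\ffield^{T\times T})$ preserves the rank of a matrix, so averaging an input distribution over this group stays inside the class of constant-rank inputs.

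Concretely, I would start by selecting an optimal rank-$r^*$ input $p_X$, i.e.\ one with $p_{\rank(X)}(r^*)=1$ for some $r^*$ and $I(\lspan{X};\lspan{Y})|_{p_X}=C_{\csub}(H,T)$. For each $\Phi\in\Fr(\ffield^{T\times T})$ define $p^\Phi(\bX)=p_X(\Phi\bX)$. Because $\rank(\Phi\bX)=\rank(\bX)$ for nonsingular $\Phi$, each $p^\Phi$ is still rank-$r^*$, and Lemma~\ref{lemma:mc1}(iii) gives $I(\lspan{X};\lspan{Y})|_{p^\Phi}=I(\lspan{X};\lspan{Y})|_{p_X}=C_{\csub}(H,T)$. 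Form the orbit average $p^*(\bX)=|\Fr(\ffield^{T\times T})|^{-1}\sum_\Phi p^\Phi(\bX)$; as a convex combination of rank-$r^*$ distributions it is still rank-$r^*$. To verify that $p^*$ is $\alpha$-type, take $\bX,\bX'$ with $\lspan{\bX^\tr}=\lspan{\bX'^\tr}$; by Lemma~\ref{lemma:jingle} there exists $\Phi_0\in\Fr(\ffield^{T\times T})$ with $\bX'=\Phi_0\bX$, and the standard re-indexing $\Phi\mapsto\Phi\Phi_0$ (exactly as in the proof of Theorem~\ref{the:diq}) yields $p^*(\bX')=p^*(\bX)$. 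Invoking concavity of the mutual information to upgrade the orbit identity to $I(\lspan{X};\lspan{Y})|_{p^*}\geq C_{\csub}(H,T)$, and noting the reverse inequality is immediate since $p^*$ is constant-rank, we obtain the desired rank-$r^*$ $\alpha$-type optimizer.

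The main obstacle I foresee is the last step, the appeal to concavity. Unlike $I(X;Y)$, which is concave in $p_X$ when $P_{Y|X}$ is held fixed, the quantity $I(\lspan{X};\lspan{Y})$ depends on $p_X$ through both the induced marginal $p_{\lspan{X}}$ and the induced conditional $P_{\lspan{Y}|\lspan{X}}$, so Jensen's inequality is not available out of the box. A cleaner route, which I would actually pursue, is to bypass concavity by computing $I(\lspan{X};\lspan{Y})|_{p^*}$ directly: since $p^*$ is $\alpha$-type and rank-$r^*$, Lemma~\ref{the:8glsie} reduces it to the linear functional $\sum_{\tilde U\in\Gr(r^*,\ffield^M)} Q_{r^*}(\tilde U)\,g(\tilde U)$, where $Q_{r^*}$ is the row-span marginal of $p^*$, which coincides with that of $p_X$ because row spans are the orbit invariants of the $\Fr(\ffield^{T\times T})$-action. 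Combining this with a complementary bound $I(\lspan{X};\lspan{Y})|_{p_X}\leq\sum_{\tilde U} Q_{r^*}(\tilde U)\,g(\tilde U)$ for the original (non-$\alpha$-type) $p_X$---derivable via Corollary~\ref{lemma:cond} together with a log-sum inequality in the spirit of the proof of Proposition~\ref{lemma:uniform}---closes the loop and removes any reliance on a global concavity claim.
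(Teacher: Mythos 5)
Your first paragraph is, essentially verbatim, the paper's own proof: start from an optimal constant-rank input, observe that left multiplication by $\Phi\in\Fr(\ffield^{T\times T})$ preserves rank so every $p^\Phi$ is again constant-rank, use Lemma~\ref{lemma:mc1} for the invariance of $I(\lspan{X};\lspan{Y})$, average over the group, appeal to concavity of mutual information for optimality of the average, and verify the $\alpha$-type property through Lemma~\ref{lemma:jingle} and the re-indexing $\Phi\mapsto\Phi\Phi_0$ exactly as in Theorem~\ref{the:diq}. Had you stopped there, the verdict would simply be: same approach, correctly reproduced.

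Your second paragraph raises a fair point that the paper itself passes over in one line: the concavity being invoked is not the textbook fact, since $I(\lspan{X};\lspan{Y})$ depends on $p_X$ both through $p_{\lspan{X}}$ and through the induced kernel $P_{\lspan{Y}|\lspan{X}}$ (i.e., through $P_{X|\lspan{X}}$), and for a generic deterministic function of the input and a fixed channel the corresponding functional of $p_X$ need not be concave. However, the ``cleaner route'' you offer does not actually close the gap you open. Your key inequality, $I(\lspan{X};\lspan{Y})|_{p_X}\le\sum_{\tilde U}Q_{r^*}(\tilde U)\,g(\tilde U)$ for an arbitrary (non-$\alpha$-type) constant-rank input, is --- in view of Lemma~\ref{the:8glsie} and your correct observation that orbit averaging preserves the row-span marginal $Q_{r^*}$ --- precisely the statement that replacing $P_{X|\lspan{X^\tr}}$ by the uniform conditional cannot decrease $I(\lspan{X};\lspan{Y})$; that is the theorem itself, and it is exactly what the concavity/averaging step was supposed to deliver. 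It does not follow from Corollary~\ref{lemma:cond} together with a log-sum step ``in the spirit of Proposition~\ref{lemma:uniform}'': that argument compares $I(X;Y)$ with $I(\lspan{X};\lspan{Y})$ for a single fixed input (a data-processing-type bound), whereas here you must compare two different conditionals sharing the same row-span marginal, and no such comparison is set up. So as written you have replaced the step you distrust by an unproven claim of at least the same strength. Either accept the paper's averaging-plus-concavity argument as it stands (your first paragraph), or supply a genuine proof of the uniformizing inequality; merely citing Corollary~\ref{lemma:cond} and a log-sum inequality does not constitute one.
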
 
\begin{IEEEproof} 
  The proof is similar to the proof of 
  Proposition~\ref{the:diq}. See \S\ref{sec:p9}. 
\end{IEEEproof}

\begin{theorem}\label{the:but} 
  For $\loc(H,T)$ with dimension $M\times N$,  
  let 
  \begin{equation*} 
    U^* = \arg \max_{\tilde U\in \Pj(M^*,\ffield^M)} 
    g(\tilde U). 
  \end{equation*} 
  Then, $r^*=\dim(\tilde U^*)$ is an optimal input rank and 
  $C_{\csub}(H,T)= g(\tilde U^*)$. Furthermore, 
  \begin{equation*} 
    \bar C_{\subs}(H,T)- \bar C_{\csub}(H,T) \leq \frac{\log_2 
      \min\{M,N\}}{T\log_2 q}. 
  \end{equation*} 
\end{theorem}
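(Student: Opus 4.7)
\medskip
\noindent\textbf{Proof plan for Theorem~\ref{the:but}.}
The plan is to split the statement into two parts: first identifying $C_{\csub}(H,T)$ and an optimal input rank via the decomposition of Lemma~\ref{the:8glsie}, and second bounding the gap $\bar C_{\subs}-\bar C_{\csub}$ by controlling the $I(\rank(X);\rank(Y))$ contribution.

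\medskip
\noindent\textbf{Identifying the optimal input rank.}
By Theorem~\ref{lemma:dase}, it suffices to restrict attention to constant-rank $\alpha$-type inputs. For such an input with $R(r)=1$ for some fixed $r\le M^*$, $\rank(X)=r$ is deterministic, so $I(\rank(X);\rank(Y))=0$ and Lemma~\ref{the:8glsie} collapses to
\[
I(\lspan{X};\lspan{Y}) \;=\; J(\rank(X);\rank(Y)) \;=\; \sum_{\tilde U\in \Gr(r,\ffield^M)} Q_r(\tilde U)\, g(\tilde U),
\]
a linear functional of $Q_r$. It is therefore maximized by placing unit mass on any $\tilde U\in\Gr(r,\ffield^M)$ achieving $\max_{\tilde V\in\Gr(r,\ffield^M)} g(\tilde V)$. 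Maximizing further over $r\in\{0,1,\ldots,M^*\}$ yields
\[
C_{\csub}(H,T)\;=\;\max_{\tilde U\in\Pj(M^*,\ffield^M)} g(\tilde U) \;=\; g(U^*),
\]
and identifies $r^*=\dim(U^*)$ as an optimal input rank.

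\medskip
\noindent\textbf{Bounding the gap.}
By Theorem~\ref{the:diq2} combined with Lemma~\ref{the:8glsie},
\[
C_{\subs}(H,T)\;=\;\max_{\alpha\text{-type }p_X}\bigl[\,I(\rank(X);\rank(Y))+J(\rank(X);\rank(Y))\,\bigr].
\]
For any $\alpha$-type input with variables $(R,\{Q_r\})$ the term
$J(\rank(X);\rank(Y)) = \sum_r R(r)\sum_{\tilde U\in\Gr(r,\ffield^M)} Q_r(\tilde U)\,g(\tilde U)$
is a convex combination of values of $g$ taken over $\Pj(M^*,\ffield^M)$, so
\[
J(\rank(X);\rank(Y))\;\le\;\max_{\tilde U\in\Pj(M^*,\ffield^M)} g(\tilde U)\;=\;g(U^*)\;=\;C_{\csub}(H,T).
\]
Hence $C_{\subs}(H,T)-C_{\csub}(H,T)\le \max_{\alpha\text{-type }p_X} I(\rank(X);\rank(Y))\le \max H(\rank(Y))$. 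Since $Y=XH$ satisfies $\rank(Y)\le\min\{M,N\}$, the variable $\rank(Y)$ takes at most $\min\{M,N\}+1$ values, and a direct entropy bound (together with the observation that a capacity-achieving input cannot place mass on every such value simultaneously, or simply absorbing the benign ``$+1$'') gives $H(\rank(Y))\le \log_2\min\{M,N\}$. Dividing through by $T\log_2 q$ yields the stated gap.

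\medskip
\noindent\textbf{Main obstacle.}
The central step is the inequality $J\le C_{\csub}$: the crucial point is that $J$ is jointly linear in $(R(r),Q_r(\tilde U))$ so that allowing a mixture of ranks cannot beat the best single rank for the $J$-term, with the grassmannians $\Gr(r,\ffield^M)$ for $r\le M^*$ together exhausting $\Pj(M^*,\ffield^M)$. The only other delicate point is the clean $\log_2\min\{M,N\}$ bound on $H(\rank(Y))$---as opposed to $\log_2(\min\{M,N\}+1)$---which requires either a mild support-counting refinement or a direct appeal to the fact that this mild overshoot is absorbed after dividing by $T\log_2 q$.
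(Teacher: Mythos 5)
Your proposal follows essentially the same route as the paper's proof: restrict to constant-rank $\alpha$-type inputs via Theorem~\ref{lemma:dase} (and to $\alpha$-type inputs via Theorem~\ref{the:diq2}), use the decomposition of Lemma~\ref{the:8glsie} so that the constant-rank case reduces to the linear functional $\sum_{\tilde U}Q_r(\tilde U)\,g(\tilde U)$ maximized at $g(U^*)$, and bound the gap by $C_{\subs}-C_{\csub}\le \max I(\rank(X);\rank(Y))$. The one wrinkle you flag---$\log_2(\min\{M,N\}+1)$ versus $\log_2\min\{M,N\}$---is present in the paper as well, which simply asserts $I(\rank(X);\rank(Y))\le\log_2\min\{M,N\}$ without further justification, so your argument is no weaker than the original on this point.
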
 
\begin{IEEEproof} 
  See \S\ref{sec:p9}. 
\end{IEEEproof}

Theorem~\ref{the:but} also  
bounds the loss of rate when using constant-dimensional 
subspace coding.  
Assume $M=N=5$, $T=10$, $q=4$ and $\E[\rank(H)] = M/4$. We have 
\begin{align*} 
  \frac{\bar C_{\subs}(H,T)- \bar C_{\csub}(H,T)}{\bar 
    C_{\subs}(H,T)} &  < \frac{\log_2 M}{T\log_2 q 
    (1-M/T)\E[\rank(H)]} \\ 
  & = 9.8\
\end{align*} 
So the loss of rate is marginal for typical channel 
parameters.

\subsection{Optimal Input Rank} 
 
To evaluate the results in Theorems~\ref{lemma:dase} 
and Theorem~\ref{the:but}, we require the distribution of the 
transformation matrix. Now we show that in some cases, we 
can relax this requirement significantly.  
For $\loc(H,T)$, recall that  
$$\rank^*(H)=\max\{r:\Pr\{\rank(H)=r\}>0\}.$$ 
 
\begin{theorem}\label{the:also} 
  For $\loc(H,T)$, there 
  exists $T_0$ such that when $T\geq T_0$, $r^* 
  \geq \rank^*(H)$, where $r^*$ is the optimal input rank given in 
  Theorem~\ref{the:but}. 
\end{theorem}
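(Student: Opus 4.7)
The plan is to apply Lemma~\ref{lemma:food2} directly. By Theorem~\ref{the:but}, $r^* = \dim(U^*)$ where $U^* = \arg\max_{\tilde U \in \Pj(M^*,\ffield^M)} g(\tilde U)$. Since the hypothesis $T \geq M$ (which is inherited from Lemma~\ref{lemma:food2}) puts $\ffield^M$ itself into $\Pj(M^*,\ffield^M)$, the statement $r^* \geq \rank^*(H)$ reduces to showing that for all sufficiently large $T$, every candidate $\tilde V \leq \ffield^M$ with $\dim(\tilde V) < \rank^*(H)$ satisfies $g(\tilde V) < g(\ffield^M)$; this forces the argmax to sit at dimension at least $\rank^*(H)$.

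By Lemma~\ref{lemma:food2}, for every such $\tilde V$ with $\dim(\tilde V)=r < \rank^*(H)$,
$$g(\ffield^M) - g(\tilde V) > \Theta(T,r,H)\, \log_2 q,$$
so it suffices to arrange $\Theta(T,r,H) > 0$ for every integer $r \in \{0,1,\ldots,\rank^*(H)-1\}$. Inspecting
$$\Theta(T,r,H) = (T-M)(\rank^*(H)-r)\,p_{\rank(H)}(\rank^*(H)) - r(M-r) + \log_q \cmatt{r}{r},$$
the first summand is linear in $T$ with strictly positive slope, since $r < \rank^*(H)$ and $p_{\rank(H)}(\rank^*(H)) > 0$ by the very definition of $\rank^*(H)$, while the remaining two summands are constants in $T$ (bounded in absolute value for each fixed $r \leq M$). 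Hence for each such $r$ there is a threshold $T_r$ beyond which $\Theta(T,r,H) > 0$, and taking $T_0 = \max_{0 \leq r < \rank^*(H)} T_r$ makes $\Theta(T,r,H) > 0$ simultaneously for every admissible $r$ whenever $T \geq T_0$.

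For $T \geq T_0$ we then have $g(\ffield^M) > g(\tilde V)$ for every subspace $\tilde V \leq \ffield^M$ with $\dim(\tilde V) < \rank^*(H)$, so $\dim(U^*) \geq \rank^*(H)$, which is exactly $r^* \geq \rank^*(H)$. The argument is essentially a one-line consequence of Lemma~\ref{lemma:food2}; there is no real obstacle at this stage, since all of the substantive work (isolating the contribution of the top-rank mass $p_{\rank(H)}(\rank^*(H))$ and comparing $g$ at a proper subspace against $g$ at the ambient space $\ffield^M$) has already been packaged into that lemma.
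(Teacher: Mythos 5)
Your proposal is correct and follows essentially the same route as the paper: both rest entirely on Lemma~\ref{lemma:food2}, observe that $\Theta(T,r,H)$ grows linearly in $T$ with strictly positive slope because $p_{\rank(H)}(\rank^*(H))>0$, choose $T_0$ so that $\Theta$ is nonnegative (you ask for strictly positive, which is an immaterial difference since the lemma's inequality is already strict) for all $r<\rank^*(H)$, and conclude that the maximizer of $g$ cannot have dimension below $\rank^*(H)$. The paper merely phrases the last step as a contradiction, whereas you argue directly; the content is identical.
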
 
\begin{IEEEproof} 
  Suppose the dimension of the LOC is $M\times N$. 
  Fix $T_0$ such that $\Theta(T_0,r,H)\geq 0$ for all $r< 
  \rank^*(H)$. This is possible because $\Theta(T,r,H)$ is a 
  linearly increase function of $T$ for all $r< 
  \rank^*(H)$. 
  Assume $T\geq T_0$ and $ r^* < \rank^*(H)$. 
  For any $\tilde V\leq 
  \ffield^M$ with $\dim(\tilde V)<\rank^*(H)\leq M$, by 
  Lemma~\ref{lemma:food2}, 
  \begin{equation*} 
    g(\ffield^M) >  g(\tilde V). 
  \end{equation*} 
  Thus, we have a contradiction to $r^* 
  < \rank^*(H)$. 
\end{IEEEproof} 
 
Theorem~\ref{the:also} narrows down the range to search 
an optimal input rank for large $T$. When $\rank^*(H)=M$, it 
tells that $M$ is an optimal input rank when $T$ is 
sufficiently large. The proof of Theorem~\ref{the:also} 
tells how to find a $T_0$.  
 
As an example, let us check the optimal input rank of 
$\loc(H_{\text{full}},T)$. 
We know $\rank^*(H_{\text{full}}) = M$ and 
$p_{\rank(H_{\text{full}})}(M)=1$. 
By Theorem~\ref{the:also}, there 
exists $T_0$ such that when $T>T_0$, $r^*=M$.  
Now we want to know the value of $T_0$. 
From the proof of 
Theorem~\ref{the:also}, we know that $T_0$ should satisfy 
\begin{equation*} 
  \Theta(T_0,r,H_{\text{full}})\geq 0,\quad \forall r<M. 
\end{equation*} 
In other words, $T_0$ should satisfy 
\begin{equation} 
  \label{eq:alt} 
  (r-T_0/2)^2-(T_0/2-M)^2 + \log_q \cmatt{r}{r}\geq 0,\quad \forall r<M. 
\end{equation} 
If $M\leq T_0\leq 2M-1$, (\ref{eq:alt}) does not hold for 
$r=M-1$. If $T_0= 2M$, the minimum value of the RHS of 
(\ref{eq:alt}) is obtained for $r=M-1$, i.e., $1 + \log_q 
\cmatt{M}{M}$, which is positive when $q>2$.  Similarly, we 
can check that $T_0=2M+1$ is sufficient for any field size. 
As a conclusion, when i) $q>2$ and $T\geq 2M$ or ii) $q=2$ 
and $T\geq 2M+1$, the $M$-dimensional $\alpha$-type input is an 
optimal constant-rank input for $(H_{\text{full}},T)$.

\subsection{proofs} 
\label{sec:p9}

\begin{IEEEproof}[Proof of Lemma \ref{lemma:food2}] 
  Let $\tilde U = \ffield^M$. 
  Since $\tilde V\leq \tilde U$, we can find a full rank 
  $M\times M$ matrix 
  \begin{equation*} 
    \mathbf D = \begin{bmatrix} \mathbf D_0 \\ \mathbf  D_1 \end{bmatrix} 
  \end{equation*} 
  such that $\lspan{\mathbf{D}^\tr} = \tilde U$ and 
  $\lspan{\mathbf{D}_1^\tr} = \tilde V$. By 
  Coro.~\ref{lemma:cond},  
  \begin{align*} 
    \sum_{s'\geq s} P_{\rank(Y)|\lspan{X^\tr}}(s'|\tilde V)  & =  
    \Pr\{\rank(\mathbf{D_1} H)\geq s\}, 
  \end{align*} 
  and  
  \begin{align*} 
    P_{\rank(Y)|\lspan{X^\tr}}(s|\tilde U) & = 
    \Pr\{\rank(\mathbf D H)= s\} \\  & = \Pr\{\rank(H) = s\}. 
  \end{align*} 
  We know $\Pr\{\rank(H) \geq s\} \geq 
  \Pr\{\rank(\mathbf{D_1} H)\geq s\}$. So 
  \begin{equation*} 
    \sum_{s'\geq s} P_{\rank(Y)|\lspan{X^\tr}}(s'|\tilde U) 
    \geq \sum_{s'\geq s} P_{\rank(Y)|\lspan{X^\tr}}(s'|\tilde V). 
  \end{equation*} 
  Moreover, for $s$ such that $r <s\leq \rank^*(H)$, 
  \begin{equation*} 
    \sum_{s'\geq s} P_{\rank(Y)|\lspan{X^\tr}}(s'|\tilde V) 
    = 0. 
  \end{equation*} 
  Thus,  
  \begin{align} 
    \lefteqn{\sum_{s}s (P_{\rank(Y)|\lspan{X^\tr}}(s|\tilde U) - 
    P_{\rank(Y)|\lspan{X^\tr}}(s|\tilde V))} \nonumber \\ & = \sum_k 
    \sum_{s\geq k} (P_{\rank(Y)|\lspan{X^\tr}}(s|\tilde U) - 
    P_{\rank(Y)|\lspan{X^\tr}}(s|\tilde V))\nonumber \\ & \geq 
    \sum_{k: \rank^*(H)\geq k>r}\sum_{s: s\geq k} 
    P_{\rank(Y)|\lspan{X^\tr}}(s|\tilde U) \nonumber \\ & \geq  
    \sum_{k: \rank^*(H)\geq k>r} \Pr\{\rank(H)=\rank^*(H)\} \nonumber \\ & = 
    (\rank^*(H)-r) \Pr\{\rank(H)=\rank^*(H)\}. \label{eq:in1} 
  \end{align}

  By definition, 
  \begin{align*} 
    \lefteqn{\frac{g(\tilde U) - g(\tilde V)}{\log_2 q}} \nonumber  
    \\ & = \sum_{s} 
    P_{\rank(Y)|\lspan{X^\tr}}(s|\tilde U) \left((T-M)s+\log_q 
      \frac{\cmatt{T}{s}}{\cmatt{M}{s}}\right) \nonumber \\ 
    & \qquad - \sum_s P_{\rank(Y)|\lspan{X^\tr}}(s|\tilde 
    V) \left((T-r)s+\log_q 
      \frac{\cmatt{T}{s}}{\cmatt{r}{s}}\right)  \nonumber \\ 
    & = (T-M) \sum_{s}s (P_{\rank(Y)|\lspan{X^\tr}}(s|\tilde 
    U) -P_{\rank(Y)|\lspan{X^\tr}}(s|\tilde V)) \nonumber \\ & \qquad 
    - (M-r)\sum_{s}sP_{\rank(Y)|\lspan{X^\tr}}(s|\tilde V) \nonumber  \\ & 
    \qquad + 
    \sum_s P_{\rank(Y)|\lspan{X^\tr}}(s|\tilde U) \log_q 
    \frac{\cmatt{T}{s}}{\cmatt{M}{s}} \nonumber \\ & \qquad - \sum_s 
    P_{\rank(Y)|\lspan{X^\tr}}(s|\tilde V) \log_q 
    \frac{\cmatt{T}{s}}{\cmatt{r}{s}} \nonumber \\ 
    & > (T-M)(\rank^*(H)-r) \Pr\{\rank(H)=\rank^*(H)\} \nonumber\\ & 
    \qquad - r(M-r) + \log_q \cmatt{r}{r}, 
  \end{align*} 
  where the last inequality follows from \eqref{eq:in1}. Therefore 
  \begin{equation*} 
    (M-r)\sum_{s}sP_{\rank(Y)|\lspan{X^\tr}}(s|\tilde V) \leq r(M-r), 
  \end{equation*} 
  \begin{equation*} 
     \sum_s P_{\rank(Y)|\lspan{X^\tr}}(s|\tilde U) \log_q 
    \frac{\cmatt{T}{s}}{\cmatt{M}{s}} \geq 0, 
  \end{equation*} 
  and  
  \begin{equation*} 
    \sum_s 
    P_{\rank(Y)|\lspan{X^\tr}}(s|\tilde V) \log_q 
    \frac{\cmatt{T}{s}}{\cmatt{r}{s}} < \sum_s 
    P_{\rank(Y)|\lspan{X^\tr}}(s|\tilde V) \log_q 
    \frac{1}{\cmatt{r}{s}} \leq \log_q 
    \frac{1}{\cmatt{r}{r}}. 
  \end{equation*} 
\end{IEEEproof}

\begin{IEEEproof}[Proof of Theorem \ref{the:mmm}] 
To prove the theorem, we only need to check that the 
$\alpha$-type input with $R(M) = 1$ satisfies 
(\ref{eq:o}). 
Conditions (\ref{eq:oa}) and (\ref{eq:ob}) with  
$r<M$ are satisfied by $\lambda_r=\log_2 e$ because $R(r)=0$. 
Since $Q_M(\ffield^M)=1$, we check  
condition (\ref{eq:oa}) with $r=M$. Since 
$P_{\rank(Y)|\rank(X)}(s|M) = P_{\rank(Y)}(s)$, 
\begin{equation*} 
  \left. \frac{\partial I(\rank(X);\rank(Y))}{\partial 
      Q_M(\ffield^M)}\right |_{R(M)  = 1} = -\log_2 e. 
\end{equation*} 
So, (\ref{eq:oa}) with $r=M$ is satisfied by $\lambda_M=g(\ffield^M)-\log_2 e$. 
This completes the verification of (\ref{eq:oa}) and (\ref{eq:ob}).

The above analysis also tells that $\bar \lambda = \lambda_M$.  
Now we check (\ref{eq:oc}) and (\ref{eq:od}) with $\bar \lambda = 
g(\ffield^M) - \log_2 e$.  
Since $R(M)=1$, condition (\ref{eq:oc}) should be satisfied with $r=M$. 
This is true since  
\begin{equation*} 
  \left. \frac{\partial I(\rank(X);\rank(Y))}{\partial 
      R(M)}\right|_{R(M)  = 1} + g(\ffield^M) = -\log_2 e + g(\ffield^M). 
\end{equation*} 
Next, we check condition (\ref{eq:od}) for $r<M$. We know 
\begin{align*} 
  \lefteqn{\left. \frac{\partial I(\rank(X);\rank(Y))}{\partial 
      R(r)}\right|_{R(M)  = 1}} \\  & = \underbrace{\sum_s 
  P_{\rank(Y)|\rank(X)}(s|r) \log_2  
  \frac{P_{\rank(Y)|\rank(X)}(s|r)}{P_{\rank(Y)|\rank(X)}(s|M)}}_{(A)} - 
  \log_2 e. 
\end{align*} 
  Since 
  \begin{align*} 
    P_{\rank(Y)|\rank(X)}(s|M) & = 
    P_{\rank(Y)|\lspan{X^\tr}}(s|\ffield^M) \\ 
    & = \Pr\{\rank(\mathbf D H)= s\} \\ 
    & = \Pr\{\rank(H) = s\}, 
  \end{align*} 
  we have 
  \begin{align*} 
    (A)  
    & \leq \sum_{s}  P_{\rank(Y)|\rank(X)}(s|r) \log_2  
  \frac{1}{P_{\rank(Y)|\rank(X)}(s|M)}  \\ 
    & = \sum_{s} P_{\rank(Y)|\rank(X)}(s|r) \log_2  
  \frac{1}{p_{\rank(H)}(s)} \\  
    & \leq - \log_2 \min_{0\leq s< M}p_{\rank(H)}(s).  
  \end{align*} 
  That is 
  \begin{equation*} 
    \left. \frac{\partial I(\rank(X);\rank(Y))}{\partial 
      R(r)}\right|_{R(M)  = 1} \leq - \log_2 \min_{0\leq s\leq 
    M}p_{\rank(H)}(s) -\log_2 e. 
  \end{equation*} 
  Fix $T_1$ such that 
  $\Theta(T_1,r,H) \geq - \log_2 \min_{0\leq 
      s< M}p_{\rank(H)}(s)$ for all $r< M$.  
  This is possible because $\Theta(T,r,H)$ is linearly increase with 
  $T$ and $ - \log_2 \min_{0\leq s< M}p_{\rank(H)}(s)$ does not change 
  with $T$. 
  By Lemma~\ref{lemma:food2},  
  $g(\ffield^M)\geq g(\tilde U) - \log_2 \min_{0\leq 
      s\leq M}p_{\rank(H)}(s)$ for all $\tilde U\in 
    \Gr(r,\ffield^M)$. Thus 
    \begin{align*} 
      \bar \lambda & = g(\ffield^M) - \log_2 e \\ 
      & \geq \max_{\tilde U\in 
    \Gr(r,\ffield^M)} g(\tilde U) - \log_2 \min_{0\leq 
      s\leq M}p_{\rank(H)}(s) - \log_2 e \\ 
      & \geq \sum_{\tilde U\in 
    \Gr(r,\ffield^M)} Q_r(\tilde U) g(\tilde U) + \left. \frac{\partial I(\rank(X);\rank(Y))}{\partial 
      R(r)}\right|_{R(M)  = 1}.  
    \end{align*} 
Hence, condition (\ref{eq:od}) with $r<M$ is satisfied. 
\end{IEEEproof}

\begin{IEEEproof}[Proof of Theorem~\ref{lemma:dase}] 
  Consider a LOC with block length $T$.  Let $p_X(\bX)$ be an optimal 
  constant-rank input with $p_{\rank(X)}(r^*)=1$. For $\Phi\in 
  \Fr(\ffield^{T\times T})$, define $p^{\Phi}$ as $p_X^{\Phi}(\bX) = 
  p_X(\Phi\bX)$. It is clear that $p^{\Phi}_{\rank(X}(r^*)=1$.  By 
  Lemma~\ref{lemma:mc1}, $p_X^{\Phi}(\bX)$ is also an optimal 
  constant-rank input.  Define $p_X^*$ as $$p_X^*(\bX) = 
  \frac{1}{|\Fr(\ffield^{T\times T})|}\sum_{\Phi\in 
    \Fr(\ffield^{T\times T})} p_X^{\Phi}(\bX).$$ By the concavity of 
  the mutual information, we know $p_X^*$ is also an optimal 
  constant-rank input.  We can check that $p_X^*$ is 
  $\alpha$-type as in the proof of Proposition~\ref{the:diq}. 
\end{IEEEproof} 
 
\begin{IEEEproof}[Proof of Theorem~\ref{the:but}] 
  For an $r$-dimensional $\alpha$-type input, 
  \begin{align*} 
    I(\lspan{X};\lspan{Y}) & = \sum_{\tilde U\in 
      \Gr(r,\ffield^M)} Q_r(\tilde U) g(\tilde U) \\ 
    & \leq \max_{\tilde U\in \Gr(r,\ffield^M)} g(\tilde U) 
    \\ & \leq g(\tilde U^*). 
  \end{align*} 
  Thus $C_{\csub} \leq g(\tilde U^*)$. On the other hand, for the 
  $r^*$-dimensional $\alpha$-type input with $p_{\lspan{X^\tr}}(\tilde 
  U^*) = 1$, $C_{\csub} \geq I(\lspan{X};\lspan{Y}) = g(\tilde U^*)$. 
 
  Furthermore, for an $\alpha$-type input  
  \begin{align*} 
    I(\lspan{X};\lspan{Y}) - C_{\csub} & =  
      I(\rank(X);\rank(Y))+ J(\rank{X};\rank{Y})- 
     g(\tilde U^*) \\ & \leq I(\rank(X);\rank(Y)) \\ & \leq \log_2 \min\{M,N\}. 
  \end{align*} 
  Thus, $C_{\subs}- C_{\csub} = \max_{p_X:\alpha-\text{type}} I(\lspan{X};\lspan{Y}) - C_{\csub} \leq \log_2 \min\{M,N\}$. 
\end{IEEEproof}

\section{Coding for Linear Operator Channels} 
\label{sec:codes} 
 
From this section, we consider coding design for $\loc(H,T)$. 
 
\subsection{Using Channel Training or Subspace Coding} 
 
We have considered two kinds of coding schemes for noncoherent LOCs: 
channel training and subspace coding. 
For channel training, all the input matrices $\bX$ have the form 
\begin{equation}\label{eq:CT} 
	\bX = \begin{bmatrix} \mathbf{I} \\ \tilde \bX \end{bmatrix}, 
\end{equation} 
where $\mathbf I$ is an identity matrix.  
For such a transmission, the received matrix  
\begin{equation*} 
	\bY = \begin{bmatrix} \mathbf{I} \\ \tilde \bX  \end{bmatrix} \bH = \begin{bmatrix} \bH \\ \tilde\bX\bH \end{bmatrix}, 
\end{equation*} 
where $\bH$ is the instance of $H$.  The receiver can use the first 
part of $\bY$ to recover $\bH$ and use this information to decode 
$\tilde \bX$.  We have shown that the normalized maximum achievable 
rate using channel training is 
\begin{equation*} 
  \bar C_{\text{CT}}(H,T) = (1-M/T)\E[\rank(H)]. 
\end{equation*}

For subspace coding, a codeword contains a sequence of subspaces and 
the transmission of a subspace through LOCs involves the transformation 
of a subspace to a matrix.  The decoding also only depends on the 
subspace spanned by the received matrix.  For more details, see our discuss 
in \S\ref{sec:subspa}. 
We have shown that the normalized maximum achievable 
rate using subspace coding satisfies 
\begin{equation*} 
  \E[\rank(H)] \geq \bar C_{\subs}(H,T) \geq (1-M/T)\E[\rank(H)] + \epsilon(T,q), 
\end{equation*} 
where $0< \epsilon(T,q) < 1.8/(T\log_2q)$.  We have shown the lower bound 
of $\bar C_{\subs}(H,T)$ is tight for regular LOCs when $T$ is 
sufficiently large. 
Therefore 
\begin{equation*} 
  \epsilon(T,q) \leq   \bar C_{\subs}(H,T) - \bar C_{\text{CT}}(H,T) < M/T\E[\rank(H)].  
\end{equation*} 
So using channel training does not loss much in rates, especially when 
$T$ is large. 
 
For encoding, a channel training code can be regarded 
as a special subspace code. But the decoding of channel training codes 
uses the received matrices, while the decoding of subspace codes uses 
the subspaces spanned by the received matrices. However, we can just 
decode a subspace code using the matrices received. If we apply this 
decoding method of subspace codes, channel training can be regarded as 
a special case of subspace coding.  This is the reason that even some 
existing subspace coding schemes use channel training \cite{silva08j, 
  nobrega10}. 
 
In this paper, we only study the design of channel training codes.

\subsection{Some Existing Channel Training Codes} 
 
Existing coding schemes for RLCN also works for LOCs, even though a 
RLCN is a special LOC with its transformation matrix depends on the 
network topology.  In fact, most coding practice of RLCN is based on 
channel training. We first introduce two coding schemes for RLCN.

The first coding scheme was introduced by Ho \etal 
\cite{ho06j}. They assumed that the transformation matrix has rank $M$. 
In their scheme, a codeword has the form in \eqref{eq:CT} 
where any matrix in $\ffield^{(T-M)\times M}$ can be used as $\tilde \bX$. 
We call such codes the \emph{classical channel training codes}. 
A received matrix has the form 
\begin{equation*} 
  \bY = \begin{bmatrix} \mathbf{I} \\ \tilde\bX \end{bmatrix} \bH = \begin{bmatrix} \bH \\ \tilde\bX\bH \end{bmatrix}.  
\end{equation*} 
Since $\bH$ has rank $M$, the receiver can decode $\tilde \bX$ by solving a 
system of linear equations.  
The rateless realization of random linear network codes found in 
\cite{chachu07,Katti08} applies a classical channel training code over $\loc(GH,T)$, where $\loc(H,T)$ is the original channel and $G$ is an $r\times M$ purely random matrix.  
We will give a general discussion of this approach and show that we only need to consider $r<M$.

Silva \etal\ \cite{silva08j} proposed a more general method in which 
$\tilde \bX$ in \eqref{eq:CT} can only be chosen from a rank-metric code.  
The redundancy in the rank-metric code can be used 
to correct the rank deficiency of $H$ as well as additive errors, 
which are not considered in this work.  This code construction is nearly optimal in 
terms of a Singleton type coding bound on one-block subspace codes  
\cite{koetter08j}.

Both of the works \cite{ho06j, silva08j} construct channel training codes with unit 
block, which in general cannot achieve the channel capacity of LOCs.  
Two more recent works \cite{nobrega09, nobrega10} considers design of 
channel training codes with non-unit length. The authors proposed a 
multilevel code construction approach in \cite{nobrega09}.  
Parallel to our work, this approach is used explicitly 
to construct ``multishot rank-metric codes'' \cite{nobrega09}.  
Note that the multishot 
rank-metric codes constructed in \cite{nobrega09} is different to the 
codes we will proposed here, even though we both apply rank 
metric. For 
the lack of a performance evaluation of their codes, we cannot see whether 
their codes achieve $\bar C_{\text{CT}}$.

\subsection{Achieve Higher Rate than $\bar C_{\text{CT}}$} 
\label{sec:high} 
 
In the following, we will introduce two constructions of channel 
training codes for LOCs, called lifted rank-metric codes and lifted 
linear matrix codes, respectively.  We will prove that lifted linear 
matrix codes can achieve $\bar C_{\text{CT}}$. 
But our codes can also used to achieve higher 
rate than $\bar C_{\text{CT}}$ using extended channel training.  
The approach is to use $\loc(H,T)$ 
as $\loc(GH,T)$ for any $r\times M$ random matrix $G$.  
As we discussed in \S\ref{sec:ct}, we only need to consider $r<M$ and 
Theorem~\ref{the:ect} implies that a purely random $G$ is good enough.

\subsection{Formulation of Channel Training Codes} 
 
A matrix code $\lc^{(n)}\subset \ffield^{(T-M)\times nM}$ induces a 
channel training code for $\loc(H,T)$ with dimension $M\times N$ as 
follows.  For $\tilde \bX^{(n)}\in \lc^{(n)}$, we write 
\begin{align*}
  \tilde\bX^{(n)} = \begin{bmatrix} \tilde\bX_1& \tilde\bX_2&\cdots& \tilde\bX_n\end{bmatrix}, 
\end{align*} 
where $\bX_i\in \ffield^{(T-M)\times M}$. 
Define the $M$-lifting of $\tilde\bX^{(n)}$, which extends the definition 
of lifting in \cite{silva08j}, as  
\begin{align*} 
  L_M( \tilde\bX^{(n)}) = \left(\begin{bmatrix}\mathbf I_M 
      \\  \tilde\bX_1 \end{bmatrix},  \begin{bmatrix}\mathbf I_M 
      \\  \tilde\bX_2 \end{bmatrix},  
      \cdots, \begin{bmatrix}\mathbf I_M 
      \\  \tilde\bX_n \end{bmatrix} \right), 
\end{align*} 
where $\mathbf I_M$ is an $M\times M$ identity matrix.  We see 
$L_M(\tilde\bX^{(n)})\in (\ffield^{T\times M})^n$.  Define the $M$-lifting 
of $\lc^{(n)}$ as 
\begin{align*}
  L_M(\lc^{(n)}) = \{L_M( \tilde\bX^{(n)}): \tilde \bX^{(n)}\in \lc^{(n)}\}. 
\end{align*} 
We call $L_M(\lc^{(n)})$ the \emph{lifted matrix code} of 
$\lc^{(n)}$. 
When the context is clear, we write $L(\tilde \bX^{(n)})$ for 
$L_M(\tilde \bX^{(n)})$ and $L(\lc^{(n)})$ for $L_M(\lc^{(n)})$. 
The rate 
$\mathcal{R}^{(n)}$ of $L(\lc^{(n)})$ is 
\begin{align*} 
  \mathcal{R}^{(n)}  = 
  \frac{\log_2|L(\lc^{(n)})|}{nT\log_2 q} = 
  \frac{\log_2|\lc^{(n)}|}{nT\log_2 q}. 
\end{align*}

Suppose that the transmitted codeword is $L(\tilde\bX^{(n)})$. 
Each use of $\loc(H,T)$ can 
transmit one component of $L(\tilde\bX^{(n)})$.  The $i$th output 
matrix of $\loc(H,T)$ is 
\begin{align}\label{eq:output} 
   \bY_i = \begin{bmatrix}\mathbf I_M \\\tilde\bX_{i}\end{bmatrix} 
  \bH_i
   = \begin{bmatrix} \bH_i \\ \tilde\bY_i \end{bmatrix}, 
\end{align} 
where $\bH_i$ is the $i$th instance of $H$ and $\tilde\bY_i=\tilde\bX_{i}\bH_i$.  Let 
\begin{equation*}
  \bH^{(n)} = \begin{bmatrix} \bH_1 & & & & \\ & \bH_2 & & & \\ & & \ddots 
    & & \\ & & & \bH_n \end{bmatrix},  
\end{equation*} 
and  
\begin{equation*}
  \tilde\bY^{(n)} = \begin{bmatrix} \tilde\bY_1& \tilde\bY_2&\cdots& \tilde\bY_n\end{bmatrix}. 
\end{equation*} 
We obtain the decoding equation of the lifted matrix code $L(\lc^{(n)})$ as  
\begin{equation} 
  \label{eq:dec} 
  \tilde\bY^{(n)} =  \tilde\bX^{(n)}\bH^{(n)}. 
\end{equation} 
The decoding of $\tilde\bY^{(n)}$ can use the knowledge of $\bH^{(n)}$.

\section{Rank-Metric Codes for LOCs} 
\label{sec:rank} 
 
In this section, we extend the rank-metric approach of Silva \etal\ 
to construct matrix codes for LOCs. 
 
\subsection{Rank-Metric Codes} 
 
Define the \emph{rank distance} between $\bX,\bX'\in \ffield^{t\times m}$ as 
\begin{equation*} 
  d(\bX,\bX')= \rank(\bX-\bX'). 
\end{equation*} 
A rank metric code is a unit-length matrix code with the 
rank distance \cite{gabidulin85}. 
The minimum distance of a rank-metric code 
$\mathcal{C}\subset \ffield^{t\times m}$ is 
\begin{equation*} 
  D(\mathcal{C}) = \min_{\bX\neq \bX'\in\mathcal{C}} d(\bX,\bX'). 
\end{equation*} 
When $t\geq m$, we have 
\begin{equation}\label{eq:singleton} 
   \frac{\log_2 |\mathcal{C}|}{t \log_2 q} \leq m-D(\mathcal{C})+1, 
\end{equation} 
which is called the Singleton bound for rank-metric codes 
\cite{gabidulin85} (see also \cite{silva08j} and the reference 
therein).  Codes that achieve this bound are called 
\emph{maximum-rank-distance (MRD)} codes. Gabidulin described a class 
of MRD codes for $t\geq m$, which are analogs of generalized 
Reed-Solomon codes \cite{gabidulin85}.

Suppose the transmitted codeword is $\bX_0\in \lc$ and the received 
matrix is $\bY=\bX_0\bH$. If $\bH$ is known at the receiver, we can 
decode $\bY$ using the minimum distance decoder defined as 
\begin{equation}\label{eq:mdd} 
  \hat{\bX} = \arg\min_{\bX \in  \lc} d(\bY, \bX\bH).  
\end{equation}  
 
\begin{proposition}\label{lemma:iif} 
  The minimum distance decoder is 
  guaranteed to return $\hat{\bX}=\bX_0$ for all 
  $\bH$ with $\rank(\bH) \geq r$ if and only if 
  $D(\lc) \geq m-r+1$, where $0<r\leq m$. 
\end{proposition}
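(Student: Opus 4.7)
The plan is to reduce the decoding condition to a statement about the interaction between the null space of the difference $\bX - \bX_0$ and the column space of $\bH$, and then read off both directions from rank-nullity.

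First I would observe that since $\bY = \bX_0 \bH$, we have $d(\bY,\bX_0\bH) = \rank(\mathbf 0) = 0$, and for any $\bX \in \lc$,
\begin{equation*}
  d(\bY, \bX\bH) = \rank(\bY - \bX\bH) = \rank((\bX_0 - \bX)\bH).
\end{equation*}
Hence the minimum distance decoder \eqref{eq:mdd} is guaranteed to return $\hat\bX = \bX_0$ (as the unique minimizer) if and only if $(\bX_0 - \bX)\bH \neq \mathbf 0$ for every $\bX \in \lc$ with $\bX \neq \bX_0$. This converts the decoding question into a purely linear-algebraic nonvanishing condition on $\bD := \bX_0 - \bX$ and $\bH$.

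For the ``if'' direction, suppose $D(\lc) \geq m-r+1$. For any $\bX \neq \bX_0$ in $\lc$, $\rank(\bD) \geq m-r+1$, so the right null space of $\bD$ has dimension at most $m - \rank(\bD) \leq r-1$. If $\bD \bH = \mathbf 0$, every column of $\bH$ must lie in this null space, forcing $\rank(\bH) \leq r-1$ and contradicting $\rank(\bH) \geq r$. Hence $\bD\bH \neq \mathbf 0$, and by the reduction above the decoder recovers $\bX_0$.

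For the ``only if'' direction, suppose $D(\lc) \leq m - r$. Pick $\bX \neq \bX_0$ in $\lc$ achieving $\rank(\bD) \leq m - r$. The right null space of $\bD$ then has dimension at least $r$, so I can select $r$ linearly independent vectors $v_1,\dots,v_r$ in this null space and assemble an $\bH$ whose columns are $v_1,\dots,v_r$ (padded with zeros or additional null-space vectors to match the required number of columns). This $\bH$ satisfies $\rank(\bH) \geq r$ and $\bD\bH = \mathbf 0$, so $\bX\bH = \bX_0\bH = \bY$; the decoder cannot distinguish $\bX$ from $\bX_0$ and is not guaranteed to return $\bX_0$. The argument is essentially rank-nullity applied in both directions, so there is no serious obstacle; the only point requiring care is ensuring that the constructed $\bH$ has the correct ambient dimensions, which is straightforward since any column chosen from the null space of $\bD$ automatically annihilates $\bD$.
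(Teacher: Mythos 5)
Your proof is correct and follows essentially the same route as the paper: both directions reduce to whether $(\bX_0-\bX)\bH$ can vanish, with sufficiency via rank--nullity and necessity by assembling $\bH$ from vectors in the null space of the difference of two close codewords. The only cosmetic difference is that the paper phrases the converse with an arbitrary closest pair $\bX_1,\bX_2$ rather than assuming the transmitted codeword $\bX_0$ belongs to that pair, but this does not change the substance of the argument.
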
 
\textbf{Remark:} Silva \etal\ only proved the sufficient condition in 
Prop.~\ref{lemma:iif} when considering additive errors. In fact, the necessary condition also holds without considering the additive errors 
as \cite{koetter08j, silva08j}.   
\begin{IEEEproof} 
 We first prove the sufficient condition. Assume 
 $D(\lc) \geq m-r+1$ and $\rank(\bH)\geq r$. We know 
 $d(\bY,\bX_0\bH)=0$. Suppose that there exists a 
 different codeword $\bX_1\in\mathcal{C}$ with 
 $d(\bY,\bX_1\bH)=0$. We have $(\bX_0-\bX_1) 
 \bH = \bzero$.  Using the rank-nullity theorem of linear algebra, 
 $d(\bX_0,\bX_1) = \rank(\bX_0-\bX_1) \leq M- 
 \rank(\bH) \leq m - r$, i.e., a contradiction to $D(\lc) \geq m -r+1$.

Now we prove the necessary condition. Assume 
$D(\lc) \leq m-r$. There must exist $\bX_1, 
\bX_2\in\mathcal{C}$ such that 
$d(\bX_1,\bX_2)=\rank(\bX_1-\bX_2)\leq m-r$.   
Let 
\begin{equation*} 
 B = \{\mathbf h \in \ffield^{m\times 1}: (\bX_1-\bX_2) 
 \mathbf h = \bzero\}.  
\end{equation*} 
We know $\dim(B)= m-\rank(\bX_1-\bX_2) \geq r$. 
By juxtaposing the vectors in $B$, we can obtain a 
matrix $\bH$ with $\rank(\bH)\geq r$. 
We know $(\bX_1-\bX_2)\bH=\bzero$. So if the 
transformation matrix is $\bH$, the decoder cannot always 
output the correct codeword. 
\end{IEEEproof}

\subsection{Lifted Rank-Metric Codes} 
 
Consider $\loc(H,T)$ with dimension $M\times N$.  The lifted matrix 
codes $L(\lc^{(n)})$, where $\lc^{(n)}\in\ffield^{(T-M)\times n M}$ is 
a rank-metric code, is also called \emph{lifted rank-metric code}. 
The unit-block (one-shot) lifted rank-metric code ($n=1$) is first 
used by Silva \etal\ in random network coding \cite{silva08j}. Here we 
extend their approach to multiple usages of the channel. 
 
By the Singleton bound of rank-metric codes in 
\eqref{eq:singleton}, 
\begin{equation*} 
  \frac{\log_2 |\lc^{(n)}|}{(T-M) \log_2 
  q} \leq nM- D(\lc^{(n)})+1. 
\end{equation*} 
Thus the rate of $L_M(\lc^{(n)})$ 
\begin{align} 
  \mathcal{R}^{(n)} & = \frac{\log_2 |\lc^{(n)}|}{nT\log_2q} \nonumber \\ & \locequa{\leq} \frac{(nM- D(\lc^{(n)})+1)(T-M)\log_2 q}{nT\log_2 q} \nonumber \\ 
   & = (1-M/T) (M- D(\lc^{(n)})/n+1/n), \label{eq:free} 
\end{align} 
where the equality in (a) is achieved by MRD codes.

Suppose that the transmitted codeword is $L(\tilde\bX_0^{(n)})$.  
By the decoding equality in \eqref{eq:dec},  
we can decode $\tilde\bY^{(n)}$ 
using the minimum distance decoder defined in \eqref{eq:mdd}. 
By Prop. \ref{lemma:iif}, the minimum distance decoder is 
  guaranteed to return $\hat{\bX}^{(n)}=\tilde\bX_0^{(n)}$ for all 
  $\bH^{(n)}$ with $\rank(\bH^{(n)}) \geq nM- D(\lc^{(n)})+1$.

\subsection{Throughput of Lifted Rank-Metric Codes} 
 
Let
\begin{equation}\label{eq:rh} 
  H^{(n)}= \begin{bmatrix} H_1 & & & & \\ & H_2 & & & \\ & & \ddots 
    & & \\ & & & H_n \end{bmatrix}, 
\end{equation} 
in which $H_i$, $i=1,\cdots, n$, are independent and follow the same 
distribution of $H$.  By our analysis above, a receiver using the minimum 
distance decoder can judge if its decoding is guaranteed to be correct 
by checking $\rank(\bH^{(n)})$, which is an instance of $H^{(n)}$.  If 
$\rank(\bH^{(n)})\geq nM- D(\lc^{(n)})+1$, the decoding is guaranteed 
to be correction. Otherwise, if $\rank(\bH^{(n)})< nM- D(\lc^{(n)})+1$, correct decoding cannot be guaranteed. 
Define the \emph{throughput} of $L(\lc^{(n)})$ as 
\begin{align*} 
  TP_{\text{RM}}(\lc^{(n)}) \triangleq  \mathcal{R}^{(n)} 
\Pr\{\rank(H^{(n)})\geq nM- D(\lc^{(n)})+1\}, 
\end{align*} 
where $\text{RM}$ stands for rank metric.  Note that this is the 
zero-error maximum achievable rate of lifted rank-metric codes. For 
any rate higher than $TP_{\text{RM}}(\lc^{(n)})$, we cannot guarantee 
error-free decoding.

Since lifted rank-metric 
codes are special channel training coding method, we have 
$TP_{\text{RM}}(\lc^{(n)})\leq \bar C_{\text{CT}}(H,T)$.  
Now we 
look at whether lifted rank-metric codes achieve $\bar C_{\text{CT}}(H,T)$. 
 
\begin{theorem}\label{the:a} 
  For any positive integer $n$, 
\begin{equation}\label{eq:iaa} 
  \max_{\lc^{(n)}\subset \ffield^{(T-M)\times nM}}TP_{\text{RM}}(\lc^{(n)}) \leq \rho^{(n)} \bar C_{\text{CT}}(H,T),  
\end{equation} 
where $\rho^{(n)} \leq 1$ and the equality in \eqref{eq:iaa} holds 
if there exist MRD codes 
$\lc^{(n)} \subset \ffield^{(T-M)\times nM}$ with $D(\lc^{(n)})=nM-r+1$ for $r=1,2,\cdots,nN^*$. 
Moreover, i) $\rho^{(n)} = 1$ if 
and only if $H$ has a constant rank;  ii) $\lim_{n\rightarrow 
  \infty}\rho^{(n)} = 1$. 
\end{theorem}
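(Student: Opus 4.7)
The plan is to reduce the claim to a scalar inequality about $\rank(H^{(n)})$, then apply Markov's inequality and a law-of-large-numbers argument.

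First, writing $r = nM - D(\lc^{(n)}) + 1$, the Singleton bound (\ref{eq:free}) gives $\mathcal{R}^{(n)} \leq (1 - M/T)\, r/n$ with equality for MRD codes, so
\begin{equation*}
TP_{\text{RM}}(\lc^{(n)}) \;\leq\; (1 - M/T)\, \frac{r}{n}\, \Pr\{\rank(H^{(n)}) \geq r\}.
\end{equation*}
Since $\rank(H^{(n)}) \leq nN^*$ almost surely, the right-hand side vanishes for $r > nN^*$, and maximizing over $\lc^{(n)}$ reduces to maximizing over $r \in \{1, \ldots, nN^*\}$. Defining
\begin{equation*}
\rho^{(n)} \;\triangleq\; \frac{1}{\E[\rank(H)]} \max_{1 \leq r \leq nN^*} \frac{r}{n}\, \Pr\{\rank(H^{(n)}) \geq r\},
\end{equation*}
inequality (\ref{eq:iaa}) follows, and the equality there is attained as soon as MRD codes of minimum distance $nM - r + 1$ exist for the maximizing $r$; the stated hypothesis supplies them.

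Next, to show $\rho^{(n)} \leq 1$, I apply Markov's inequality to $Z = \rank(H^{(n)})$. Because $H^{(n)}$ is block diagonal with i.i.d.\ blocks distributed as $H$, $\E[Z] = n\E[\rank(H)]$, so $\frac{r}{n}\Pr\{Z \geq r\} \leq \E[Z]/n = \E[\rank(H)]$ for every $r$. For claim (i), the ``if'' direction is immediate: when $\rank(H) = r_0$ a.s., $Z = nr_0$ a.s., and the choice $r = nr_0$ makes Markov tight so $\rho^{(n)} = 1$. For the ``only if'' direction, $\rho^{(n)} = 1$ forces Markov to be tight at the maximizing $r^*$, which requires $Z \in \{0, r^*\}$ almost surely. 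The i.i.d.\ block structure then closes the argument: if $\rank(H)$ took two distinct values $a < b$ with positive probability, then for $n \geq 2$ the sum $Z = \sum_{i=1}^n \rank(H_i)$ would take the $n+1$ distinct values $na + k(b-a)$, $k = 0,\ldots,n$, each with positive probability, contradicting $|\operatorname{supp}(Z)| \leq 2$; hence $\rank(H)$ is a.s.\ constant.

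Finally, for claim (ii) I invoke the weak law of large numbers: $Z/n \to \E[\rank(H)]$ in probability. For any $\epsilon > 0$, the choice $r_n = \lceil n(\E[\rank(H)] - \epsilon) \rceil$ satisfies $\Pr\{Z \geq r_n\} \to 1$ and thus $\frac{r_n}{n}\Pr\{Z \geq r_n\} \to \E[\rank(H)] - \epsilon$, giving $\liminf_n \rho^{(n)} \geq 1 - \epsilon/\E[\rank(H)]$; sending $\epsilon \to 0$ yields $\lim_n \rho^{(n)} = 1$. The main obstacle is the converse in (i): translating ``Markov tight at some $r^*$'' into constancy of $\rank(H)$ requires a careful use of the i.i.d.\ block structure. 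Everything else is either a rewriting (Singleton), a one-line Markov bound, or a standard concentration argument.
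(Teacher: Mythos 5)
Your proposal is correct and follows essentially the same route as the paper: the Singleton bound (with MRD codes giving equality) reduces the maximization over codes to a maximization over $r\leq nN^*$, the bound $\rho^{(n)}\leq 1$ is exactly the paper's chain of inequalities (i.e.\ Markov's inequality applied to $\rank(H^{(n)})$), and part (ii) is the same weak-law argument with an integer $r_n\approx n(\E[\rank(H)]-\epsilon)$. The one place you genuinely diverge is the converse of (i), and there your treatment is actually more careful than the paper's: the paper asserts that tightness of its two inequalities forces $\Pr\{\rank(H^{(n)})=r_n\}=1$, whereas tightness only forces $\operatorname{supp}(\rank(H^{(n)}))\subseteq\{0,r_n\}$; you note this correctly and then use the i.i.d.\ block structure to rule out a two-point support when $n\geq 2$. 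The price is that your argument (necessarily) says nothing for $n=1$, and indeed for $n=1$ claim (i) as stated is false whenever $\rank(H)$ is supported on $\{0,a\}$ with both masses positive, since then $r=a$ already makes Markov tight and $\rho^{(1)}=1$ without $H$ having constant rank. So the discrepancy exposes a small flaw in the theorem/proof of the paper (which your $n\geq 2$ argument repairs for all other $n$) rather than a gap in your reasoning; it would be worth stating the converse of (i) either for $n\geq 2$ or under the additional hypothesis $\Pr\{\rank(H)=0\}=0$.
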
 
\begin{IEEEproof} 
Let $N^*=\min\{M,N\}$, the maximum possible rank of 
$H$. Let $\theta(\lc^{(n)}) = nM-D(\lc^{(n)})+1$.  
By \eqref{eq:free}, 
\begin{align*} 
  TP_{\text{RM}}(\lc^{(n)}) \leq  \left(1-\frac{M}{T}\right)\frac{ \theta(\lc^{(n)})}{n} \Pr\left\{{\rank(H^{(n)})}\geq \theta(\lc^{(n)})\right\},  
\end{align*} 
where the equality holds for MRD codes. Thus 
\begin{align*} 
  \frac{\max_{\lc^{(n)}\subset \ffield^{(T-M)\times M}}TP_{\text{RM}}(\lc^{(n)})}{\bar C_{\text{CT}}(H,T)} & =  \frac{\max_{r\leq nN^*} \max_{\lc^{(n)}\subset 
    \ffield^{(T-M)\times nM}: \theta(\lc^{(n)}) =r}T_{\text{MDD}}(\lc^{(n)})}{(1-M/T) \E[\rank(H)]} \nonumber \\ 
  & \locequa{\leq} \frac{\max_{r\leq nN^*} r \Pr\{\rank(H^{(n)})\geq r\}}{n\E[\rank(H)]} \\ & \triangleq \rho^{(n)},\nonumber 
\end{align*}  
where the equality in (b) holds if there exist MRD codes 
$\lc^{(n)} \subset \ffield^{(T-M)\times nM}$ with $D(\lc^{(n)})=nM-r+1$ for $r=1,2,\cdots,nN^*$.

Now we look at the property of $\rho^{(n)}$.  
For any $0\leq r \leq nN^*$, we have 
\begin{align*} 
  \E[\rank(H^{(n)})]  & = \sum_{s}sp_{\rank(H^{(n)})}(s)\nonumber \\ 
  & \locequa{\geq}  \sum_{s\geq r}sp_{\rank(H^{(n)})}(s) \\ 
  & \locequa{\geq} \sum_{s\geq r}rp_{\rank(H^{(n)})}(s) \\ 
  & = r \Pr\{\rank(H^{(n)})\geq r\}.\nonumber 
\end{align*} 
Thus, $\rho^{(n)}\leq 1$.  
Now we check the condition that $\rho^{(n)}=1$. First, if 
$p_{\rank(H)}(r_0)=1$ for some $0\leq r_0\leq M$, then 
$\rho^{(n)}=1$. Second, if $\E[\rank(H^{(n)})] = r_n 
\Pr\{\rank(H^{(n)})\geq r_n\}$ for some $0\leq r_n \leq nN^*$, then the 
equalities in (c) and (d) hold, which give 
$\Pr\{\rank(H^n) = r_n\} = 1$. Hence, $\Pr\{\rank(H)=r_n/ n\} =1$.

Let $\mu=\E[\rank(H)]$. By the weak law of large numbers, for any 
$\delta>0$ and $\epsilon>0$ there exists $n_0$ such that when $n>n_0$ 
  \begin{align*} 
  	\Pr\{|\rank(H^{(n)})/n-\mu|\leq \delta/2\}\geq 1-\epsilon. 
  \end{align*} 
  Hence, 
  \begin{align*} 
  	\Pr\{\rank(H^{(n)})/n \geq \mu - \delta/2\}\geq 1-\epsilon. 
  \end{align*} 
  Further, for the same $\delta$ when $n>2/\delta$, there exists 
  integer $r_0$ between $n(\mu-\delta)$ and $n(\mu-\delta/2)$. So, 
  when $n>\max\{n_0, 2/\delta\}$, 
  \begin{align*} 
    \rho^{(n)} & \geq \frac{r_0 \Pr\{\rank(H^{(n)}) \geq r_0\}}{n\mu} \\ 
		& \geq \frac{n(\mu-\delta) \Pr\{\rank(H^{(n)})\geq n(\mu-\delta/2)\}}{n\mu} \\ 
		& \geq \frac{(\mu-\delta)(1-\epsilon)}{\mu} \\ 
		& > 1-(\delta/\mu+\epsilon). 
  \end{align*}  
  Therefore, $\lim_{n\rightarrow \infty}\rho^{(n)} = 1$. 
\end{IEEEproof} 
 
We know that when $T-M\geq nM$, for any $0<r\leq nN^*$ MRD code 
$\lc^{(n)}$ with $D(\lc^{(n)}) = nM-r+1$ can be constructed using 
Gabidulin codes\cite{gabidulin85}.  If we use Gabidulin codes the 
equality in \eqref{eq:iaa} holds when $n\leq T/M-1$.  Let us see two 
cases: i) $H$ has a constant rank. Now $\rho^{(1)}=1$. Thus when 
$T\geq 2M$, lifted Gabidulin codes can achieve $\bar 
C_{\text{CT}}$. ii) $H$ has a random rank we require a sufficiently 
large $n_0$ to guarantee that $\rho^{(n_0)}$ is close to $1$. If 
$T\geq (n_0+1)M$, lifted Gabidulin codes can approach $\bar 
C_{\text{CT}}$.  The unknown part is $T<(n_0+1)M$, for which we do not 
know if lifted rank-metric codes achieve $C_{\text{CT}}$.

\subsection{Insufficiency of Unit-block Lifted Rank-Metric Codes} 
\label{sec:oneshot} 
 
In general, we need $n>1$ to have $\rho^{(n)}$ close to $1$. We will 
not study problems like that how large $n$ is sufficient to have 
$1-\rho^{(n)}<\epsilon$ in this paper. But we want to see whether 
$n=1$ is good enough because of its low encoding/decoding 
complexity. As we show in the follows, however, unit-length lifted 
rank-metric codes cannot achieve $\bar C_{\text{CT}}(H,T)$ in general 
and the gap between the maximum achievable rate of unit-block lifted 
rank-metric codes to $\bar C_{\text{CT}}(H,T)$ can be large.  Our 
evaluation reflects the performance of such codes for random linear 
network coding. 
 
Recall that $N^*=\min\{M,N\}$.  For $0<c\leq 
1$ and $N^*>0$ define 
\begin{align*} 
   \rho_{\min}(c,N^*) & = \min_{p_{\rank(H)}:\E[\rank(H)] = c, \rank(H)\leq N^*} \rho^{(1)}. 
\end{align*} 
Considering $T\geq 2M$, there exists a rank distribution of $H$ such that 
\begin{equation*} 
  \max_{\lc\subset \ffield^{(T-M)\times M}}TP_{\text{RM}}(\lc) = \rho_{\min}(c,N^*) \bar C_{\text{CT}}(H,T).  
\end{equation*} 
Linear programming algorithms can be applied to find 
$\rho_{\min}(c,N^*)$.  In Table~\ref{table:w}, we show the values 
$\rho_{\min}(c,6)$ for $c=1,\cdots, 6$. We see $\rho_{\min}(6,6)=1$, 
which is the case that the channel has a constant rank. For $c<6$, 
$\rho_{\min}(c,6)$ is less than $0.65$.  
In Fig.~\ref{fig:v}, we show that the value of 
$\rho_{\min}(3,N^*)$ decreases with $N^*$. $\rho_{\min}(3,200)$ is 
even less than one-fifth, which means that unit-block lifted rank-metric 
codes can achieve less than one-fifth of $\bar{C}_{\text{CT}}(H,T)$.

\begin{table}[!t] 
\renewcommand{\arraystretch}{1.3} 
\caption{The values $\rho_{\min}(c,6)$} 
    \label{table:w} 
\centering 
\begin{tabular}{c||c|c|c|c|c|c} 
  \hline 
  $c$ &  1 & 2 & 3 & 4 & 5 & 6\\ 
  \hline\hline 
  $ \rho_{\min}(c,6) $ & 0.408 & 0.408 & 0.460 & 0.526 & 
  0.649 & 1.0 \\  
  \hline 
\end{tabular} 
\end{table} 
 
\begin{figure}[t] 
  \centering 
  \includegraphics[width=\textwidth]{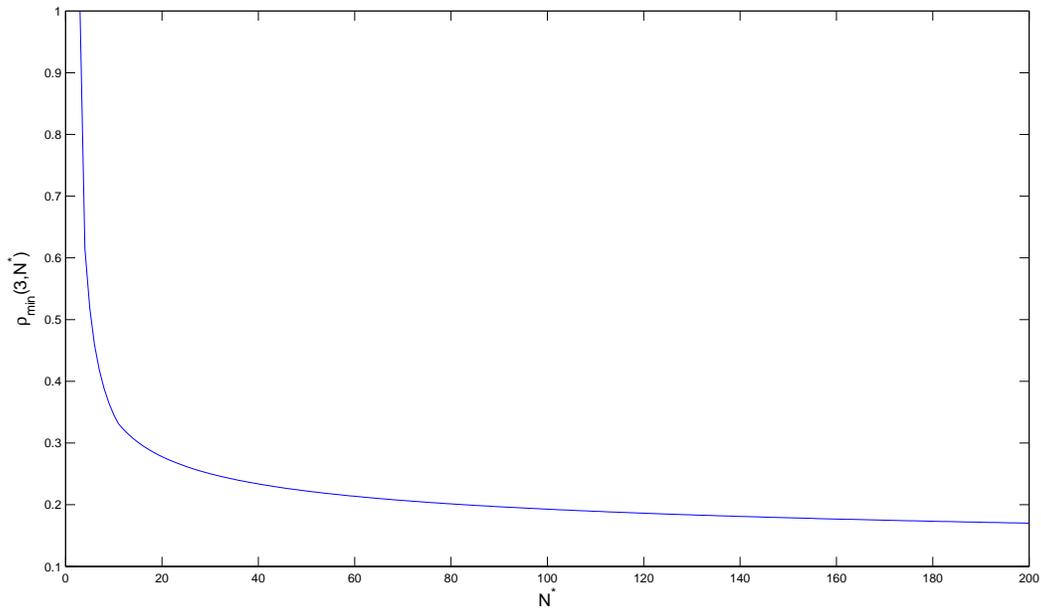} 
  \caption{The value of $\rho_{\min}(3,N^*)$ for 
    $N^*=3,4,\cdots, 200$.}  
  \label{fig:v} 
\end{figure} 
 
\subsection{Complexity of Lifted Rank-Metric Codes} 
 
If we apply Gabidulin codes, a family of MRD codes, the encoding 
requires $\bigO((T-M)n^2sM)$ operations in $\ffield$. For decoding, we 
can apply the algorithm in \cite{silva08j}, the complexity of decoding 
algorithm is given by $\bigO(D(\lc^{(n)})(T-M)n^2s^2)$ operations in 
$\ffield$. (Here we consider that one field operation in $GF(q^m)$ 
require $\bigO(m^2)$ field operations in $\ffield$.)

\section{Linear Matrix Codes  for LOCs} 
\label{sec:linear} 
 
In general, we require $T>> M$ to achieve $\bar C_{\text{CT}}(H,T)$ 
using lifted rank-metric codes. In this section, we propose another 
coding scheme that can achieve $\bar C_{\text{CT}}$ for all $T\geq M$. 
 
\subsection{Linear Matrix Codes} 
 
Consider $\loc(H,T)$ with dimension $M\times 
N$. 
For any positive real number $s\leq N$, let $\mathbf{G}^{(n)}$ be an 
$\floor{ns}\times nM$ matrix, called the generator matrix. The matrix 
code generated by $\mathbf{G}^{(n)}$ is 
\begin{equation*} 
  \mathcal{G}_{T-M}^{(n)} = \{\mathbf{BG}^{(n)}: \mathbf B \in 
  \ffield^{(T-M)\times \floor{ns}} \}. 
\end{equation*} 
The code for $\loc(H,T)$ is the lifted matrix code 
$L(\mathcal{G}_{T-M}^{(n)})$, called \emph{lifted linear matrix code}.  
The rate of $L(\mathcal{G}^{(n)})$ is 
\begin{align*} 
  \mathcal{R}^{(n)} & = 
  \frac{\log_2|\ffield^{(T-M)\times \floor{ns}}|}{nT\log_2 q} \\ 
  & = (1-M/T)\floor{ns}/n. 
\end{align*} 
When $n\rightarrow \infty$, $\mathcal{R}^{(n)} \rightarrow 
(1-M/T)s$.

Suppose that the transmitted codeword is 
$L(\mathbf{B}_0\mathbf{G}^{(n)})$. 
The received matrix is given by \eqref{eq:output}.  
The decoding equation in \eqref{eq:dec} now becomes 
\begin{equation}\label{eq:declinear} 
  \tilde \bY^{(n)} = \mathbf{B}_0\mathbf{G}^{(n)}\bH^{(n)}. 
\end{equation} 
Since the receiver knows $\bH^{(n)}$ and 
$\mathbf{G}^{(n)}$, the information $\mathbf{B}_0$ can be uniquely determined if and only if $\mathbf{G}^{(n)}\bH^{(n)}$ is full rank.  
Thus, the decoding error $P_{e}^{(n)}$ using \eqref{eq:declinear} satisfies 
\begin{equation*} 
  P_{e}^{(n)} \leq \Pr\{\rank(\mathbf{G}^{(n)}H^{(n)}) < \floor{ns}\}. 
\end{equation*} 
 
We can prove the following result in the next subsection.  
 
\begin{theorem}\label{the:linear} 
  Consider linear matrix codes for $\loc(H,T)$ with dimension $M\times 
  N$, and $(s, \epsilon)$ satisfying $0<s < s + \epsilon <\E[\rank(H)]$.   
  More than half of the matrices $\mathbf{G}^{(n)}\in 
  \ffield^{\floor{ns}\times nM}$, when used as the generator matrix, 
  give that 
  \begin{equation*} 
    P_{e}^{(n)} \leq \Pr\{\rank(\mathbf{G}^{(n)}H^{(n)}) < \floor{ns}\} < 2\left(\frac{q^{-\floor{n\epsilon}}}{q-1} + g{(s+\epsilon)}^{n}\right) 
  \end{equation*} 
  where $g{(s+\epsilon)}<1$ is defined in \eqref{eq:fung}. 
\end{theorem}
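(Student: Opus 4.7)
The plan is to use the probabilistic method: bound the \emph{expected} decoding error probability when the generator matrix $\mathbf G^{(n)}$ is drawn uniformly from $\ffield^{\floor{ns}\times nM}$, and then apply Markov's inequality to conclude that more than half of the deterministic generator matrices achieve an error bound within a factor of two of this average. The codeword-independence of the linear code ensures that $P_e^{(n)}$ only depends on whether $\mathbf G^{(n)}H^{(n)}$ is full row rank, so the analysis reduces purely to controlling $\Pr\{\rank(\mathbf G^{(n)} H^{(n)}) < \floor{ns}\}$ over the joint randomness of $H^{(n)}$ and $\mathbf G^{(n)}$.

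First I would decompose the bad event according to whether $H^{(n)}$ is rank-rich or rank-poor:
\[
\Pr\{\rank(\mathbf G^{(n)} H^{(n)}) < \floor{ns}\} \leq \Pr\{\rank(H^{(n)}) < \floor{n(s+\epsilon)}\} + \Pr\{\rank(\mathbf G^{(n)} H^{(n)}) < \floor{ns},\ \rank(H^{(n)}) \geq \floor{n(s+\epsilon)}\}.
\]
Because $H^{(n)}$ is block-diagonal with $n$ independent blocks distributed as $H$, the random variable $\rank(H^{(n)}) = \sum_{i=1}^{n}\rank(H_i)$ is a sum of i.i.d.\ bounded integer random variables with mean $n\E[\rank(H)] > n(s+\epsilon)$. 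A standard Chernoff/Cram\'er bound applied to this sum gives $\Pr\{\rank(H^{(n)}) < \floor{n(s+\epsilon)}\} \leq g(s+\epsilon)^n$ with $g(s+\epsilon)<1$ strictly, matching the second term in the stated bound.

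For the conditional term I would use the following symmetry lemma. If $\mathbf H \in \ffield^{nM\times nN}$ has rank $k$ and is factored as $\mathbf H = UV$ with $U \in \ffield^{nM\times k}$ of full column rank, then $\mathbf{G}^{(n)}\mathbf H = (\mathbf{G}^{(n)} U)V$, and since $V$ has full row rank, $\rank(\mathbf G^{(n)}\mathbf H) = \rank(\mathbf G^{(n)}U)$; moreover, full column rank of $U$ implies that uniform $\mathbf G^{(n)}$ induces a uniform $\mathbf G^{(n)} U \in \ffield^{\floor{ns}\times k}$. Applying this with $\mathbf H = H^{(n)}$ conditional on $\rank(H^{(n)}) = k \geq \floor{n(s+\epsilon)}$ reduces the conditional probability to the probability that a uniformly random $\floor{ns}\times k$ matrix fails to have full row rank, which by Lemma~\ref{lm:rk} equals $1-\cmatt{k}{\floor{ns}}$. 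A telescoping bound gives $1-\cmatt{k}{\floor{ns}} \leq q^{\floor{ns}-k}/(q-1)$, and superadditivity of the floor function $\floor{n(s+\epsilon)} \geq \floor{ns}+\floor{n\epsilon}$ yields the clean bound $q^{-\floor{n\epsilon}}/(q-1)$.

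Combining the two pieces, $\E_{\mathbf G^{(n)}}\bigl[\Pr_{H^{(n)}}\{\rank(\mathbf G^{(n)}H^{(n)})<\floor{ns}\}\bigr] \leq q^{-\floor{n\epsilon}}/(q-1) + g(s+\epsilon)^n$, and Markov's inequality applied to this nonnegative function of $\mathbf G^{(n)}$ yields that strictly more than half of all generator matrices satisfy the error bound with the factor $2$. The main obstacle I foresee is pinning down the explicit form of $g(s+\epsilon)$ via the optimized moment generating function of $\rank(H)$ and verifying that $g(s+\epsilon)<1$ holds strictly whenever $s+\epsilon<\E[\rank(H)]$; the secondary subtlety is the floor-function bookkeeping so that the exponent matches $q^{-\floor{n\epsilon}}$ rather than a looser bound, which is handled by the superadditivity inequality above.
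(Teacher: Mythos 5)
Your proposal is correct and follows essentially the same route as the paper: split on whether $\rank(H^{(n)})=\sum_i\rank(H_i)$ falls below $\floor{n(s+\epsilon)}$ (handled by the Chernoff bound of Lemma~\ref{lemma:chernoff}), bound the conditional failure probability for a purely random generator by $1-\cmatt{k}{\floor{ns}}\leq q^{-\floor{n\epsilon}}/(q-1)$ using the uniformity fact behind Lemma~\ref{lm:rk} (your rank-factorization $H^{(n)}=UV$ phrasing is just another way of saying the rows of $G^{(n)}H^{(n)}$ are uniform in the row space of $H^{(n)}$), and finish with the averaging/Markov step that the paper packages as Lemma~\ref{lemma:tye}. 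No gaps of substance; the only deferred item, the explicit form of $g(\cdot)$ in \eqref{eq:fung} and that $g<1$ when $s+\epsilon<\E[\rank(H)]$, is exactly the content of the paper's Chernoff lemma.
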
 
 
Thus for any $R<\E[\rank(H)]$, there exists a sequence of lifted linear matrix codes 
with rate $\mathcal{R}^{(n)}\rightarrow R$ and $P_e^{(n)}\rightarrow 0$ 
as $n\rightarrow \infty$. Moreover, $P_e^{(n)}$ decreases exponentially 
with the increasing of $n$.  
So lifted linear matrix codes can achieve the rate $(1-T/M)\E[\rank(H)]$.

\subsection{Performance of Linear Matrix Codes}

\begin{lemma}[Chernoff Bound]\label{lemma:chernoff} 
  Let $\tau_i$, $i=1,\cdots,n$, are independent random variables with 
  the same distribution of $\tau\in\{0,1,\cdots,m\}$. For $\alpha<\E[\tau]$,  
  \begin{equation*} 
    \Pr\left\{\sum_{i}\tau_i < n\alpha\right\} \leq g(\alpha)^n, 
  \end{equation*} 
  where 
  \begin{align} 
    g(\alpha) & = \E[ (A/B)^{(\tau-\alpha)/m}] <1, \label{eq:fung}\\ 
    A & = \sum_{r<\alpha} (\alpha - r) p_\tau(r), \nonumber\\ 
    B & = \sum_{r>\alpha} (r - \alpha) p_\tau(r). \nonumber 
  \end{align} 
\end{lemma}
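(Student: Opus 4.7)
The plan is a classical Chernoff-bound argument with a hand-picked tilt parameter. Since $\alpha<\E[\tau]$, the identity $B-A=\E[\tau]-\alpha$ gives $A<B$, so $\lambda:=(A/B)^{1/m}\in(0,1)$. Because $\lambda<1$, the event $\{\sum_i \tau_i<n\alpha\}$ coincides with $\{\lambda^{\sum_i\tau_i-n\alpha}>1\}$, and Markov's inequality together with independence of the $\tau_i$ yields
\begin{equation*}
\Pr\Bigl\{\sum_i \tau_i<n\alpha\Bigr\}\leq \E\bigl[\lambda^{\sum_i\tau_i-n\alpha}\bigr]=\bigl(\E[\lambda^{\tau-\alpha}]\bigr)^n=g(\alpha)^n.
\end{equation*}

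It remains to show $g(\alpha)<1$. My approach is to exploit the convexity of $x\mapsto\lambda^x$. For $r\geq\alpha$ I would bound $\lambda^{r-\alpha}$ above by the chord of $\lambda^x$ on $[0,m]$, namely $\lambda^{r-\alpha}\leq 1-(r-\alpha)(1-\lambda^m)/m$; for $r<\alpha$ I would use the chord on $[-m,0]$, giving $\lambda^{r-\alpha}\leq 1+(\alpha-r)(\lambda^{-m}-1)/m$. Taking expectations against $p_\tau$ and collecting terms using the definitions of $A$ and $B$ yields
\begin{equation*}
g(\alpha)\leq 1+\frac{A\lambda^{-m}+B\lambda^m-(A+B)}{m}.
\end{equation*}
Substituting $\lambda^m=A/B$ makes the numerator collapse to $A(B/A)+B(A/B)-(A+B)=0$, giving $g(\alpha)\leq 1$.

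The subtle part, and the main obstacle, is upgrading this to the strict inequality $g(\alpha)<1$. I plan to handle this by appealing to the strict convexity of $x\mapsto\lambda^x$ on $\mathbb{R}$ when $\lambda\in(0,1)$: each chord bound above is strict at any point interior to its chord interval. Since $\alpha<\E[\tau]\leq m$ forces $\alpha<m$, and the nontriviality of the problem (ensured by $A>0$ and $B>0$) forces $p_\tau$ to put positive mass on some $r$ with $r-\alpha\in(-m,0)\cup(0,m)$, at least one chord inequality is strict and hence $g(\alpha)<1$. A small case analysis handles degenerate boundary cases such as $\alpha=0$ or $\tau$ concentrated on $\{0,m\}$.
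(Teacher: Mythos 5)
Your proof is correct, but it reaches the crucial inequality $g(\alpha)<1$ by a genuinely different route than the paper. The exponential--Markov step is common to both: the paper first bounds $\Pr\{\sum_i\tau_i<n\alpha\}\leq f(t)^n$ with $f(t)=e^{t\alpha}\E[e^{-t\tau}]$ for every $t>0$ and then evaluates at the particular $t_1$ with $e^{-t_1}=(A/B)^{1/m}$, which is exactly your $\lambda$, so $f(t_1)=\E[\lambda^{\tau-\alpha}]=g(\alpha)$. Where you diverge is in showing $g(\alpha)<1$. The paper argues via calculus on $f$: $f(0)=1$, $f'(0)=\alpha-\E[\tau]<0$, $f''>0$, so $f$ decreases strictly up to its minimizer $t_0$; it then shows $0<t_1\leq t_0$ by sandwiching $A(t)\leq A(0)e^{t\alpha}$ and $B(t)\geq B(0)e^{-t(m-\alpha)}$, whence $g(\alpha)=f(t_1)<f(0)=1$. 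You instead bound $\E[\lambda^{\tau-\alpha}]$ directly by the chords of the convex map $x\mapsto\lambda^x$ on $[-m,0]$ and $[0,m]$, note that the identity $B-A=\E[\tau]-\alpha>0$ makes $\lambda\in(0,1)$, observe that the choice $\lambda^m=A/B$ makes the linear bound collapse exactly to $1$, and invoke strict convexity at an interior mass point (which exists whenever $A>0$, since $\alpha<\E[\tau]\leq m$ gives $\alpha<m$, so any mass point $r<\alpha$ has $r-\alpha\in(-m,0)$) to get strictness. Your route is more elementary (no derivatives, no appeal to the existence of $t_0$) and it explains why the exponent $(A/B)^{1/m}$ is the natural choice; the paper's route additionally exhibits $t_1$ as an explicit underestimate of the optimal Chernoff tilt $t_0$, making clear that $g(\alpha)$ is a computable, generally suboptimal, Chernoff exponent. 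Both arguments share the same mild degeneracy when $A=0$ (then $\lambda=0$ and the formula for $g$ needs separate interpretation, though the probability being bounded is $0$ anyway); you flag this case explicitly, while the paper leaves it implicit.
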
 
\begin{IEEEproof} 
  For any $t>0$, 
  \begin{align} 
    \Pr\left\{\sum_{i}\tau_i < n\alpha\right\}  
    & = \Pr\left\{e^{-t\sum_{i}\tau_i} >e^{-t n\alpha}\right\}\nonumber \\  
    & \locequa{\leq} e^{t n\alpha}\E[e^{-t\sum_{i}\tau_i}] \nonumber\\  
    & \locequa{=} e^{t n\alpha}\prod_i\E[e^{-t\tau_i}] \nonumber \\   
    & = \left(e^{t\alpha}\E[e^{-t\tau}]\right)^n,  \label{eq:ag3}  
  \end{align} 
  where (a) follows from Markov's inequality and (b) follows from independence.  
   
  Now assume $\alpha<\E[\tau]$. 
  Let $f(t)=e^{t\alpha}\E[e^{-t\tau}]$. We know that $f(t)$ is a continuous function for $t\geq 0$ and $f(0)=1$. The first and the second derivatives of $f(t)$ are 
  \begin{align*} 
    f'(t) & = \sum_{r} (\alpha - r) e^{t(\alpha-r)} p_\tau(r),\quad \text{and}\\ 
    f''(t) & = \sum_{r} (\alpha - r)^2 e^{t(\alpha-r)} p_\tau(r), 
  \end{align*} 
  respectively. We see that $f'(0)= \alpha- \E[\tau]<0$ and 
  $f''(t)>0$. Thus, there exists $t_0>0$ such that $f'(t_0)=0$ and 
  $f'(t)<0$ for $0\leq t <t_0$. We give a bound on $t_0$ in the following.  
  Let 
  \begin{align*} 
    A(t) &  = \sum_{r<\alpha} (\alpha - r) e^{t(\alpha-r)} p_\tau(r),\quad \text{and}\\ 
    B(t) & = \sum_{r>\alpha} (r - \alpha) e^{t(\alpha-r)} p_\tau(r). 
  \end{align*} 
  We see that $A(t)$ and $B(t)$ are monotonically increasing and 
  decreasing, respectively. Since $f'(t)=A(t)-B(t)$, we have 
  $A(t_0)=B(t_0)$ and $A(0)<B(0)$. Observe that 
  \begin{align*} 
    A(t) & \leq A(0)e^{t\alpha}, \\ 
    B(t) & \geq B(0)e^{-t(M-\alpha)}. 
  \end{align*} 
  Let $t_1$ such that 
  \begin{equation}\label{eq:892d} 
    A(0)e^{t_1\alpha}=B(0)e^{-t_1(M-\alpha)} 
  \end{equation} 
  We know that $0<t_1\leq t_0$. Thus, $f(t_0)\leq f(t_1)<1$. 
   
  By \eqref{eq:ag3}, 
  \begin{align*} 
    \Pr\left\{\sum_{i}\tau_i < n\alpha\right\} & \leq \min_t f^n(t) \\ 
    & = f^n(t_0) \\ 
    & \leq f^n(t_1). 
  \end{align*} 
  Using \eqref{eq:892d} we have $e^{t_1}= (B(0)/A(0))^{1/M}$. The 
  proof is completed by letting 
  $g(\alpha) = f(t_1)$. 
\end{IEEEproof} 
 
\textbf{Remark:}  
An alternative to the Chernoff bound is Hoeffding's inequality, which gives 
  \begin{equation*} 
    \Pr\left\{\sum_{i}\tau_i < n\alpha\right\} \leq \exp \left\{ -n \frac{2(\alpha-E[\tau])^2}{m^2} \right\}. 
  \end{equation*} 
But in our simulation, the error exponent obtained by the Chernoff bound is better than the one obtained using Heoffding's inequality.

\begin{lemma}\label{lemma:full} 
  Suppose that $G^{(n)}$ is an $\floor{ns}\times nM$ purely random 
  matrix and independent with $H^{(n)}$.  For any $s$ and $\epsilon$ 
  such that $0<s < s + \epsilon <\E[\rank(H)]$,  
  \begin{equation*} 
    \Pr\{\rank(G^{(n)}H^{(n)}) < \floor{ns}\} < \frac{q^{-\floor{n\epsilon}}}{q-1} + g{(s+\epsilon)}^{n}, 
  \end{equation*} 
  where $g(s+\epsilon)<1$ is defined in \eqref{eq:fung}. 
\end{lemma}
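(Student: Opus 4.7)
The plan is to decouple the randomness of $G^{(n)}$ from that of $H^{(n)}$ by first conditioning on the rank of $H^{(n)}$ being at least $\floor{n(s+\epsilon)}$. Specifically, I would use the union bound
\begin{equation*}
  \Pr\{\rank(G^{(n)}H^{(n)}) < \floor{ns}\} \leq \Pr\{\rank(H^{(n)}) < \floor{n(s+\epsilon)}\} + \Pr\{\rank(G^{(n)}H^{(n)}) < \floor{ns},\, \rank(H^{(n)}) \geq \floor{n(s+\epsilon)}\}
\end{equation*}
and bound the two terms separately, matching them to the two summands in the claimed inequality.

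For the first term, recall that $H^{(n)}$ is block-diagonal with independent blocks $H_1,\dots,H_n$ distributed as $H$, so $\rank(H^{(n)}) = \sum_{i=1}^n \rank(H_i)$. Since $s+\epsilon < \E[\rank(H)]$, I would apply the Chernoff bound (Lemma~\ref{lemma:chernoff}) with $\tau_i=\rank(H_i)$ and $\alpha = s+\epsilon$ to obtain $\Pr\{\sum_i \rank(H_i) < n(s+\epsilon)\} \leq g(s+\epsilon)^n$. Dropping the floor only makes the event larger, giving the $g(s+\epsilon)^n$ term.

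For the second term, I would condition on a realization $\bH^{(n)}$ with rank $k \geq \floor{n(s+\epsilon)}$ and factor $\bH^{(n)} = \mathbf A \mathbf B$ with $\mathbf A \in \ff^{nM \times k}$ full column rank and $\mathbf B \in \ff^{k \times nN}$ full row rank. Then $\rank(G^{(n)}\bH^{(n)}) = \rank(G^{(n)}\mathbf A)$ because right-multiplication by $\mathbf B$ preserves rank. The key observation is that since $G^{(n)}$ is purely random and $\mathbf A$ has full column rank, $G^{(n)}\mathbf A$ is itself purely random on $\ff^{\floor{ns} \times k}$: each row of $G^{(n)}$ is uniform on $\ff^{nM}$, and the linear map $g\mapsto g\mathbf A$ pushes the uniform distribution on $\ff^{nM}$ to the uniform distribution on $\ff^{k}$. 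Applying the standard estimate (cf.\ the discussion around $\cmatt{k}{r}$ in \S\ref{sec:pre})
\begin{equation*}
  \Pr\{\rank(G^{(n)}\mathbf A) < \floor{ns}\} = 1 - \cmatt{k}{\floor{ns}} \leq \sum_{i=0}^{\floor{ns}-1} q^{i-k} \leq \frac{q^{-(k-\floor{ns})}}{q-1},
\end{equation*}
and then using $k \geq \floor{n(s+\epsilon)} \geq \floor{ns}+\floor{n\epsilon}$ (since $\floor{x}+\floor{y}\leq\floor{x+y}$), I would conclude the conditional probability is at most $q^{-\floor{n\epsilon}}/(q-1)$. This bound is uniform in the realization $\bH^{(n)}$ (conditional on its rank), so averaging recovers the desired term.

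I do not expect any serious obstacle: the main subtlety is verifying that $G^{(n)}\mathbf A$ remains uniformly distributed when $\mathbf A$ has full column rank, which is a clean counting argument, and the floor inequality $\floor{n(s+\epsilon)} - \floor{ns} \geq \floor{n\epsilon}$ that connects the exponent in the estimate to the parameter $\epsilon$ in the statement.
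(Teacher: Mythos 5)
Your proposal is correct and takes essentially the same approach as the paper: the paper's proof likewise splits on whether $\rank(H^{(n)})\geq \floor{n(s+\epsilon)}$, bounds the low-rank event via the Chernoff bound of Lemma~\ref{lemma:chernoff}, and bounds the conditional failure probability by $q^{-\floor{n\epsilon}}/(q-1)$ through the quantity $\cmatt{i}{\floor{ns}}$ together with the same floor inequality and geometric-series estimate. The only cosmetic difference is that you re-derive the conditional full-rank probability from the factorization $\bH^{(n)}=\mathbf{A}\mathbf{B}$ and the uniformity of $G^{(n)}\mathbf{A}$, whereas the paper simply invokes Lemma~\ref{lm:rk}, and you phrase the estimate as a union bound on the failure event rather than lower-bounding the success probability.
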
 
\begin{IEEEproof} 
  Let $F^{(n)}=G^{(n)}H^{(n)}$ and let  
  \begin{equation*} 
   a_n(i) \triangleq 
  \Pr\left\{\rank(F^{(n)}) = \floor{ns}| \rank(H^{(n)}) = i \right\}.  
  \end{equation*} 
  Let $F_i$ be the $i$th row of $F^{(n)}$.  Since $G^{(n)}$ contains 
  uniformly independent components, $F_i$, $i=1,\cdots, \floor{ns}$, 
  are independent and uniformly distributed in the vector space 
  spanned by the row vectors of $H^{(n)}$.  For $i\geq 
  \floor{n(s+\epsilon)}$, 
  \begin{align*} 
    a_n(i) & \stackrel{\text{(a)}}{=} \cmatt{i}{\floor{ns}}\\ 
    & =  \prod_{k=i-\floor{ns}+1}^{i} (1-q^{-k}) \\ 
	& > \prod_{k=\floor{n(s+\epsilon)}-\floor{ns}+1}^{\infty} (1-q^{-k}) \\ 
	& \geq \prod_{k=\floor{n\epsilon}+1}^{\infty} (1-q^{-k}) \\ 
	& \geq 1 - \sum_{k=\floor{n\epsilon}+1}^{\infty} q^{-k} \\ 
        & = 1 - q^{-\floor{n\epsilon}}/(q-1), 
  \end{align*} 
  where (a) follows from Lemma~\ref{lm:rk}. 
  Moreover, using the Chernoff bound in Lemma~\ref{lemma:chernoff},  
  \begin{align*} 
    \Pr\{\rank(H^{(n)})  < \floor{n(s+\epsilon)} \} & \leq \Pr\{\rank(H^{(n)}) < n(s+\epsilon) \}  \\ & \leq \left(g(s+\epsilon)\right)^{n}, 
  \end{align*} 
  where $g(\cdot)$ is defined in \eqref{eq:fung} and 
  $g{(s+\epsilon)}<1$.  Therefore, 
  \begin{align*} 
    \lefteqn{\Pr\{\rank(F^{(n)}) = \floor{ns}\}} \\   
     & \geq \sum_{i\geq \floor{n(s+\epsilon)}} a_n(i)p_{\rank(H^{(n)})}(i), 
   \\ &  > \left(1 - \frac{q^{-\floor{n\epsilon}}}{q-1} \right)  
	\Pr\{\rank(H^{(n)}) \geq \floor{n(s+\epsilon} \} \\ 
        & \geq \left(1 - \frac{q^{-\floor{n\epsilon}}}{q-1} \right) \left(1-g(s+\epsilon)^{-n} \right) \\ 
        & > 1 - \frac{q^{-\floor{n\epsilon}}}{q-1} - g{(s+\epsilon)}^{n}. 
  \end{align*} 
The proof is completed. 
\end{IEEEproof}

\begin{lemma}\label{lemma:tye} 
   Let $0\leq b_i\leq 1$, $i=1,\cdots, n$, be a sequence of real numbers. If $\sum_{i=0}^n b_i /n \leq \epsilon/2$ for some $\epsilon>0$, then there are more than half of the numbers in the sequence with values at most $\epsilon$.  
\end{lemma}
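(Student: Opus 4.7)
The statement is a standard Markov-type inequality on a finite sequence, and my plan is to prove it by contraposition. Specifically, I would assume, for the sake of contradiction, that at least half of the indices $i$ satisfy $b_i > \epsilon$, and then show that this forces the average $\sum_i b_i/n$ to exceed $\epsilon/2$, contradicting the hypothesis.

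Concretely, let $S = \{\, i : b_i > \epsilon\,\}$ and suppose $|S| \geq n/2$. Since $b_i \geq 0$ for all $i$, I can drop the indices outside $S$ in the sum to obtain
\begin{equation*}
  \sum_{i} b_i \;\geq\; \sum_{i \in S} b_i \;>\; |S|\,\epsilon \;\geq\; \frac{n\epsilon}{2}.
\end{equation*}
Dividing by $n$ yields $\frac{1}{n}\sum_i b_i > \epsilon/2$, which contradicts the hypothesis $\frac{1}{n}\sum_i b_i \leq \epsilon/2$. Therefore $|S| < n/2$, i.e., strictly more than half of the indices satisfy $b_i \leq \epsilon$, which is exactly the stated conclusion.

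There is essentially no obstacle in this argument; the only mild care needed is to keep the strict and non-strict inequalities aligned so that the final contradiction is genuine (the strict inequality comes from $b_i > \epsilon$ for $i \in S$, while the hypothesis is non-strict). I also note that the upper bound $b_i \leq 1$ stated in the lemma plays no role in the proof itself; it is presumably an artifact of the intended application (where the $b_i$ will be probabilities), and the argument above in fact works for any nonnegative sequence.
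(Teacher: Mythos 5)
Your proof is correct and is essentially the paper's own argument: the paper also argues by contradiction, assuming the set $\mathcal{A}=\{b_i: b_i\leq \epsilon\}$ has at most $n/2$ elements and deriving $\sum_i b_i > \epsilon(n-|\mathcal{A}|)\geq n\epsilon/2$, which contradicts the hypothesis. Your handling of the strict versus non-strict inequalities and your remark that the bound $b_i\leq 1$ is not needed are both accurate.
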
 
\begin{IEEEproof} 
  Let $\mathcal{A} = \{b_i:b_i\leq \epsilon\}$. If $|\mathcal{A}|\leq n/2$, then 
  \begin{align*} 
    \sum_{i=0}^n b_i & = \sum_{i\in \mathcal{A}} b_i + \sum_{i\notin \mathcal{A}} b_i \\ & > \epsilon(n-|\mathcal{A}|) \\ 
	& \geq n\epsilon/2. 
  \end{align*} 
  We have a contradiction to $\sum_{i=0}^n b_i /n \leq \epsilon/2$. Thus, $|\mathcal{A}|> n/2$. 
\end{IEEEproof}

\begin{IEEEproof}[Proof of Theorem \ref{the:linear}] 
  There are totally $q^{n\floor{ns}M}$ $\floor{ns}\times nM$ matrices. The average probability of error, when using these matrices uniformly, is upper bounded by 
  \begin{align*} 
    \lefteqn{\sum_{\mathbf{G}^{(n)}\in \ffield^{\floor{ns}\times nM}} 
    \Pr\{\rank(\mathbf{G}^{(n)}H^{(n)}) < \floor{ns}\} 
    q^{-n\floor{ns}M}} \\ 
    & = \sum_{\mathbf{G}^{(n)}\in 
      \ffield^{\floor{ns}\times nM}} \Pr\{\rank(\mathbf{G}^{(n)}H^{(n)}) < \floor{ns}\} p_{G^{(n)}}(\mathbf{G}^{(n)}) \\ 
    & = \Pr\{\rank(G^{(n)}H^{(n)}) < \floor{ns}\} \\ & \leq \frac{q^{-\floor{n\epsilon}}}{q-1} + 
    g{(s+\epsilon)}^{n}, 
  \end{align*} 
  where $G^{(n)}$ is a purely random matrix and the last inequality 
  follows from Lemma~\ref{lemma:full}.  Thus by Lemma~\ref{lemma:tye}, half of these matrices give a probability lower than $2(\frac{q^{-\floor{n\epsilon}}}{q-1} + 
    g{(s+\epsilon)}^{n})$. 
\end{IEEEproof} 
 
\subsection{Complexity of Lifted Linear Matrix Codes} 
 
In practice, we can use a pseudorandom generator to generate matrix 
$\mathbf{G}^{(n)}$, called \emph{pseudorandom generator matrix}, and 
share the pseudorandom generator in both the transmitter and the 
receiver. Discussion of the pseudorandom generator design is out of 
the scope of this paper.  The encoding complexity using a pseudorandom 
generator matrix is $\bigO((T-M)Msn^2)$ and the decoding based on 
Gaussian elimination requires $\bigO(n^3s^3+(T-M)n^2s^2)$ operations in 
$\ffield$. 
 
Compared with the lifted Gabidulin Codes, the complexity of decoding a 
lifted linear Matrix code using Gaussian elimination is higher. To 
reduce the complexity of encoding and decoding is an important future 
work to make lifted linear matrix codes practical.

\subsection{Rateless Coding} 
 
Our coding schemes, both the lifted rank-metric codes1 and the lifted 
linear matrix codes, require only $\E[\rank(H)]$.  Here we show that 
the lifted linear matrix codes can be realized ratelessly without the 
knowledge of $\E[\rank(H)]$ if there exists one-bit feedback from the receiver 
to the transmitter.

Suppose that we have a sequence of $R\times M$ matrices $\mathbf{G}_i$, 
$i=1,2,\cdots$, called the series of the generator matrices of rateless lifted linear matrix codes, which is known by both the transmitter and 
the receiver. Here $R$ is a design parameter.  
Write 
\begin{equation*} 
  \mathbf{G}^{(n)} = \begin{bmatrix}\mathbf{G}_1& \mathbf{G}_2&\cdots&\mathbf{G}_n \end{bmatrix}. 
\end{equation*} 
The transmitter forms its messages into a $(T-M)\times R$ 
message matrix $\mathbf{B}$, and it keeps on transmitting 
$L(\mathbf{BG}_i)$, $i=1,2,\cdots$, until it receives a feedback from the receiver.  
The $i$th output of the channel is given in \eqref{eq:output}.  
After collecting the $n$th output, the receiver checks that  
if $\mathbf{G}^{(n)}\bH^{(n)}$ has rank $R$.  
If $\mathbf{G}^{(n)}\bH^{(n)}$ has rank $R$, the receiver sends a feedback to the transmitter and decodes the message matrix $\mathbf{B}$ by solving the equation  
$\tilde\bY^{n} = \mathbf{B} \mathbf{G}^{(n)}\bH^{(n)}$. 
After received the feedback, the transmitter can transmit 
another message matrix.

Applying Theorem~\ref{the:linear}, we can evaluate the performance of the rateless code.  
The rateless lifted linear matrix codes can achieve the rate $(1-M/T)\E[\rank(H)]$. 
 
\begin{corollary}\label{the:rateless}  
  Consider rateless linear matrix codes for $\loc(H,T)$ with dimension 
  $M\times N$. There exists a series of generator matrices of rateless 
  lifted linear matrix code $\mathbf{G}_i\in \ffield^{R\times M}, 
  i=1,2,\dots$, such that the transmission of one message matrix can 
  be successful decoded with probability at least 
  \begin{equation*} 
    1- 2\left(\frac{q^{-\floor{n\epsilon}}}{q-1} + g{(R/n+\epsilon)}^{n}\right) 
  \end{equation*} 
  after $n > R/\E[\rank(H)]$ transmission, where $0<\epsilon<\E[\rank(H)-R/n$ and $g{(R/n+\epsilon)}<1$ is defined in \eqref{eq:fung}. 
\end{corollary}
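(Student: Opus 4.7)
The plan is to reduce the corollary directly to Theorem~\ref{the:linear} by identifying the concatenated generator matrix $\mathbf{G}^{(n)} = [\mathbf{G}_1\ \mathbf{G}_2\ \cdots\ \mathbf{G}_n]\in\ffield^{R\times nM}$ with the generator matrix of a (non-rateless) lifted linear matrix code of length $n$ and parameter $s=R/n$, so that $\floor{ns}=R$.

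First I would observe that, by the rateless decoding equation $\tilde\bY^{(n)}=\mathbf{B}\mathbf{G}^{(n)}\bH^{(n)}$, the receiver can uniquely recover $\mathbf{B}$ after $n$ transmissions if and only if $\mathbf{G}^{(n)}\bH^{(n)}$ has full row rank $R$. Thus the failure probability at step $n$ equals $\Pr\{\rank(\mathbf{G}^{(n)}H^{(n)})<R\}$, which is exactly the quantity controlled in Theorem~\ref{the:linear}.

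Second, I would verify the hypotheses. The condition $n>R/\E[\rank(H)]$ rewrites as $s=R/n<\E[\rank(H)]$, so one may pick $\epsilon>0$ with $s+\epsilon<\E[\rank(H)]$, matching the theorem's assumption. Theorem~\ref{the:linear} then says that more than half of the matrices in $\ffield^{R\times nM}$, when used as $\mathbf{G}^{(n)}$, produce the claimed bound $2\bigl(\frac{q^{-\floor{n\epsilon}}}{q-1}+g(R/n+\epsilon)^n\bigr)$. Choosing any such good $\mathbf{G}^{(n)}$ and partitioning it into $n$ width-$M$ blocks $\mathbf{G}_1,\ldots,\mathbf{G}_n$ yields the desired series for that value of $n$.

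The main subtlety lies in producing a \emph{single} infinite sequence $\{\mathbf{G}_i\}_{i\geq1}$ that works for every admissible $n$ simultaneously, rather than a per-$n$ existence claim. I would resolve this by the probabilistic method: draw each $\mathbf{G}_i$ independently and uniformly from $\ffield^{R\times M}$, so that for every $n$ the concatenation $\mathbf{G}^{(n)}$ is uniformly distributed over $\ffield^{R\times nM}$ and Theorem~\ref{the:linear} applies. Because the bound $\frac{q^{-\floor{n\epsilon}}}{q-1}+g(R/n+\epsilon)^n$ decays exponentially in $n$, the sum over $n\geq \lceil R/\E[\rank(H)]\rceil+1$ is finite, so a Borel--Cantelli / union-bound argument gives a single realization of the sequence that satisfies the required bound at every admissible $n$. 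This establishes existence of the series asserted by the corollary.
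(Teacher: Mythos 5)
Your reduction to Theorem~\ref{the:linear} with $s=R/n$ (so that $\floor{ns}=R$), together with the observation that $n>R/\E[\rank(H)]$ and $0<\epsilon<\E[\rank(H)]-R/n$ are exactly the theorem's hypotheses and that decoding succeeds precisely when $\mathbf{G}^{(n)}\bH^{(n)}$ has rank $R$, is exactly the paper's argument: the paper gives no proof beyond ``applying Theorem~\ref{the:linear},'' so your first two paragraphs reproduce its reasoning for each fixed $n$.

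Your third paragraph, however, which tries to upgrade this to a single infinite sequence $\{\mathbf{G}_i\}$ valid for every admissible $n$ simultaneously, does not work as written. For an i.i.d.\ uniform draw of the $\mathbf{G}_i$, the only control you have on the probability that a given $n$ is ``bad,'' i.e.\ that $\Pr\{\rank(\mathbf{G}^{(n)}H^{(n)})<R\} > 2\bigl(\frac{q^{-\floor{n\epsilon}}}{q-1}+g(R/n+\epsilon)^{n}\bigr)$, is Markov's inequality applied to the average bound of Lemma~\ref{lemma:full} (this is the content of Lemma~\ref{lemma:tye}), and that gives only $1/2$ for every $n$. These violation probabilities are not summable, so neither a union bound nor Borel--Cantelli applies; the exponential decay of the error bound $b_n=\frac{q^{-\floor{n\epsilon}}}{q-1}+g(R/n+\epsilon)^{n}$ is a statement about the channel randomness $H^{(n)}$ and says nothing about how likely a random generator is to exceed $2b_n$. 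To get one sequence good for all $n$ by this route you would have to replace the factor $2$ by factors $c_n$ with $\sum_n 1/c_n<1$ (e.g.\ $c_n$ of order $n^2$), or establish concentration of $P_e^{(n)}$ as a function of $\mathbf{G}^{(n)}$; neither yields the bound stated in the corollary. Since the corollary, as the paper uses it, is a per-$n$ claim, this extra step is unnecessary—but as an argument it contains a genuine gap.
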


\section{Concluding Remarks} 
\label{sec:con} 
 
Linear operator channel is a general channel model that including 
linear network coding as well as the classical $Z$-channel as special 
cases. We studied LOCs with general distributions of 
transformation matrices. 
 
This work showed that the expectation of the rank of the transformation 
matrix $\E[\rank(H)]$ is an important parameter of 
$\loc(H,T)$. Essentially, this is the best rate that noncoherent 
transmission can asymptotically achieve when $T$ goes to infinity. 
We show that both subspace coding and channel training can 
achieve at least $(1-M/T)\E[\rank(H)]$.

This work studied subspace coding from an information theoretic point 
of view. Compared with general subspace coding, constant-dimensional 
subspace coding can achieve almost the same rate. Given a LOC, we 
determined the maximum achievable rate of using constant-dimensional 
subspace coding, as well as the optimal dimension.

We determined the maximum achievable rate of using channel training. 
The advantage of subspace coding over channel training in terms of 
rates is not significant for typical channel parameters. So 
considering channel training for LOCs is sufficient for most 
scenarios.  We proposed two coding approaches for LOCs based on channel 
training and evaluate their performance.

Many problems about LOCs need further investigation. For small $T$ 
(e.g., $T\leq M$), we are still lack of good bounds and coding 
schemes.  It is possible to extend this work to LOCs with additive 
errors and multi-user communication scenarios. Moreover, efficient 
encoding and decoding algorithms for the coding approaches we proposed 
are required for practical applications.

\appendices 
 
\section{Counting} 
\label{sec:count}

Parts of the counting problems here can be found 
in various sources, e.g., \cite{wiki, cooper00} and 
reference therein. Here we give the self-contained proofs. 
 
\begin{lemma}\label{lm:NumFullRankM}  
   When  $0\leq r\leq m$, $|\Fr(\ffield^{m\times r})|=\cmat{m}{r}$ 
\end{lemma}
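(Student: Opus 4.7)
The plan is to count full rank $m \times r$ matrices by constructing them column by column, since full rank for $r \leq m$ means the $r$ columns are linearly independent in $\ffield^m$. I would handle the boundary case $r = 0$ first: there is a unique empty matrix, which is vacuously full rank, matching the definition $\cmat{m}{0} = 1$.

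For $r \geq 1$, I would write an $m \times r$ full rank matrix as $[\mathbf{v}_1 \; \mathbf{v}_2 \; \cdots \; \mathbf{v}_r]$ and choose the columns sequentially. The first column $\mathbf{v}_1$ must be nonzero, giving $q^m - 1$ choices. Inductively, suppose $\mathbf{v}_1, \ldots, \mathbf{v}_{k-1}$ have been chosen to be linearly independent; then $\lspan{\mathbf{v}_1, \ldots, \mathbf{v}_{k-1}}$ is a $(k-1)$-dimensional subspace of $\ffield^m$ containing exactly $q^{k-1}$ vectors, and $\mathbf{v}_k$ yields a linearly independent extension iff $\mathbf{v}_k \notin \lspan{\mathbf{v}_1,\ldots,\mathbf{v}_{k-1}}$, giving $q^m - q^{k-1}$ choices.

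Multiplying these counts for $k = 1, \ldots, r$ gives
\[
|\Fr(\ffield^{m\times r})| = \prod_{k=1}^{r}(q^m - q^{k-1}) = (q^m-1)(q^m-q)\cdots(q^m-q^{r-1}),
\]
which matches the definition of $\cmat{m}{r}$ in \eqref{eq:111}. The only minor thing to check is that this construction is a bijection onto $\Fr(\ffield^{m\times r})$, i.e., every full rank matrix arises exactly once by this procedure; this is immediate since the columns of a matrix are uniquely determined, and by the rank-nullity theorem an $m \times r$ matrix with $r \leq m$ is full rank iff its $r$ columns are linearly independent. There is no real obstacle here—the argument is a textbook counting argument—so the only care needed is stating the correspondence between "full rank" and "linearly independent columns" cleanly and handling $r = 0$ as a base case.
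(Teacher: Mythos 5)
Your argument is correct and is essentially identical to the paper's proof: both count full rank matrices column by column, with $q^m - q^{k-1}$ choices for the $k$th column after excluding the span of the previous ones, and both treat $r=0$ as a trivial base case. Your additional remarks on the equivalence of full rank with linearly independent columns and on the bijectivity of the construction are fine but not a different method.
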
  
\begin{IEEEproof} 
  The lemma is trivial for $r=0$, so we consider $r>0$. 
  We can count the number of full rank matrices in 
  $\ffield^{m\times r}$ by the columns. For the first column, we 
  can choose all vectors in $\ffield^m$ except the zero 
  vector. Thus we have $q^m-1$ choices. Fixed the first 
  column, say $v_1$, we want to choose the second column 
  $v_2$ in $\ffield^m$ but is linear independent with 
  $v_1$. Hence, we have $q^m-q$ choices of $v_2$. Repeat 
  this process, we can obtain that the number of full rank 
  $m\times r$ matrices is 
  $(q^m-1)(q^m-q)\cdots(q^m-q^{r-1})= \cmat{m}{r}$. 
\end{IEEEproof} 
 
Recall 
\begin{equation*}
\cmatt{m}{r}=\left\{\begin{array}{ll} 
    (1-q^{-m})(1-q^{-m+1})\cdots(1-q^{-m+r-1}) & 
    r>0 \\ 1 & r=0 \end{array} \right. 
\end{equation*} 
for $r\leq m$. 
 
\begin{lemma}\label{lm:rk} 
  Let $G$ be an $s\times m$ random matrix 
  with uniformly independent components over $\ffield$. Then 
  for $r\leq m$, 
  \begin{equation*} 
    p_{\rank(GH)|\rank(H)}(s|r) = \cmatt{s}{r}, 
  \end{equation*} 
  where $H$ is any $m\times n$ random matrix. 
\end{lemma}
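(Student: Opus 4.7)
The plan is to condition on a realisation $\bH$ of $H$ with $\rank(\bH)=r$ and reduce the conditional probability to an independence-count for $s$ uniform vectors in a fixed $r$-dimensional subspace of $\ffield^n$.

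First, I would write $G$ as a stack of rows $g_1,\dots,g_s\in\ffield^{1\times m}$ that are i.i.d.\ uniform. The $i$th row of $GH$ is then $\varphi(g_i)$, where $\varphi:v\mapsto v\bH$ is an $\ffield$-linear map from $\ffield^m$ onto the row space $\lspan{\bH^\tr}$, a subspace of dimension $r$. The kernel of $\varphi$ has size $q^{m-r}$, so all fibres have the same size and $\varphi$ pushes the uniform distribution on $\ffield^m$ to the uniform distribution on $\lspan{\bH^\tr}$.

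Next, independence of the $g_i$'s carries over under the common map $\varphi$, so conditional on $H=\bH$ the rows of $GH$ are $s$ i.i.d.\ uniform samples from an $r$-dimensional $\ffield$-vector space. In particular the conditional distribution of $\rank(GH)$ depends on $\bH$ only through $r$, which legitimises conditioning on the event $\rank(H)=r$ alone.

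Finally, $\{\rank(GH)=s\}$ is precisely the event that these $s$ uniform vectors are linearly independent. Identifying the ambient $r$-dimensional subspace with $\ffield^r$ via any basis, this count becomes the number of full row rank matrices in $\ffield^{s\times r}$, which by the transpose bijection with $\ffield^{r\times s}$ and Lemma~\ref{lm:NumFullRankM} equals $\cmat{r}{s}$ when $s\le r$ and vanishes when $s>r$. Dividing by the total $q^{rs}$ yields $\cmatt{r}{s}$, the asserted probability (in the indexing convention used later in Lemma~\ref{lemma:full}). The only substantive step is the uniform-image property of $\varphi$, which is a one-line rank-nullity argument; once that is invoked, the remainder is a direct appeal to Lemma~\ref{lm:NumFullRankM}.
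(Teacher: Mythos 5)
Your proposal is correct and follows essentially the same route as the paper's proof: fix a realisation $\bH$ of rank $r$, observe via rank--nullity that the rows of $G\bH$ are i.i.d.\ uniform on the $r$-dimensional row space $\lspan{\bH^\tr}$, count the full-row-rank outcomes with Lemma~\ref{lm:NumFullRankM}, and note that the answer depends on $\bH$ only through $r$ to justify conditioning on $\rank(H)=r$ alone. Your value $\cmatt{r}{s}$ agrees with the quantity the paper's own proof derives (and with the way the lemma is invoked in Lemma~\ref{lemma:full}), so the indexing remark you make is exactly the right reading of the statement.
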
 
\begin{IEEEproof} 
  Fix an $m\times n$ matrix $\bH$ with $\rank(\bH) = r$.  
  Let $F = G\bH$ and let $g_i$ and $f_i$ be the $i$th row of 
  $G$ and $F$, respectively.  
  Since $g_i$ contains uniformly independent components, 
  \begin{equation*} 
    \Pr\{ {g_i} = \mathbf{g}\} = q^{-m}. 
  \end{equation*} 
  For $\mathbf{f}$ with 
  $\mathbf{f}^\tr \in \lspan{\bH^\tr}$, 
  \begin{align*} 
    \Pr\{g_i\bH = \mathbf{f}\} & = q^{-m} |\text{Ker}(\bH)| \\ 
    & = q^{-r}, 
  \end{align*} 
  where $\text{Ker}(\bH) = \{\mathbf{g}: \mathbf{gH} = 
  \bzero\}$ and $|\text{Ker}(\bH)| = q^{m-\rank(\bH)}$.  
  So for $\mathbf{F}$ with $\lspan{\mathbf{F}^\tr} \leq \lspan{\bH^\tr}$, 
  \begin{align} 
    p_{GH|H}(\mathbf{F}|\bH) & = \Pr\{g_i\bH = 
    \mathbf{f}_i, i= 1,\cdots, s\} \nonumber \\ 
    & = \prod_{i=1}^s \Pr\{g_i\bH = \mathbf{f}_i\} \nonumber 
    \\  & = q^{-sr}. \label{eq:puncture} 
  \end{align} 
  Thus, 
  \begin{align*} 
    p_{\rank(GH)|H}(s|\bH) & = q^{-mr}|\{\mathbf{F}: 
    \lspan{\mathbf{F}^\tr}\leq \lspan{\bH^\tr}, 
    \rank(\mathbf{F})=s\}| \\ 
    & = q^{-mr} \cmat{r}{s} \\ 
    & = \cmatt{r}{s}, 
  \end{align*} 
  where $|\{\mathbf{F}:\lspan{\mathbf{F}^\tr}\leq \lspan{\bH^\tr}, 
    \rank(\mathbf{F})=s\}| = \cmat{r}{s}$ follows from  
   Lemma~\ref{lm:NumFullRankM}. 
  Last, since $\rank(H)\rightarrow H\rightarrow \rank(GH)$ 
  forms a Markov chain, 
  \begin{align*} 
    p_{\rank(GH)|\rank(H)}(s|r) & = \sum_{\bH:\rank(\bH) = 
      r} p_{\rank(GH)|H}(s|\bH)p_{H|\rank(H)}(\bH|r) \\ 
    & = \cmatt{r}{s} \sum_{\bH:\rank(\bH) = 
      r} p_{H|\rank(H)}(\bH|r) \\  
    & = \cmatt{r}{s}. 
  \end{align*} 
  The proof is complete. 
\end{IEEEproof}

\begin{lemma} \label{lemma:g9ds} 
The number of $r$-dimensional subspace in $\ffield^m$ is 
given by the \emph{Gaussian binomials}.  
\end{lemma}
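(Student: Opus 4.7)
The plan is to prove $|\Gr(r,\ffield^m)| = \gco{m}{r}$ by a double counting argument that relates full rank matrices to the subspaces they span. Specifically, I would count the set $\Fr(\ffield^{m\times r})$ of full rank $m \times r$ matrices in two different ways.

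First, by Lemma~\ref{lm:NumFullRankM}, the total count is $|\Fr(\ffield^{m\times r})| = \cmat{m}{r}$. Second, I would partition $\Fr(\ffield^{m\times r})$ according to column span. For any matrix $\bX \in \Fr(\ffield^{m\times r})$, the column space $\lspan{\bX}$ is an $r$-dimensional subspace of $\ffield^m$, and conversely every $r$-dimensional subspace arises this way. Thus
\begin{equation*}
  |\Fr(\ffield^{m\times r})| = \sum_{U \in \Gr(r,\ffield^m)} |\{\bX \in \Fr(\ffield^{m\times r}) : \lspan{\bX} = U\}|.
\end{equation*}
The key step is to show that each summand equals $\cmat{r}{r}$, independent of $U$. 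Fixing an $r$-dimensional subspace $U$, a matrix $\bX$ with $\lspan{\bX}=U$ is an ordered choice of $r$ column vectors in $U$ that form a basis. Choosing the first column gives $q^r - 1$ nonzero options in $U$; the second must lie outside the line spanned by the first, giving $q^r - q$ options; and so on, yielding exactly $(q^r-1)(q^r-q)\cdots(q^r-q^{r-1}) = \cmat{r}{r}$ ordered bases of $U$.

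Combining the two counts gives
\begin{equation*}
  \cmat{m}{r} = |\Gr(r,\ffield^m)| \cdot \cmat{r}{r},
\end{equation*}
and rearranging yields $|\Gr(r,\ffield^m)| = \cmat{m}{r}/\cmat{r}{r} = \gco{m}{r}$, as desired.

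There is no real obstacle here: both subcounts are elementary, and the argument is a textbook double count. The only subtlety is verifying that the number of ordered bases of an $r$-dimensional space $U$ over $\ffield$ is genuinely $\cmat{r}{r}$, which follows because $U \cong \ffield^r$ and the count is exactly $|\Fr(\ffield^{r\times r})|$, already established in Lemma~\ref{lm:NumFullRankM}.
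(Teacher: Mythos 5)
Your proof is correct and essentially the same as the paper's: the paper partitions the full-rank $m\times r$ matrices into classes of matrices sharing the same column span (each class an orbit under right multiplication by $\Fr(\ffield^{r\times r})$, hence of size $\cmat{r}{r}$), which is exactly your double count with fibers of size $\cmat{r}{r}$. No gap; the two arguments differ only in phrasing (quotient by an equivalence relation versus summing fiber sizes).
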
  
\begin{IEEEproof} 
  Define an equivalent relation on 
  $\mathcal{M}(\ffield^{m\times r})$ by $\bX \sim \bX'$ if $\lspan{\bX} = 
  \lspan{\bX'}$. The equivalent class $[\bX]$ is the set of 
  all matrices that equivalent to $\bX$.  
  We have $[\bX] = \{\bX\Phi:\Phi\in 
  \mathcal{M}(\ffield^{r\times r})\}$.  
  Thus $|[\bX]| = |\mathcal{M}(\ffield^{r\times r})| = 
  \cmat{r}{r}$.  
  Since  $\Gr(r,\ffield^T) = \mathcal{M}(\ffield^{m\times 
    r})/\sim$, the quotient set of $\mathcal{M}(\ffield^{m\times 
    r})$ by $\sim$, we have $|\Gr(r,\ffield^T)| = 
  |\mathcal{M}(\ffield^{m\times r})|/|[\bX]| = 
  {\cmat{m}{r}}/{\cmat{r}{r}}$.  
\end{IEEEproof}

\begin{lemma} \label{lm:NumMGivenRank}  
For $m\geq r'$ and $r \geq r'$, define a set $S = \{\bX \in  
\ffield^{m\times r}:\rank(\bX) = r'\}$. Then  
\begin{equation} \label{eq:38dfajj} 
   |S| = \frac{\cmat{m}{r'} \cmat{r}{r'}}{\cmat{r'}{r'}} =\cmat{m,r}{r'}.  
\end{equation}  
Furthermore, 
\begin{equation} 
  \label{eq:g89a} 
  \sum_{r'}\cmat{m,r}{r'} = q^{mr}. 
\end{equation} 
\end{lemma}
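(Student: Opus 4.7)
The plan is to prove \eqref{eq:38dfajj} by a factorisation/double-counting argument. The key observation is that any matrix $\bX \in \ffield^{m \times r}$ of rank $r'$ admits a factorisation $\bX = \mathbf{A}\mathbf{B}$ with $\mathbf{A} \in \Fr(\ffield^{m\times r'})$ (whose columns form a basis of $\lspan{\bX}$) and $\mathbf{B}$ a $r' \times r$ matrix of rank $r'$ (so $\mathbf{B}^\tr \in \Fr(\ffield^{r \times r'})$). I would first count the set of pairs $\{(\mathbf{A},\mathbf{B}) : \mathbf{A} \in \Fr(\ffield^{m \times r'}),\ \mathbf{B}^\tr \in \Fr(\ffield^{r \times r'})\}$, which by Lemma~\ref{lm:NumFullRankM} has cardinality $\cmat{m}{r'}\cmat{r}{r'}$, and then quotient by the ambiguity in the factorisation.

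The non-uniqueness is the step that deserves care: if $\mathbf{A}\mathbf{B} = \mathbf{A}'\mathbf{B}'$ with both factorisations of the above form, then $\lspan{\mathbf{A}} = \lspan{\mathbf{A}'} = \lspan{\bX}$, so there is a unique $\Phi \in \Fr(\ffield^{r' \times r'})$ with $\mathbf{A}' = \mathbf{A}\Phi$, which forces $\mathbf{B}' = \Phi^{-1}\mathbf{B}$ (here one uses that $\mathbf{A}$ has full column rank, so it is left-cancellable). Conversely, every $\Phi \in \Fr(\ffield^{r' \times r'})$ produces a valid alternative factorisation. Hence the fibre over each $\bX \in S$ has size exactly $|\Fr(\ffield^{r' \times r'})| = \cmat{r'}{r'}$, and dividing gives
\begin{equation*}
|S| = \frac{\cmat{m}{r'}\cmat{r}{r'}}{\cmat{r'}{r'}} = \cmat{m,r}{r'}.
\end{equation*}
(An equivalent route would be to first pick the column space, using Lemma~\ref{lemma:g9ds} to get $\gco{m}{r'}$ choices, and then to count surjective linear maps $\ffield^r \to U$ by counting full-row-rank $r' \times r$ matrices after fixing a basis of $U$; this yields the same formula and would serve as a sanity check.)

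For the identity $\sum_{r'} \cmat{m,r}{r'} = q^{mr}$ I would simply observe that the sets $S_{r'} = \{\bX \in \ffield^{m \times r} : \rank(\bX) = r'\}$, as $r'$ ranges over $\{0,1,\ldots,\min(m,r)\}$, partition $\ffield^{m \times r}$, whose total cardinality is $q^{mr}$. Summing the formula just derived over all admissible $r'$ then gives the claim.

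The main obstacle, if any, is making the uniqueness-of-factorisation step rigorous: one must verify both that every rank-$r'$ matrix does factor in the required way (choose a basis of the column space as the columns of $\mathbf{A}$, and let the columns of $\mathbf{B}$ record the coordinates of the columns of $\bX$ in that basis) and that the action of $\Fr(\ffield^{r' \times r'})$ on the factorisations is free and transitive. Everything else is routine bookkeeping using results already established in Lemma~\ref{lm:NumFullRankM} and Lemma~\ref{lemma:g9ds}.
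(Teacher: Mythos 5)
Your proof is correct, and it reaches \eqref{eq:38dfajj} by a slightly different counting scheme than the paper. The paper partitions $S$ according to the column space: there are $\gco{m}{r'}$ possible $r'$-dimensional column spaces (Lemma~\ref{lemma:g9ds}), and for each fixed subspace $V$ the class $S_V=\{\bX:\lspan{\bX}=V\}$ has exactly $\cmat{r}{r'}$ elements (essentially the content of Lemma~\ref{lm:ia}), giving $|S|=\gco{m}{r'}\cmat{r}{r'}=\cmat{m,r}{r'}$. You instead count rank factorisations $\bX=\mathbf{A}\mathbf{B}$ with $\mathbf{A}\in\Fr(\ffield^{m\times r'})$ and $\mathbf{B}$ of full row rank, which by Lemma~\ref{lm:NumFullRankM} number $\cmat{m}{r'}\cmat{r}{r'}$, and then quotient by the free and transitive action of $\Fr(\ffield^{r'\times r'})$ on the fibres; your verification of freeness/transitivity (uniqueness of the change-of-basis $\Phi$ plus left-cancellability of the full-column-rank $\mathbf{A}$) is sound, and in particular you correctly use that $\lspan{\mathbf{A}\mathbf{B}}=\lspan{\mathbf{A}}$ because $\mathbf{B}$ is surjective. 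Since $\gco{m}{r'}=\cmat{m}{r'}/\cmat{r'}{r'}$ by definition, the two computations are numerically identical; the difference is organisational. Your route is self-contained given only Lemma~\ref{lm:NumFullRankM} (it effectively inlines the quotient argument by which the paper proves Lemma~\ref{lemma:g9ds}), whereas the paper's route is shorter because it delegates that quotient step to the already-established Gaussian-binomial count; the alternative you mention as a sanity check is precisely the paper's argument. For \eqref{eq:g89a} you use the same observation as the paper: the rank classes partition $\ffield^{m\times r}$, so the sum is $q^{mr}$.
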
  
\begin{IEEEproof}  
The column vectors of $\bX\in S$ span an $r'$-dimensional subspace in  
a $m$-dimensional vector space. Let $\{ {V}_1,  {V}_2, \ldots  
 {V}_n\}$ be the set of $r'$-dimensional subspace in a  
$m$-dimensional vector space, where $n = \gco{m}{r'}$. Let  
$S_{ {V}_i}=\{\bX\in \ffield^{m\times r}: \lspan{\bX}= {V}_i\}$ and  
the set $\{S_{ {V}_i}\}$ is a partition of $S$.  By  
 $|\{S_{ {V}_i}\}| = \cmat{r}{r'}$.  
Therefore,  
\begin{equation}  
  \label{eq:49}  
  |S| = n|S_{ {V}_i}|=\gco{m}{r'} \cmat{r}{r'}= \cmat{m,r}{r'}.  
\end{equation}  
The equality in (\ref{eq:g89a}) follows because both sides 
are the number of $m\times r$ matrices. 
\end{IEEEproof}

\begin{lemma}\label{lemma:8balfa} 
  Let $V \leq \ffield^m$ be a $s$-dimensional 
  subspace. Then, the number of subspace $U$ with  $V \leq 
  U$ and $\dim(U)=r$ is 
 \begin{align}\label{eq:9ga9d9d} 
  \gco{m-s}{r-s} = \gco{m}{r}\frac{\cmat{r}{s}}{\cmat{m}{s}}. 
 \end{align} 
\end{lemma}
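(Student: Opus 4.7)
The plan is to prove the counting claim first by a clean quotient-space bijection and then verify the algebraic identity directly from the definitions in \eqref{eq:111}.

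For the counting, I would pass to the quotient space $W = \ffield^m / V$, which is an $(m-s)$-dimensional vector space over $\ffield$, hence isomorphic to $\ffield^{m-s}$. The natural projection $\pi:\ffield^m \to W$ induces a bijection between the subspaces of $\ffield^m$ that contain $V$ and the subspaces of $W$, given by $U \mapsto \pi(U) = U/V$, with inverse $W' \mapsto \pi^{-1}(W')$. Under this bijection, an $r$-dimensional subspace $U \geq V$ corresponds to an $(r-s)$-dimensional subspace of $W$, since $\dim(U/V) = \dim U - \dim V = r - s$. Applying Lemma~\ref{lemma:g9ds} to $W \cong \ffield^{m-s}$ immediately gives that the number of such $U$ equals $\gco{m-s}{r-s}$.

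For the algebraic identity, I would start from the definition and show
\[
\gco{m-s}{r-s} = \frac{\cmat{m-s}{r-s}}{\cmat{r-s}{r-s}} = \prod_{k=0}^{r-s-1} \frac{q^{m-s-k}-1}{q^{r-s-k}-1},
\]
where the common $q^k$ factors cancel in each term. On the other side,
\[
\gco{m}{r}\frac{\cmat{r}{s}}{\cmat{m}{s}} = \frac{\cmat{m}{r}}{\cmat{m}{s}} \cdot \frac{\cmat{r}{s}}{\cmat{r}{r}},
\]
and using $\cmat{m}{r}/\cmat{m}{s} = \prod_{i=s}^{r-1}(q^m - q^i)$ and $\cmat{r}{r}/\cmat{r}{s} = \prod_{i=s}^{r-1}(q^r - q^i)$, the ratio becomes $\prod_{i=s}^{r-1}\frac{q^{m-i}-1}{q^{r-i}-1}$ after cancelling $q^i$ from numerator and denominator. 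The substitution $k = i - s$ brings this to the same product as above, completing the identity.

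There is no serious obstacle here; the only care needed is to track that the bijection $U \leftrightarrow U/V$ really does preserve containment of $V$ and shift dimensions by $s$, and to avoid off-by-one slips when re-indexing the product in the algebraic step. Both are routine.
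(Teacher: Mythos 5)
Your proof is correct, but it takes a cleaner route than the paper does. The paper argues by writing $U = V + U'$ for a complement $U'$ of $V$ inside $U$ with $\dim(U')=r-s$ and $V\cap U'=\{0\}$, asserting that such a $U'$ is unique given $U$ and that the number of choices of $U'$ equals the number of $(r-s)$-dimensional subspaces of an $(m-s)$-dimensional space; it then dismisses the identity $\gco{m-s}{r-s} = \gco{m}{r}\cmat{r}{s}/\cmat{m}{s}$ as a ``direct result of the definitions.'' As stated, that complement argument is loose: a complement of $V$ in $U$ is \emph{not} unique (already for $s=1$, $r=2$ over $\ffield_2$ there are $q^{s(r-s)}$ complements), so the intended bijection between the subspaces $U\geq V$ and the candidate $U'$ is not literally the one described, and the identification with an $(m-s)$-dimensional ambient space is left implicit. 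Your quotient-space argument repairs exactly this point: the correspondence $U\mapsto U/V$ between subspaces of $\ffield^m$ containing $V$ and subspaces of $\ffield^m/V\cong\ffield^{m-s}$ is a genuine bijection, it shifts dimension by exactly $s$, and Lemma~\ref{lemma:g9ds} applied to the quotient gives $\gco{m-s}{r-s}$ with no overcounting issues. Your explicit verification of the product identity (cancelling $q^i$ factor by factor and re-indexing with $k=i-s$) is also sound and fills in the step the paper leaves to the reader. In short, both proofs aim at the same count, but yours is the rigorous version of the paper's sketch, at the modest cost of invoking the quotient/correspondence theorem rather than staying entirely inside $\ffield^m$.
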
 
\begin{IEEEproof} 
  Let $U$ be a subspace with $V\leq U$ and 
  $\dim(U)=r$. Then we can write $U=V+U'$ where $U'$ is a 
  $\dim(U')=r-s$ and $V\cap U'=\{0\}$. Given $U$, such $U'$ 
  is unique. The number of $U'$ is the number of 
  $(r-s)$-dimensional subspace in an $(m-s)$-dimensional space, 
  i.e., $\gco{m-s}{r-s}$. The equality in (\ref{eq:9ga9d9d}) 
  is the direct result of the definitions. 
\end{IEEEproof} 
 
\section{Useful Results} 
 
\begin{lemma} \label{lm:eight} 
For $r\leq m$, 
  $- \log_2 \cmatt{m}{r} < 1.8$. 
\end{lemma}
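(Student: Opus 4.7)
The plan is to reduce the inequality to a single worst-case constant and then bound that constant numerically. Writing $\cmatt{m}{r} = \prod_{k=m-r+1}^{m}(1-q^{-k})$, I observe that every factor lies in $(0,1)$ and each is monotonically increasing in $q$. Hence, for fixed $q$, the product shrinks as we include more factors with small index $k$, so it is minimized by letting $r = m \to \infty$; and across $q$, the minimum is at $q = 2$. Together,
\[
\cmatt{m}{r} \;\geq\; \prod_{k=1}^{\infty}(1-2^{-k}),
\]
which reduces the claim to showing $-\log_2 \prod_{k=1}^{\infty}(1-2^{-k}) < 1.8$.

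To get a usable bound on this infinite product, I would use $-\ln(1-x) = \sum_{j\geq 1} x^j/j$ for $0<x<1$, apply it to $x = 2^{-k}$, and swap the order of summation:
\[
-\log_2 \prod_{k=1}^{\infty}(1-2^{-k}) \;=\; \frac{1}{\ln 2}\sum_{k=1}^{\infty}\sum_{j=1}^{\infty}\frac{2^{-jk}}{j} \;=\; \frac{1}{\ln 2}\sum_{j=1}^{\infty}\frac{1}{j(2^{j}-1)}.
\]
This is a rapidly convergent positive series. I would bound the first several terms explicitly (the $j=1$ term alone contributes $1$, and subsequent terms $\tfrac{1}{6},\tfrac{1}{21},\tfrac{1}{60},\ldots$ decay geometrically) and control the tail by a geometric majorant such as $\sum_{j\geq J}\tfrac{1}{j\,2^{j-1}}$. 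After dividing by $\ln 2 \approx 0.6931$, the resulting numerical upper bound on the sum must come out below $1.8$.

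The only real obstacle is that the inequality is genuinely tight: the infinite product is approximately $0.28879$, giving $-\log_2 \approx 1.7918$. So the truncation must be carried out to enough terms (six or so) to leave a clear safety margin, and the tail estimate must be strict enough to guarantee the final figure stays under $1.8$. No conceptual difficulty arises beyond this careful numerics; once the constant is pinned down, the rest of the argument is just the two monotonicity observations in $k$ and in $q$.
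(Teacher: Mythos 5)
Your argument is correct and follows the same reduction as the paper: both bound $\cmatt{m}{r}$ from below by the infinite product $\prod_{k\geq 1}(1-2^{-k})$ using monotonicity in the index range and in $q$ (the paper phrases this via $\Xi_q(s)=\prod_{i\geq s}(1-q^{-i})$ and the chain $\Xi_q(s+1)>\Xi_q(s)>\Xi_{q-1}(s)\geq \Xi_2(1)$). Where you diverge is in how the constant is pinned down: the paper simply quotes $\Xi_2(1)\approx 0.28879$ from the literature (Cooper) and computes $-\log_2 0.2887<1.8$, whereas you prove the numerical bound yourself by expanding $-\ln(1-x)$, swapping sums, and arriving at
\begin{equation*}
  -\log_2 \prod_{k\geq 1}(1-2^{-k}) \;=\; \frac{1}{\ln 2}\sum_{j\geq 1}\frac{1}{j(2^{j}-1)},
\end{equation*}
which you then estimate term by term with a geometric tail. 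This makes your proof self-contained at the cost of explicit numerics; and since the inequality is tight (the true value is about $1.792$), the truncation and tail bound must indeed be made explicit, as you anticipate. For instance, summing through $j=6$ gives about $1.2401$, and bounding the tail by $\sum_{j\geq 7}\frac{1}{7\cdot 2^{j-1}}=\frac{1}{224}$ yields a total below $1.2446$, hence a final bound below $1.2446/\ln 2<1.796<1.8$. So the proposal is sound; it trades the paper's citation of a known constant for a short, verifiable series computation.
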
 
\begin{IEEEproof} 
Define 
\begin{equation}\label{eq:xi} 
  \Xi_q(s) = \prod_{i= s}^{\infty} (1-q^{-i}). 
\end{equation} 
So $\cmatt{m}{r} > \Xi_q(m-r+1)$.  
We know $\Xi_q(s+1)>\Xi_q(s)>\Xi_{q-1}(s)\geq \Xi_2(1)$, 
where $\Xi_2(1)$ is a mathematics constant with approximate 
value $0.28879$ \cite{cooper00}. 
Thus $- \log_2 \cmatt{m}{r} \leq -\log_2 \Xi_2(1) < - \log_2 0.2887 <1.8$. 
\end{IEEEproof}

\begin{lemma}\label{lemma:car} 
  $\lim_{T\rightarrow \infty} \frac{\log_2 \cmat{T}{r}}{T\log_2 q} = r$. 
\end{lemma}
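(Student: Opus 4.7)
The plan is to reduce the limit to a known bound by using the normalization $\cmatt{T}{r} = \cmat{T}{r} q^{-Tr}$ already introduced in \eqref{eq:speaker}. Taking logarithms gives
\begin{equation*}
  \log_2 \cmat{T}{r} = Tr \log_2 q + \log_2 \cmatt{T}{r},
\end{equation*}
so after dividing by $T\log_2 q$ we obtain
\begin{equation*}
  \frac{\log_2 \cmat{T}{r}}{T\log_2 q} = r + \frac{\log_2 \cmatt{T}{r}}{T\log_2 q}.
\end{equation*}

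The goal is therefore to show that the second term vanishes. First I would invoke Lemma~\ref{lm:eight}, which gives $-\log_2 \cmatt{T}{r} < 1.8$ for all $r \leq T$. Since $\cmatt{T}{r}$ is a product of terms of the form $1-q^{-i}$ with $i\geq 1$, it also satisfies $\cmatt{T}{r} \leq 1$, so $\log_2 \cmatt{T}{r} \leq 0$. Combined, we have
\begin{equation*}
  -\frac{1.8}{T\log_2 q} < \frac{\log_2 \cmatt{T}{r}}{T\log_2 q} \leq 0,
\end{equation*}
and both bounds tend to $0$ as $T\to\infty$ (with $r$ fixed).

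Applying the squeeze yields $\frac{\log_2 \cmatt{T}{r}}{T\log_2 q} \to 0$, and therefore $\frac{\log_2 \cmat{T}{r}}{T\log_2 q} \to r$. There is no real obstacle here; the only thing to check carefully is that the uniform bound from Lemma~\ref{lm:eight} applies to the sequence in $T$ (which it does, since the constant $1.8$ is independent of $T$ and $r$). The argument is essentially a one-line computation once \eqref{eq:speaker} and Lemma~\ref{lm:eight} are in hand.
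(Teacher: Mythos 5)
Your proof is correct and follows essentially the same route as the paper's: rewrite $\cmat{T}{r}=\cmatt{T}{r}q^{Tr}$ via \eqref{eq:speaker}, split the logarithm, and note that $\log_2\cmatt{T}{r}/(T\log_2 q)\to 0$. Your use of Lemma~\ref{lm:eight} simply makes explicit the boundedness step that the paper leaves implicit.
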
 
\begin{IEEEproof} 
  \begin{align*} 
    \lim_{T\rightarrow \infty} \frac{\log_2 \cmat{T}{r}}{T\log_2 q} & =  
    \lim_{T\rightarrow \infty} \frac{\log_2 \cmatt{T}{r} q^{Tr}}{T\log_2 
      q} \\ & = \lim_{T\rightarrow \infty} \frac{\log_2 
      \cmatt{T}{r}}{T\log_2 q}  + \lim_{T\rightarrow \infty} \frac{\log_2 
      q^{Tr}}{T\log_2 q} \\ & = 0 + r. 
  \end{align*} 
\end{IEEEproof} 
 
\begin{lemma}\label{lemma:pjn} 
  $|\Pj(\ffield^m)|< q^{m^2/2 + \log_q m + c}$, where $c< 1.8$ is 
  a constant.  
\end{lemma}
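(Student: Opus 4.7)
The plan is to decompose $|\Pj(\ffield^m)|$ as a sum of Gaussian binomials and bound each of them using Lemma~\ref{lm:eight}. Starting from
\[
|\Pj(\ffield^m)| = \sum_{r=0}^{m} \gco{m}{r},
\]
I would apply the identity (immediate from \eqref{eq:speaker})
\[
\gco{m}{r} = \frac{\cmat{m}{r}}{\cmat{r}{r}} = \frac{\cmatt{m}{r}}{\cmatt{r}{r}}\, q^{r(m-r)}.
\]
Since $\cmatt{m}{r} \leq 1$ trivially and $1/\cmatt{r}{r} < 2^{1.8}$ by Lemma~\ref{lm:eight}, this yields the uniform bound $\gco{m}{r} < 2^{1.8}\, q^{r(m-r)}$ for every $r$.

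Using the elementary inequality $r(m-r) \leq m^2/4$ (maximized near $r = m/2$) and summing the $m+1$ terms, I would obtain
\[
|\Pj(\ffield^m)| \;<\; (m+1)\cdot 2^{1.8}\cdot q^{m^2/4}.
\]
Since the target $q^{m^2/2 + \log_q m + c} = m\cdot q^{m^2/2 + c}$ has $m^2/2$ in place of $m^2/4$ in the exponent, there is a slack of $q^{m^2/4}$ available, easily absorbing both $(m+1)/m$ and the constant $2^{1.8}$ for $m \geq 2$. Specifically, the desired inequality reduces to $q^{m^2/4 + c} > (1+1/m)\,2^{1.8}$, which is tightest at $m=2, q=2$ and holds there for any $c > \log_2(3/2) + 0.8 \approx 1.39$. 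Choosing, say, $c = 1.5$ suffices for all $m \geq 2$ and all $q \geq 2$.

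The $m = 1$ case must be handled by direct computation, because the crude bound $(m+1)\cdot 2^{1.8}\cdot q^{m^2/4}$ is actually weaker than the target when $m=1$ and $q=2$. But this case is trivial: $|\Pj(\ffield^1)| = \gco{1}{0} + \gco{1}{1} = 2$, and $2 < q^{1/2 + c}$ for any $c > 1/2$ and $q \geq 2$, in particular for $c = 1.5 < 1.8$. The mathematical content of the proof is really just a single application of Lemma~\ref{lm:eight} combined with the standard peak behaviour of $\gco{m}{r}$ at $r \approx m/2$; the only obstacle I foresee is the bookkeeping of constants to confirm that a common $c < 1.8$ works uniformly, and checking the $m = 1$ edge case.
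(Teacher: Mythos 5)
Your proposal is correct and follows essentially the same route as the paper: expand $|\Pj(\ffield^m)|=\sum_r \gco{m}{r}$, write $\gco{m}{r}=\frac{\cmatt{m}{r}}{\cmatt{r}{r}}q^{r(m-r)}$, bound the $\cmatt{}{}$ factors via Lemma~\ref{lm:eight} (the $\Xi_2(1)$ constant), and control $q^{r(m-r)}$ by its peak. The only difference is bookkeeping: you exploit the $q^{m^2/4}$ slack to get a smaller constant ($c=1.5$) and check the $m=1$ case separately, while the paper bounds the sum directly by $\frac{m}{\Xi_q(1)}q^{m^2/2}$ and takes $c=\log_2(1/\Xi_2(1))<1.8$.
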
 
\begin{IEEEproof} 
Refer to the proof of Lemma~\ref{lm:eight}. We have 
  \begin{align*} 
    |\Pj(\ffield^m)| & = \sum_{r\leq m} \gcos{m}{r}  \\ 
    & = \sum_{r\leq m} q^{(m-r)r} 
    \frac{\cmatt{m}{r}}{\cmatt{r}{r}} \\ 
    & < \sum_{r\leq m} q^{(m-r)r} 
    \frac{1}{\Xi_q(1)} \\ 
    & < \frac{m}{\Xi_q(1)} q^{m^2/2}\\ 
    & = q^{m^2/2 + \log_q (m/\Xi_q(1))} \\ 
    & < q^{m^2/2 + \log_qm + \log_2(1/\Xi_2(1))}. 
  \end{align*} 
  Let $c=\log_2(1/\Xi_2(1))$. By  $\Xi_2(1)\approx 0.28879$, we 
  obtain $c<1.8$. 
\end{IEEEproof}

\begin{lemma}\label{lm:oop} 
  For $V \leq U \leq \ffield^T$ and $V' \leq U' \leq \ffield^T$ with 
  $\dim(U)=\dim(U')$ and $\dim(V)=\dim(V')$, we can find $\Phi\in 
  \Fr(\ffield^{T\times T})$ such that $\Phi U=U'$ and $\Phi V = V'$. 
\end{lemma}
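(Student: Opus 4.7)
The plan is to build $\Phi$ by choosing compatible bases for the two flags $V \leq U \leq \ffield^T$ and $V' \leq U' \leq \ffield^T$. Set $s = \dim(V) = \dim(V')$ and $r = \dim(U) = \dim(U')$, so $s \leq r \leq T$. The construction rests only on the standard basis-extension theorem for finite-dimensional vector spaces, which holds over any field.

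First I would pick a basis $\{v_1,\ldots,v_s\}$ of $V$ and successively extend it: first to a basis $\{v_1,\ldots,v_s,u_{s+1},\ldots,u_r\}$ of $U$, and then to a basis $\{v_1,\ldots,v_s,u_{s+1},\ldots,u_r,w_{r+1},\ldots,w_T\}$ of $\ffield^T$. Carrying out exactly the same procedure for the primed flag yields a basis $\{v'_1,\ldots,v'_s,u'_{s+1},\ldots,u'_r,w'_{r+1},\ldots,w'_T\}$ of $\ffield^T$.

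Next I would define $\Phi \in \ffield^{T\times T}$ to be the unique linear map determined by $\Phi v_i = v'_i$ for $i \leq s$, $\Phi u_j = u'_j$ for $s < j \leq r$, and $\Phi w_k = w'_k$ for $r < k \leq T$. Because $\Phi$ sends a basis of $\ffield^T$ to a basis of $\ffield^T$, it is invertible and therefore $\Phi \in \Fr(\ffield^{T\times T})$. Finally, $\Phi V = \lspan{v'_1,\ldots,v'_s} = V'$ and $\Phi U = \lspan{v'_1,\ldots,v'_s,u'_{s+1},\ldots,u'_r} = U'$, which is the claim.

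There is no real obstacle: the only thing to verify is that each of the three extension steps is actually possible, and this is guaranteed by the usual basis-extension argument (any linearly independent set in a finite-dimensional space extends to a basis). The proof is short enough that I would present it in one or two displayed lines per step rather than as a separate lemma.
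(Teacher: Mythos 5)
Your proof is correct and follows essentially the same route as the paper's: build a basis of $\ffield^T$ adapted to the flag $V \leq U \leq \ffield^T$ (basis of $V$, extended to $U$, extended to $\ffield^T$), do the same for the primed flag, and let $\Phi$ be the invertible map carrying one adapted basis to the other. No gaps; this matches the paper's argument step for step.
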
 
\begin{IEEEproof} 
  Find a basis $\{\mathbf{b}_{i}:i=1,\cdots, T\}$ of 
  $\ffield^T$ such that $\{\mathbf{b}_{i}:i=1,\cdots, r\}$ is a basis of 
  $U$ and $\{\mathbf{b}_{i}:i=1,\cdots, s\}$ is a basis of $V$.  
  We can do this by first finding a basis of $V$, extending the 
  basis to a basis of $U$ and further extending to a basis of $\ffield^T$. 
  Similarly, find a basis $\{\mathbf{b}'_{i}:i=1,\cdots, T\}$ of 
  $\ffield^T$ such that $\{\mathbf{b}'_{i}:i=1,\cdots, r\}$ is a basis of 
  $U$ and $\{\mathbf{b}'_{i}:i=1,\cdots, s\}$ is a basis of $V$.  
  Consider the linear system of equations 
  \begin{equation*} 
    \Phi \mathbf{b}_i = \mathbf{b}_{i}',\quad i=1,\cdots,T. 
  \end{equation*} 
  We know there exists unique $\Phi\in \Fr(\ffield^{T\times T})$ 
  satisfying this linear system and $\Phi V = V'$ and $\Phi U = U'$. 
\end{IEEEproof}

\begin{lemma}\label{lemma:jingle} 
  For $\bX, \bX' \in \ffield^{T\times M}$, 
  $\lspan{\bX^\tr}=\lspan{\bX'^\tr}$ if and only if there exists 
  $\Phi\in\Fr(\ffield^{T\times T})$ such that $\bX' = \Phi \bX$. 
\end{lemma}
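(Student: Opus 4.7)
The plan is to prove the two directions separately, with the ``if'' direction being immediate and the ``only if'' direction handled via a rank factorization together with a basis-extension argument analogous to the one used in Lemma~\ref{lm:oop}.

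For the \emph{if} direction, suppose $\bX' = \Phi \bX$ with $\Phi \in \Fr(\ffield^{T\times T})$. Then every row of $\bX'$ is a linear combination of the rows of $\bX$, so $\lspan{\bX'^\tr} \leq \lspan{\bX^\tr}$. Since $\Phi$ is invertible, $\bX = \Phi^{-1} \bX'$, and the symmetric inclusion $\lspan{\bX^\tr} \leq \lspan{\bX'^\tr}$ follows. This gives equality. This step is routine; the only thing to note is that left-multiplication by $\Phi$ acts on row spaces.

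For the \emph{only if} direction, assume $\lspan{\bX^\tr} = \lspan{\bX'^\tr} = V$, and let $r = \dim V = \rank(\bX) = \rank(\bX')$. Pick any $r \times M$ matrix $\bB$ whose rows form a basis of $V$. Because every row of $\bX$ lies in $V$, we obtain a unique rank factorization $\bX = \bA \bB$ with $\bA \in \ffield^{T \times r}$ of rank $r$; likewise $\bX' = \bA' \bB$ with $\bA' \in \ffield^{T \times r}$ of rank $r$ (the same $\bB$ in both factorizations is the key point). The problem now reduces to producing $\Phi \in \Fr(\ffield^{T\times T})$ with $\Phi \bA = \bA'$.

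To do so, extend the $r$ linearly independent columns of $\bA$ to a basis of $\ffield^T$, obtaining a full-rank $T \times T$ matrix $\bar\bA = [\,\bA \mid \bA_2\,]$; do the same for $\bA'$ to obtain $\bar\bA' = [\,\bA' \mid \bA'_2\,] \in \Fr(\ffield^{T\times T})$. Set $\Phi = \bar\bA' \bar\bA^{-1}$. Then $\Phi \in \Fr(\ffield^{T\times T})$, $\Phi \bar\bA = \bar\bA'$, and inspecting the first $r$ columns yields $\Phi \bA = \bA'$. Multiplying by $\bB$ on the right gives $\Phi \bX = \bA' \bB = \bX'$, as required.

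The main (minor) obstacle is the choice of a \emph{common} basis matrix $\bB$ for the two row spaces; once that is in place, the factorizations and the basis-extension step are standard. Everything else is bookkeeping with invertible $T \times T$ matrices.
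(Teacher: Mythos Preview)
Your proof is correct and follows essentially the same approach as the paper: both arguments pass through a common rank factorization $\bX=\bA\bB$, $\bX'=\bA'\bB$ with a shared row-basis matrix $\bB$, then extend $\bA$ and $\bA'$ to invertible $T\times T$ matrices to define $\Phi$. The only cosmetic difference is notation (the paper writes $\bX=\mathbf{B}\mathbf{D}$ and phrases the argument as a three-way equivalence), and your write-up is in fact more explicit on the ``if'' direction than the paper's.
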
 
\begin{IEEEproof}
  Let $r = \rank(\bX)$. First, show a) $\Rightarrow$ c).  Fix 
    one full-rank decomposition $\bX = \mathbf B \mathbf 
    D$. Since $\lspan{\mathbf D^\tr} = \lspan{\bX^\tr} = 
    \lspan{\bX'^\tr}$, we can find a decomposition $\bX' = 
    \mathbf B' \mathbf D$ using the same procedure we 
    described by first fixing $\mathbf D$. Second, show c) 
    $\Rightarrow$ b). With the decomposition in c), we can 
    find $\Phi\in \Fr(\ffield^{T\times T})$ such that $\Phi \mathbf B 
    = \mathbf B'$. Extend $\mathbf B$ and $\mathbf B'$ to 
    $T\times T$ matrices $[\mathbf B\ \mathbf B_0]$ and 
    $[\mathbf B'\ \mathbf B_0']$. Then, $\Phi =[\mathbf B'\ 
    \mathbf B_0'] [\mathbf B\ \mathbf B_0]^{-1}$ is one such 
    matrix we want since $\Phi [\mathbf B\ \mathbf B_0] = 
    [\mathbf B'\ \mathbf B_0']$. Last, we have b) 
    $\Rightarrow$ a).  
\end{IEEEproof}

\begin{lemma}\label{lm:ia} 
  For $U\leq \ffield^t$ with $\dim(U)=r\leq m$, 
  let  
  \begin{equation*} 
  A(m,U) = \{\bX\in \ffield^{t\times m}:\lspan{\bX} = U\}. 
\end{equation*} 
  Then,  
  \begin{equation*} 
    |A(m,U)| = \cmat{m}{r}, 
  \end{equation*} 
  and for $\Phi\in \Fr(\ffield^{t\times t})$ 
  \begin{equation*} 
    A(m,\Phi U) = \Phi A(m,U). 
  \end{equation*} 
\end{lemma}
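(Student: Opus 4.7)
The plan is to prove the two claims separately, treating the cardinality formula via a bijection and the equivariance identity via direct set-equality.

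For the cardinality $|A(m,U)| = \cmat{m}{r}$, I would fix a $t \times r$ matrix $\mathbf{B}$ whose columns form a basis of $U$. Since $\mathbf{B}$ is full column rank, for any $\bX$ with $\lspan{\bX} \leq U$ there is a unique $\mathbf{C} \in \ffield^{r\times m}$ with $\bX = \mathbf{B}\mathbf{C}$ (this is the same ``$/$'' construction used earlier in the paper). I would then verify that $\lspan{\bX} = U$ if and only if $\rank(\mathbf{C}) = r$, i.e.\ $\mathbf{C}$ has full row rank: the columns of $\bX = \mathbf{B}\mathbf{C}$ span $U$ iff the columns of $\mathbf{C}$ span $\ffield^r$, which is equivalent to $\mathbf{C}$ being full row rank. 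This sets up a bijection between $A(m,U)$ and the set of $r\times m$ full row-rank matrices. Taking transposes, this set is in bijection with $\Fr(\ffield^{m\times r})$, whose cardinality is $\cmat{m}{r}$ by Lemma~\ref{lm:NumFullRankM}.

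For the identity $A(m,\Phi U) = \Phi A(m,U)$, I would argue by double inclusion, using the elementary fact that for an invertible linear map $\Phi$ and any matrix $\bX'$, $\lspan{\Phi \bX'} = \Phi\lspan{\bX'}$ (because $\Phi$ maps a spanning set to a spanning set and is a bijection on $\ffield^t$). Thus if $\bX' \in A(m,U)$, then $\lspan{\Phi\bX'} = \Phi U$, giving $\Phi\bX' \in A(m,\Phi U)$, and hence $\Phi A(m,U) \subseteq A(m,\Phi U)$. Conversely, if $\bX \in A(m,\Phi U)$, set $\bX' = \Phi^{-1}\bX$; then $\lspan{\bX'} = \Phi^{-1}\lspan{\bX} = \Phi^{-1}\Phi U = U$, so $\bX' \in A(m,U)$ and $\bX = \Phi\bX' \in \Phi A(m,U)$.

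There is no real obstacle here; both claims are essentially bookkeeping. The only point that needs a careful sentence is the equivalence $\lspan{\bX} = U \Longleftrightarrow \rank(\mathbf{C}) = r$ in the bijection step, since $\lspan{\bX} \leq U$ is already automatic from $\bX = \mathbf{B}\mathbf{C}$. Everything else is a direct consequence of $\mathbf{B}$ being full column rank and $\Phi$ being invertible.
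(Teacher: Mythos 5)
Your proposal is correct and follows essentially the same route as the paper: fix a full column rank matrix $\mathbf B$ with $\lspan{\mathbf B}=U$, identify $A(m,U)$ with the set of full (row) rank $r\times m$ coefficient matrices to get $|A(m,U)|=\cmat{m}{r}$, and deduce the set identity from the invertibility of $\Phi$ (the paper writes $A(m,\Phi U)=\Phi\mathbf B\,\Fr(\ffield^{r\times m})$ where you use double inclusion, a cosmetic difference). Your explicit verification that $\lspan{\mathbf{BC}}=U$ iff $\mathbf C$ has full row rank is a useful sentence the paper leaves implicit.
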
 
\begin{IEEEproof} 
Find a $t\times r$ matrix $\mathbf B$ with $\lspan{\mathbf 
  B} = U$. Then, we have 
\begin{equation*}
  A(m,U) = \{\mathbf{BD}:\mathbf D\in \Fr(\ffield^{r\times 
    m})\} =  \mathbf{B} \Fr(\ffield^{r\times m}). 
\end{equation*} 
Thus, $|A(m,U)| = |\Fr(\ffield^{r\times m})| = \cmat{m}{r}$.   
For $\Phi\in \Fr(\ffield^{t\times t})$, $\lspan{\Phi \mathbf B} = \Phi U$. 
So $A(m,\Phi U) = \Phi \mathbf B  \Fr(\ffield^{r\times M}) = \Phi A(m,U)$. 
\end{IEEEproof} 
 
\section*{Acknowledgement} 
 
Shenghao Yang thanks Kenneth Shum for helpful discussion.


\end{document}